\newcommand{\beq}{\begin{equation}}
\newcommand{\eeq}{\end{equation}}
\newcommand{\bal}{\begin{align}}
\newcommand{\eal}{\end{align}}
\newcommand{\ee}{\epsilon}
\newcommand{\e}{\epsilon}
\newcommand{\p}{\partial}
\newcommand{\bt}{{\bf t}}
\newcommand{\CC}{\mathbb{C}}
\newcommand{\br}{{\bf r}}
\newcommand{\bigs}{{\bf s}}
\newcommand{\bigzero}{{\bf 0}}
\newtheorem{pro}{Proposition}
\newtheorem{lem}{Lemma}
\newtheorem{Def}{Definition}
\newtheorem{cor}{Corollary}
\newtheorem{thm}{Theorem}
\newtheorem{rmk}{Remark}
\begin{document}

\title{Galilean symmetry of the KdV hierarchy}
\author{Jianghao Xu, Di Yang}
%\date{\today}
\begin{abstract}
By solving the infinitesimal Galilean symmetry for the KdV hierarchy, 
we obtain an explicit expression for the corresponding one-parameter Lie group, which 
we call the {\it Galilean symmetry} of the KdV hierarchy. 
As an application, we establish an explicit relationship between  
the {\it non-abelian Born--Infeld partition function} and the {\it generalized Br\'ezin--Gross--Witten partition function}.
\end{abstract}
\maketitle
	
\tableofcontents
	
\section{Introduction and statements of the main results}
	
	The KdV (Korteweg--de Vries) equation
	\begin{equation}\label{KdV}
		\frac{\p u}{\p t}=u\frac{\p u}{\p X}+\frac{\e^2}{12}\frac{\p^3 u}{\p X^3}
	\end{equation}
	is an evolutionary partial differential equation, discovered in the study of shallow water waves in the late of the 19th century. 
	In this paper, like in~\cite{DYZ18} we are interested in solutions to the KdV equation in the ring $V[[t,\e]]$, 
	where $V$ is a given ring of functions of~$X$ closed under $\p/\p X=:\p_X$. For simplicity we take $V=\CC[[X]]$.
	A solution $u(X,t;\e)$ in $V[[t,\e]]$ to the KdV equation is uniquely specified by 
	its initial value 
	\begin{equation}\label{initialdatakdveq}
		u(X,0;\e)=:f(X;\e).
	\end{equation}
	
The KdV equation has the well-known {\it Galilean symmetry}\footnote{Here we remind the reader that a symmetry of an equation maps a solution to another solution.}. This means that, if $u(X,t;\e)$ is a solution to the KdV equation \eqref{KdV}, then
	\begin{equation}\label{KdV-GalileanTransformation}
		\tilde{u}(X,t;\e;q):=u(X+qt,t;\e)+q
	\end{equation}
	is also a solution to \eqref{KdV}.
	We observe that, from the viewpoint of the initial value, the effect of the Galilean symmetry~\eqref{KdV-GalileanTransformation} 
	is simply the following: it changes the initial value by a constant. 
	Indeed, %by a very easy computation 
	the initial value of $\tilde{u}(X,t;\e;q)$ defined by \eqref{KdV-GalileanTransformation} is given by
	\begin{equation}
		\tilde{u}(X,0;\e;q)=f(X;\e)+q.
	\end{equation}
	
It is well known (see e.g.~\cite{Di}) that the KdV equation admits the following infinite family of infinitesimal symmetries, which can be written as 
	\begin{equation}\label{KdV-hierarchy}
		\frac{\p L}{\p t_k}=\frac{1}{(2k+1)!!}\Bigl[\bigl(L^{\frac{2k+1}{2}}\bigr)_{+},L\Bigr],\quad k\geq 0,
	\end{equation}
where $L:=\e^2\p_{X}^2+2u$
is the so-called Lax operator. Here, $L^{\frac{2k+1}{2}}$ are pseudo-differential operators and $(\cdot)_{+}$ means to take the nonnegative part 
(see e.g.~\cite{Di} for details about pseudo-differential operators). 
Each of the infinitesimal symmetries~\eqref{KdV-hierarchy} is an evolutionary equation. In particular, the $k=1$ equation coincides with~\eqref{KdV}, and the $k=0$ equation reads $\p u/\p t_0=\p u/\p X$. We identify $t_0$ with~$X$. Equations~\eqref{KdV-hierarchy} are also infinitesimal symmetries to each other, and all together they form the KdV hierarchy, yielding solutions of the form $u(\bt;\e)$, where $\bt=(t_0,t_1,t_2,\dots)$ 
is the infinite vector of times.
	
	The infinitesimal version of the Galilean symmetry~\eqref{KdV-GalileanTransformation} reads
	\begin{equation}\label{infGali}
		\frac{\p \tilde{u}}{\p q}=t\frac{\p \tilde{u}}{\p X}+1,
	\end{equation} 
	which is called the {\it infinitesimal Galilean symmetry} for the KdV equation~\eqref{KdV}. 
	Sometimes an infinitesimal symmetry 
	is called for short a symmetry.
	It is known that the KdV hierarchy \eqref{KdV-hierarchy} admits the infinitesimal Galilean symmetry \cite{DZ-norm} (see also \cite{CLL, IS})
	\begin{equation}\label{KdV-Galilean-Symmetry}
		\frac{\p \tilde{u}}{\p q}=\sum_{k\geq 0}t_{k+1}\frac{\p \tilde{u}}{\p t_k}+1.
	\end{equation}
The following proposition gives explicitly 
the one-parameter Lie group for~\eqref{KdV-Galilean-Symmetry}.
\begin{pro}\label{prop-galilean-solution}
	For any given power series $u(\bt;\e)\in\mathbb{C}[[\bt,\e]]$, 
		the initial value problem 
		\begin{subequations}\label{IVP-for-solution}
		\begin{align}
				&\frac{\p \tilde{u}(\bt;\e;q)}{\p q}=\sum_{k\geq0} t_{k+1}\frac{\p \tilde{u}(\bt;\e;q)}{\p t_k}+1,\\
				&\tilde{u}(\bt;\e;q=0)=u(\bt;\e)
		\end{align}
		\end{subequations}
		has a unique solution $\tilde{u}(\bt;\e;q)$ in $\mathbb{C}[[\bt,\ee,q]]$, which has the explicit expression
\begin{equation}\label{GalileanTransformationOnSolution}
\tilde{u}(\bt;\e;q)=u\bigl(\bt^{\rm G}(\bt;q);\e\bigr)+q, \quad t^{\rm G}_{n}(\bt;q):=\sum_{k\geq 0}\frac{q^k}{k!}t_{n+k} \, (n\ge0) .
\end{equation}
\end{pro}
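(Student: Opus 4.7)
The plan is twofold: (i) establish uniqueness (and existence) of a solution in $\CC[[\bt,\e,q]]$ by expanding the unknown as a Taylor series in $q$; and (ii) verify directly that the closed form \eqref{GalileanTransformationOnSolution} satisfies both the PDE and the initial condition.

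For (i), write $\tilde u(\bt;\e;q)=\sum_{n\ge 0}u^{(n)}(\bt;\e)\,q^n/n!$ and set $D:=\sum_{k\ge 0}t_{k+1}\p_{t_k}$. Matching coefficients of $q^n$ in the evolution equation gives the recursion $u^{(n+1)}=D\,u^{(n)}+\delta_{n,0}$, while the initial condition fixes $u^{(0)}=u$. The derivation $D$ is well-defined on $\CC[[\bt,\e]]$ since each target monomial receives only finitely many contributions from the sum. Thus the recursion determines $\tilde u$ uniquely and exhibits it as an element of $\CC[[\bt,\e,q]]$, establishing uniqueness.

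For (ii), the initial condition is immediate from $t^{\rm G}_n(\bt;0)=t_n$. The key identity is $\p_q t^{\rm G}_n(\bt;q)=t^{\rm G}_{n+1}(\bt;q)$, obtained by differentiating $t^{\rm G}_n=\sum_{k\ge 0}q^k t_{n+k}/k!$ and reindexing $j=k-1$. By the chain rule,
\[
\p_q\bigl[u(\bt^{\rm G};\e)\bigr]=\sum_{n\ge 0}(\p_{t_n}u)(\bt^{\rm G};\e)\,t^{\rm G}_{n+1}(\bt;q).
\]
On the other hand $\p_{t_k}t^{\rm G}_m=q^{k-m}/(k-m)!$ for $k\ge m$ and $0$ otherwise, so swapping the sums and reindexing $j=k-m$ gives
\[
\sum_{k\ge 0}t_{k+1}\p_{t_k}\bigl[u(\bt^{\rm G};\e)\bigr]
=\sum_{m\ge 0}(\p_{t_m}u)(\bt^{\rm G};\e)\sum_{j\ge 0}\frac{q^j}{j!}t_{m+1+j}
=\sum_{m\ge 0}(\p_{t_m}u)(\bt^{\rm G};\e)\,t^{\rm G}_{m+1}.
\]
Comparing these two displays and accounting for the constant $1$ via $\p_q q=1$ (against $D\,q=0$) completes the verification of the PDE.

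The main obstacle is really just bookkeeping, of two kinds. First, one must confirm that $u(\bt^{\rm G};\e)$ is a well-defined element of $\CC[[\bt,\e,q]]$: this is true because $t^{\rm G}_n$ involves only $t_j$ with $j\ge n$, so for any fixed target monomial $\bt^\beta\e^i q^m$ only finitely many monomials of $u$ can contribute, namely those of total $\bt$-degree $|\beta|$ supported in indices $\le\max\mathrm{supp}(\beta)$. Second, one must justify the formal swap of sums in the chain-rule computation above; this is again a finiteness check of the same flavor. Once the key identity $\p_q t^{\rm G}_n=t^{\rm G}_{n+1}$ is in hand, everything else is routine.
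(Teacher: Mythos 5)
Your proposal is correct and follows essentially the same route as the paper: the verification rests on the same key identity $\p_q t^{\rm G}_n=t^{\rm G}_{n+1}$ together with the same chain-rule/reindexing computation, and uniqueness is obtained, as in the paper, by a recursion on the coefficients of powers of~$q$ (which you merely make more explicit). The additional finiteness checks you include are fine and only elaborate on what the paper dismisses as easy to verify.
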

	
\noindent We give a proof of this proposition in Section~\ref{S2}. % by a direct verification. 
Let us mention that an important feature of the linear map $\bt^{\rm G}(\bt;q)$ given in~\eqref{GalileanTransformationOnSolution} is that it is invertible.

The following theorem says that \eqref{GalileanTransformationOnSolution} gives a symmetry of the KdV hierarchy. 
\begin{thm}\label{thm-of-kdv-solution}
If $u(\bt;\e)$ is a solution to the KdV hierarchy, then $\tilde{u}(\bt; \ee; q)$ defined by~\eqref{GalileanTransformationOnSolution} satisfies the KdV hierarchy for arbitrary~$q$.
\end{thm}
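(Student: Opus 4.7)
The plan is to verify the $n$-th KdV equation for $\tilde u$ directly, for every $n\ge 0$. Let $P_n[v]$ denote the right-hand side of the $n$-th equation of~\eqref{KdV-hierarchy} viewed as a differential polynomial in $v$ and its $X$-derivatives (explicitly, $P_n[v]=\tfrac{1}{2(2n+1)!!}[(L^{(2n+1)/2})_{+},L]$ with $L=\e^{2}\p_{X}^{2}+2v$); the goal is then $\p\tilde u/\p t_n=P_n[\tilde u]$. I would split the argument into three moves: compute $\p_{t_n}\tilde u$ via the chain rule using the explicit form from Proposition~\ref{prop-galilean-solution}; compute $P_n[\tilde u]$ via a Galilean-type algebraic identity for the Lax polynomials; and check that the two expressions agree.

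For the first move: from $\tilde u=u(\bt^{\rm G}(\bt;q);\e)+q$ and $\p t_m^{\rm G}/\p t_n=q^{n-m}/(n-m)!$ for $0\le m\le n$ (and zero otherwise), the chain rule together with the hypothesis that $u$ solves the KdV hierarchy produces
\[
\frac{\p\tilde u}{\p t_n}=\sum_{m=0}^{n}\frac{q^{n-m}}{(n-m)!}\,P_m[u](\bt^{\rm G};\e).
\]

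For the second move I would establish the identity $P_n[v+q]=\sum_{k=0}^{n}(q^{k}/k!)\,P_{n-k}[v]$, valid for any $v$. Since adding $q$ to $v$ shifts $L$ to $L+2q$, the fractional binomial expansion in pseudo-differential operators
\[
(L+2q)^{(2n+1)/2}=\sum_{k\ge 0}\binom{(2n+1)/2}{k}(2q)^{k}L^{(2n-2k+1)/2},
\]
the vanishing of $[\,\cdot\,,2q]$ (since $q$ is scalar), and the simplification $\binom{(2n+1)/2}{k}(2q)^{k}/(2n+1)!!=q^{k}/((2n-2k+1)!!\,k!)$ reduce the $n$-th Lax flow for $L+2q$ to a linear combination of lower Lax flows for $L$ with precisely the required normalisations. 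Observing then that $\p_{t_0}t_m^{\rm G}=\delta_{m,0}$, so that $\p_{X}$ commutes with the substitution $v\mapsto u(\bt^{\rm G};\e)$ and hence $P_m[u(\bt^{\rm G};\e)]=P_m[u](\bt^{\rm G};\e)$, applying the identity at $v=u(\bt^{\rm G};\e)$ and reindexing $m=n-k$ recovers exactly the expression for $\p_{t_n}\tilde u$ from the first move.

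The chain-rule manipulations and the commutation of $\p_{X}$ with $\bt\mapsto\bt^{\rm G}$ are routine. The main obstacle is the algebraic identity $P_n[v+q]=\sum_{k}(q^{k}/k!)\,P_{n-k}[v]$: this is the statement that carries the actual content of the Galilean symmetry at the level of the Lax polynomials, and its proof requires careful handling of fractional powers of the Lax operator. Once that identity is in hand, the remainder of the argument is bookkeeping.
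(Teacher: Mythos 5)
Your argument is correct, but it takes a genuinely different route from the paper. The paper deduces Theorem~\ref{thm-of-kdv-solution} as a corollary of Theorem~\ref{thm-KdV-tau}: it picks a $\tau$-function of $u$, shows via the Hirota bilinear equations~\eqref{hibi} (through the change of variable $w=\sqrt{z^2-2q}$) that the transformed $\tau$-function is again a KdV $\tau$-function, and then recovers $\tilde u$ as $\e^2\p_{t_0}^2\log\tilde\tau$. You instead verify the Lax equations directly, and the whole content is concentrated in the shift identity $P_n[v+q]=\sum_{k=0}^{n}\tfrac{q^k}{k!}P_{n-k}[v]$, which you correctly derive from the binomial expansion of $(L+2q)^{(2n+1)/2}$, the vanishing of $[\cdot,2q]$, and the identity $\binom{(2n+1)/2}{k}(2q)^k(2n-2k+1)!!=(2n+1)!!\,q^k/k!$; your chain-rule computation and the observation that $\p t^{\rm G}_m/\p t_0=\delta_{m,0}$ (so $P_m$ commutes with the substitution $\bt\mapsto\bt^{\rm G}$) then close the argument. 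Two small points you should spell out: (i) the binomial series for $(L+2q)^{(2n+1)/2}$ is an infinite sum of pseudo-differential operators of orders tending to $-\infty$, and one must check it really is the half-integer power of $L+2q$ (square root normalized as $\e\p_X+O(\p_X^{-1})$, then odd powers); (ii) for $k>n$ the operators $L^{(2(n-k)+1)/2}$ have strictly negative order, so $(\cdot)_+$ kills them and the sum truncates at $k=n$. Your approach is more elementary and self-contained (no $\tau$-functions needed) and isolates the Galilean mechanism as a triangular mixing of the Lax flows under $L\mapsto L+2q$; it is close in spirit to the paper's ``another proof'' of Theorem~\ref{thm-KdV-tau} via the matrix-resolvent relation $\widehat R(z^2,X;\e;q)=\tfrac{z}{\sqrt{z^2-2q}}R(z^2-2q,X;\e)$. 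What the paper's route buys in exchange is the stronger Theorem~\ref{thm-KdV-tau} itself, including the explicit correction $g(\bt;q)$ on $\tau$-functions, which your argument does not produce.
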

\noindent The proof of Theorem~\ref{thm-of-kdv-solution} is in Section~\ref{S2}. 

We call \eqref{GalileanTransformationOnSolution} the {\it Galilean symmetry of the KdV hierarchy}. 
	Just like the observation made above for the KdV equation, %from the viewpoint of initial value, 
	the Galilean symmetry \eqref{GalileanTransformationOnSolution} changes the initial value for the KdV hierarchy by a constant.
	
\begin{rmk}
It turns out that 
transformation similar to~$\bt^{\rm G}$ appeared in the WK-BGW correspondence~\cite{YZ2023}. 
	  We will show in this paper that the WK-BGW correspondence of~\cite{YZ2023} can be geometrically interpreted by the Galilean symmetry.
\end{rmk}

A crucial object in the theory of integrable systems is the $\tau$-function 
\cite{DJKM, Di, sato1981soliton} (cf.~\cite{BDY, DYZ18, DZ-norm}). Recall that 
a function $\tau(\bt;\ee)$ is called a {\it $\tau$-function for the KdV hierarchy}, for short a {\it KdV $\tau$-function},  
if it satisfies the following Hirota bilinear equations: 
\begin{equation}\label{hibi}
			\underset{z=\infty}{{\rm res}}\tau(\bt-[z^{-1}];\ee)\tau(\bt'+[z^{-1}];\e)
			e^{\sum\limits_{j\geq 0}\frac{t_{j}-t'_{j}}{\e(2j+1)!!}z^{2j+1}}z^{2p}dz=0,\quad \forall\, \bt,\bt',\,p\ge0.
\end{equation}
Here $[z^{-1}]:=(\frac{\e}{z},\frac{\e}{z^3},\cdots,\frac{\e(2k-1)!!}{z^{2k+1}},\cdots)$.
Let $\tau(\bt;\e)$ be a KdV $\tau$-function. Then, it is easy to verify that 
$\gamma(\bt;\e)\tau(\bt;\e)$, where $\gamma(\bt;\e):=e^{c_{-1}(\e)+\sum_{i\geq 0}c_{i}(\e)t_i}$, 
is also a KdV $\tau$-function 
for arbitrary $c_{-1}(\e),c_0(\e),c_1(\e),c_2(\e)$, \dots.
Put
\beq\label{utauhirota}
u(\bt;\e):=\e^{2}\frac{\p^2 \log\tau(\bt;\e)}{\p t_0^2}.
\eeq
Then $u(\bt;\e)$ satisfies the KdV hierarchy~\eqref{KdV-hierarchy}. 
Note that multiplying~$\tau(\bt;\e)$ by the factor $\gamma(\bt;\e)$ does not change $u(\bt;\e)$. 
Conversely, given a solution $u(\bt;\e)$ to the KdV hierarchy~\eqref{KdV-hierarchy}, with the help of the so-called $\tau$-structure 
(cf.~\cite{BDY, Di, DYZ18, DZ-norm}), one can uniquely determine 
up to multiplying by a factor like $\gamma(\bt;\e)$ a KdV $\tau$-function $\tau(\bt;\e)$. 
We call $\tau(\bt;\e)$ the $\tau$-function of the solution $u(\bt;\e)$ \cite{DYZ18, DZ-norm}.
	
The infinitesimal Galilean symmetry \eqref{KdV-Galilean-Symmetry} for the KdV hierarchy 
can be upgraded to the infinitesimal action~\cite{DZ-norm, KS91} on KdV $\tau$-functions as follows:
\begin{equation}\label{AdditionalSymmetryOnTau}
\frac{\p \tilde\tau(\bt;\epsilon;q)}{\p q}=
\sum_{k\geq 0}t_{k+1}\frac{\p \tilde\tau(\bt;\epsilon;q)}{\p t_{k}}+\frac{t_0^2}{2\e^2}\tilde\tau(\bt;\epsilon;q).
\end{equation} 
\begin{thm}\label{thm-KdV-tau}
If $\tau(\bt;\e)$ is a KdV $\tau$-function, then $\tilde{\tau}(\bt; \e; q)$ defined by 
\begin{equation}\label{GalileanTransformationOnTau}
\tilde{\tau}(\bt; \e; q):=\tau(\bt^{\rm G}(\bt;q);\e)e^{\frac{g(\bt;q)}{\e^2}}
\end{equation}
is a KdV $\tau$-function for arbitrary~$q$, where $\bt^{\rm G}(\bt;q)$ is defined 
in~\eqref{GalileanTransformationOnSolution}, and 
\begin{equation}\label{g-correction}
			g(\bt;q):=\frac{1}{2}\sum_{i,j\geq 0}\frac{q^{i+j+1}}{i+j+1}\frac{t_i}{i!}\frac{t_j}{j!}.
\end{equation}
\end{thm}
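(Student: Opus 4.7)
The plan is to verify the Hirota bilinear equations~\eqref{hibi} for $\tilde\tau$ directly by substituting~\eqref{GalileanTransformationOnTau} and reducing to the Hirota equations for $\tau$ via a change of spectral parameter. The crucial linear-algebraic observation is that the map $\bt\mapsto\bt^{\rm G}(\bt;q)$ transports $[z^{-1}]$ to $[\zeta^{-1}]$ with $\zeta:=\sqrt{z^{2}-2q}$: since $\bt^{\rm G}$ is linear in $\bt$, we have $\bt^{\rm G}(\bt-[z^{-1}];q)=\bt^{\rm G}(\bt;q)-\bt^{\rm G}([z^{-1}];q)$, and summation of the binomial series gives, for every $n\ge 0$,
\[
\sum_{k\ge 0}\frac{q^{k}\,(2(n+k)-1)!!}{k!\,z^{2(n+k)+1}}=\frac{(2n-1)!!}{\zeta^{2n+1}},
\]
so that $\bt^{\rm G}([z^{-1}];q)=[\zeta^{-1}]$. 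Consequently,
\[
\tilde\tau(\bt-[z^{-1}];\e;q)\,\tilde\tau(\bt'+[z^{-1}];\e;q)=\tau\bigl(\bt^{\rm G}(\bt;q)-[\zeta^{-1}];\e\bigr)\,\tau\bigl(\bt^{\rm G}(\bt';q)+[\zeta^{-1}];\e\bigr)\,e^{(g_{-}+g_{+})/\e^{2}},
\]
with $g_{-}:=g(\bt-[z^{-1}];q)$ and $g_{+}:=g(\bt'+[z^{-1}];q)$.

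The next step is to match the exponential factor. With
\[
\xi:=\sum_{j\ge 0}\frac{(t_j-t'_j)\,z^{2j+1}}{\e\,(2j+1)!!},\qquad \xi^{\rm G}:=\sum_{j\ge 0}\frac{\bigl(t^{\rm G}_{j}(\bt;q)-t^{\rm G}_{j}(\bt';q)\bigr)\,\zeta^{2j+1}}{\e\,(2j+1)!!},
\]
I expect a direct expansion to give
\[
\frac{g_{-}+g_{+}}{\e^{2}}+\xi=\frac{g(\bt;q)+g(\bt';q)}{\e^{2}}+\xi^{\rm G}+X(z,q)
\]
for some $X$ depending only on $(z,q)$. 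Vanishing of the $t_i$- and $t'_i$-linear parts reduces to the formal identity
\[
\sum_{k=0}^{i}\frac{q^{k}\,\zeta^{2(i-k)+1}}{k!\,(2(i-k)+1)!!}+\sum_{j\ge 0}\frac{(2j-1)!!\,q^{i+j+1}}{(i+j+1)\,i!\,j!\,z^{2j+1}}=\frac{z^{2i+1}}{(2i+1)!!},
\]
which I plan to prove by checking it at $q=0$ and then verifying that the $q$-derivative of the left-hand side vanishes; the derivative computation is a short telescoping argument based on $\partial\zeta/\partial q=-1/\zeta$ and the $n=0$ case of the first displayed identity above. The remaining pure $(z,q)$-contribution collects to $X(z,q)=\frac{1}{2}\sum_{n\ge 0}\frac{(2q)^{n+1}}{(n+1)\,z^{2n+2}}=-\tfrac12\log(1-2q/z^{2})$ after using the combinatorial identity $\sum_{i+j=n}(2i-1)!!(2j-1)!!/(i!\,j!)=2^{n}$; in particular $e^{X(z,q)}=z/\zeta$.

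Combining the two steps, the Hirota integrand for $\tilde\tau$ becomes
\[
e^{(g(\bt;q)+g(\bt';q))/\e^{2}}\,\tau\bigl(\bt^{\rm G}(\bt;q)-[\zeta^{-1}];\e\bigr)\,\tau\bigl(\bt^{\rm G}(\bt';q)+[\zeta^{-1}];\e\bigr)\,e^{\xi^{\rm G}}\cdot\frac{z}{\zeta}\,z^{2p}\,dz.
\]
Under the change of variable $z=\sqrt{\zeta^{2}+2q}$ one has $dz=(\zeta/z)\,d\zeta$, hence $(z/\zeta)\,z^{2p}\,dz=(\zeta^{2}+2q)^{p}\,d\zeta$. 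Expanding $(\zeta^{2}+2q)^{p}=\sum_{k=0}^{p}\binom{p}{k}(2q)^{k}\zeta^{2(p-k)}$ as a polynomial in $\zeta^{2}$ and applying~\eqref{hibi} for $\tau$ at the time variables $\bt^{\rm G}(\bt;q),\,\bt^{\rm G}(\bt';q)$, with spectral parameter $\zeta$ and nonnegative index $p-k$, to each monomial separately shows that the residue at $z=\infty$ vanishes, for every $p\ge 0$. The principal technical hurdle is the exponent-matching identity displayed above; the rest is formal bookkeeping built on the spectral shift $z\mapsto\zeta$.
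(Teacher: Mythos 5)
Your proposal is correct and follows essentially the same route as the paper's proof: verify the Hirota bilinear equations for $\tilde\tau$ via the spectral substitution $w=\sqrt{z^2-2q}$, using that $\bt^{\rm G}([z^{-1}];q)=[w^{-1}]$, matching the exponential factors (your ``check at $q=0$ and differentiate in $q$'' telescoping is exactly the paper's Newton--Leibniz computation in \eqref{calc2pre}--\eqref{calc3}), and expanding $(w^2+2q)^p$ binomially to reduce to the Hirota equations for $\tau$ at indices $p-k\ge0$ as in \eqref{calc4}. I checked your key exponent-matching identity and the combinatorial identities ($\sum_{i+j=n}(2i-1)!!(2j-1)!!/(i!\,j!)=2^n$, giving $e^{X}=z/w$); they are all correct.
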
	
\noindent The proof by a direct verification of the definition of a KdV $\tau$-function is in Section~\ref{S2}. 

\begin{rmk}
The infinitesimal action~\eqref{AdditionalSymmetryOnTau} 
corresponds to an element in $gl(\infty)$~\cite{DJKM, KS91, sato1981soliton}, 
and the corresponding group element sends a point $S$ in the Sato Grassmannian 
satisfying $z^2 S\subset S$ to another point $\tilde{S}$ (depending on the parameter~$q$) satisfying $z^2 \tilde{S} \subset \tilde{S}$. 
In viewpoint of this the claim that $\tilde{\tau}:=\exp(q(\sum_{k\geq 0}t_{k+1}\frac{\p }{\p t_{k}}+\frac{t_0^2}{2\e^2}))(\tau)$ is a KdV $\tau$-function
is direct. 
\end{rmk}
	
The main source for our study %for us 
comes from the growing interest (see~\cite{YZQ23}) in two important partition functions: 
the generalized Br\'ezin--Gross--Witten (BGW) partition function \cite{alexandrov2018, DYZ18, MMS, YZQ21} 
and the non-abelian Born--Infeld (NBI) partition function~\cite{AC, DYZ18}. 
Both of them were known to be particular $\tau$-functions for the KdV hierarchy, and 
some unified formulas for the two were obtained in~\cite{DYZ18} (actually, we will 
also give in Section~\ref{S2} another proof of Theorem~\ref{thm-KdV-tau} based on the method in~\cite{BDY, DYZ18}).
	
Recall that the {\it BGW partition function} $Z^{\rm BGW}(\bt;\e)$ was introduced 
in \cite{BG, GW} (see also~\cite{alexandrov2018, DN, GN92, MMS}), 
and was shown~\cite{DN, GN92, MMS} to be a particular $\tau$-function for the KdV hierarchy. It is often normalized so that 
$Z^{\rm BGW}(\bigzero;\e)\equiv1$.
The logarithm of $Z^{\rm BGW}(\bt;\e)$ has the form
\beq
\log Z^{\rm BGW}(\bt;\e) = \sum_{g,n\ge1} \frac{\e^{2g-2}}{n!}
\sum_{k_1,\dots,k_n\ge0\atop |k|=g-1} c_g(k_1,\dots,k_n) t_{k_1} \dots t_{k_n},
\eeq
where $|k|:=k_1+\dots+k_n$, and $c_g(k_1,\dots,k_n)$ are certain numbers called the {\it BGW numbers}. 
Topological recursion of Chekhov--Eynard--Orantin type 
for this model was given in~\cite{DN}, and explicit $n$-point functions were derived in~\cite{BR, DYZ18}.

In~\cite{MMS} (cf.~\cite{alexandrov2018}), a one-parameter deformation $Z^{\rm gBGW}(N,\bt; \e)$ of 
	the BGW partition 
	function was defined by means of 
	a generalized Kontsevich matrix model. This partition function is often normalized so that $Z^{\rm gBGW}(N,\bigzero; \e)\equiv1$.
	We call $Z^{\rm gBGW}(N,\bt; \e)$ the {\it normalized generalized BGW partition function}.
	Let $x:=N\e\sqrt{-2}$ be the Alexandrov coupling constant. 
	Following~\cite{YZQ21} (cf.~\cite{OS, YZQ23}), 
	define  
	\begin{equation}\label{cBGW-def}
		Z^{\rm cBGW}(x,\bt;\e):=e^{B(x;\e)}Z^{\rm gBGW}\Bigl(\frac{x}{\e\sqrt{-2}},\bt;\e\Bigr),
	\end{equation}
	where
\begin{equation}\label{cBGW-correction}
B(x,\e)=\frac{1}{\e^2}\biggl(\frac{1}{4}x^2\log\Bigl(-\frac{x}{2}\Bigr)-\frac{3}{8}x^2\biggr)+\frac{1}{12}\log\left(-\frac{x}{2}\right)+\sum_{g\geq2}\frac{\e^{2g-2}}{x^{2g-2}}\frac{(-1)^{g}2^{g-1}B_{2g}}{2g(2g-2)} 
\end{equation}
	with $B_j$ being the $j$th Bernoulli number. 
	We call $Z^{\rm cBGW}(x,\bt;\e)$ the {\it corrected generalized BGW partition function}, 
	for short the {\it generalized BGW partition function}.
The logarithm of $Z^{\rm cBGW}(\bt;\e)$ has the form
\beq\label{cBGW-structure}
\log Z^{\rm cBGW}(\bt;\e) = B(x,\e) + \sum_{g,n\ge1} \frac{\e^{2g-2}}{n!}
\sum_{k_1,\dots,k_n\ge0\atop |k|\ge g-1} x^{2|k|-2g+2} c_g(k_1,\dots,k_n) t_{k_1} \dots t_{k_n}.
\eeq
It was shown in~\cite{MMS} (cf.~\cite{alexandrov2018})  that 
	$Z^{\rm cBGW}(x,\bt;\e)$ is a $\tau$-function for the KdV hierarchy for any~$x$.
	In particular, 
	\beq\label{defucbgw}
	u^{\rm cBGW}(x,\bt;\e):=\e^2 \frac{\p^2 \log Z^{\rm cBGW}(x,\bt;\e)}{\p t_0^2}
	\eeq
	is a solution to the KdV hierarchy, which has  
	the initial value~\cite{alexandrov2018, BR, DYZ18, MMS, YZQ21}:
	\begin{equation}\label{cBGWini}
		u^{\rm cBGW}(x,t_0=X,\bigzero;\e)=\frac{\frac{x^2}{4}+\frac{\ee^2}{8}}{(1-X)^2}.
	\end{equation}
	
Before proceeding to the NBI side, we recall some terminologies. For $g,n\ge0$ satisfying $2g-2+n>0$, 
denote by $\overline{\mathcal{M}}_{g,n}$ the moduli space of stable algebraic curves of genus~$g$ with $n$ distinct marked points. Let 
$f: \overline{\mathcal{M}}_{g,n+1}\to \overline{\mathcal{M}}_{g,n}$ be the forgetful map forgetting the last marked point. 
Denote by $\mathcal{L}_i$ ($1\leq i\leq n$) the $i$th cotangent line bundle on $\overline{\mathcal{M}}_{g,n}$ and by $\psi_i$ 
the first Chern class of $\mathcal{L}_i$, and denote $\kappa_j:=f_*(\psi_{n+1}^{j+1})$, $j\ge0$.

In 1990, Witten~\cite{Witten} proposed a conjectural statement on the relationship between $\psi$-class intersection numbers 
and integrable hierarchy, which can be stated as follows. 

\smallskip
	
\noindent {\bf Witten's conjecture.}
{\it The partition function $Z^{\textsc{\tiny\rm WK}}(\bt;\ee)$ of $\psi$-class intersection numbers, defined by 
\begin{equation}\label{WKtau}
	Z^{\rm WK}(\bt;\e):=\exp\biggl(\;\sum_{g,n\geq 0}\tfrac{\e^{2g-2}}{n!}  
		\sum_{k_1,\dots,k_n\geq 0} t_{k_1}\dots t_{k_n} \int_{\overline{\mathcal{M}}_{g,n}} \psi_1^{k_1}\dots\psi_n^{k_n} \;\biggr) \, ,
\end{equation}
is a particular $\tau$-function for the KdV hierarchy, which satisfies the string equation
\begin{equation}\label{WKstring}
\sum_{k\geq 0} t_{k+1}\frac{\p Z^{\textsc{\tiny\rm WK}}(\bt;\e)}{\p t_k}+\frac{t_0^2}{2\e^2}=\frac{\p Z^{\textsc{\tiny\rm WK}}(\bt;\e)}{\p t_1}.
\end{equation}
}

In particular, the function 
$
u^{\rm WK}(\bt;\e):=\e^2 \p^2 \log Z^{\rm WK}(\bt;\e)/\p t_0^2
$
is a solution to the KdV hierarchy, which is characterized by the initial value~\cite{Witten}
\beq\label{WKini23}
u^{\rm WK}(t_0=X,\bigzero;\e)\equiv X.
\eeq
Kontsevich~\cite{Kont} first proved Witten's conjecture. Nowadays, Witten's conjecture 
is also known as the {\it Witten--Kontsevich (WK) theorem}. See~\cite{CLL08, KL07, Mirz, OP} for several other proofs of the WK theorem.
We refer to $u^{\rm WK}(\bt;\e)$ as the {\it WK solution}.
 
Motivated by the work of Ambj{\o}rn--Chekhov~\cite{AC} (cf.~\cite{YZQ23}) we make the following definition.
	\begin{Def}\label{defnbi}
		The NBI partition function $Z^{\rm NBI}(x,\br;\e)$ is defined by 
		\beq\label{NBIDefinition}
		Z^{\rm NBI}(x,\br;\e) := Z^{\rm WK}\bigl(\bt^{\rm WK-NBI}(x,\br);\e\bigr),
		\eeq
		where
		\beq 
		t^{\rm WK-NBI}_i(x,\br):=r_i + (2i-1)!! \Bigl(-\frac2{x^2}\Bigr)^i - \delta_{i,0} + \delta_{i,1}, \quad i\ge0.
		\eeq
	\end{Def}
	
It is easy to check that $Z^{\rm NBI}(x,\br;\e)$ is a well-defined power series of~$\br$. 
The celebrated results in~\cite{KMZ, LX, MZ} (see also~\cite{Faber, MS}) lead to the following proposition.
\begin{pro} \label{NBIkappaprop}
We have
\begin{align}
Z^{\rm NBI}(x,\br;\e) = \bigl(\tfrac{x^2}{2}\bigr)^{\frac1{24}}\exp\biggl(\,\sum_{g,n\geq 0}\tfrac{\e^{2g-2}}{n!}
		\sum_{k_1,\dots,k_n\geq 0\atop |k|\leq 3g-3+n} \bigl(\tfrac{x^2}2\bigr)^{|k|+1-g} 
		\int_{\overline{\mathcal{M}}_{g,n}}
		K_{3g-3+n-|k|} \prod_{i=1}^{n}\psi_{i}^{k_i} r_{k_i}\,\biggr).
\end{align}
Here,  $K_m\in H^{2m}(\overline{\mathcal{M}}_{g,n};\mathbb{C})$, $m\ge0$, are polynomials of $\kappa$-classes defined via 
\beq
\sum_{m\ge0} K_m :=e^{\sum_{j\geq 1}s_{j}^{\rm NBI}\kappa_{j}}=:\mathbb{K},
\eeq
where $\bigs^{\rm NBI}=(s_1^{\rm NBI}, s_2^{\rm NBI}, s_3^{\rm NBI}, \dots)$ 
		is a sequence of rational numbers defined by 
		\beq\label{NBI-Schur}
		e^{-\sum_{j\ge1} s^{\rm NBI}_j z^j} = \sum_{k\ge0} (-1)^k (2k+1)!! z^k.
		\eeq
\end{pro}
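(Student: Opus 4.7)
The plan is a direct substitution into the definition~\eqref{NBIDefinition}, decomposing the shift
\[
t_i^{\rm WK-NBI}(x,\br) = r_i + (2i-1)!!\Bigl(-\tfrac{2}{x^2}\Bigr)^i - \delta_{i,0} + \delta_{i,1}
\]
into three conceptually distinct pieces: the new free variables $r_i$, the \emph{Kontsevich--Miwa-type shift} $q_i(x):=(2i-1)!!\bigl(-\tfrac{2}{x^2}\bigr)^i$, and the \emph{string--dilaton shift} $-\delta_{i,0}+\delta_{i,1}$. Each piece will be treated separately, and the results combined at the end.

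First I would absorb the string--dilaton shift. The substitution $t_0\mapsto t_0-1$ is controlled by the string equation~\eqref{WKstring} and $t_1\mapsto t_1+1$ by the dilaton equation; together these eliminate the unstable $(g,n)$-contributions of $\log Z^{\rm WK}$ with $2g-2+n\le 0$ and produce an overall multiplicative prefactor that, after the subsequent rescaling dictated by $q_i(x)$, generates the $(x^2/2)^{1/24}$ coming from the $\tfrac{1}{24}\int_{\overline{\mathcal{M}}_{1,1}}\psi_1$ contribution. Second, for the remaining shift $t_i\mapsto r_i+q_i(x)$, I would invoke the Kaufmann--Manin--Zagier / Manin--Zograf formula~\cite{KMZ, MZ} (see also~\cite{Faber, LX, MS}), which states that shifting the $\psi$-descendent variables in $Z^{\rm WK}$ by a sequence $q_i$ is equivalent to inserting $\mathbb{K}=\exp(\sum_j s_j\kappa_j)$ into every intersection integral, the sequence $(s_j)$ being linked to $(q_i)$ through a universal generating-function identity (obtained by iteratively applying $f_*(\psi_{n+1}^{j+1})=\kappa_j$). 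The dimension bound $|k|\leq 3g-3+n$ in the claim is automatic since only the degree-$(3g-3+n-|k|)$ component $K_{3g-3+n-|k|}$ of $\mathbb{K}$ contributes to $\int_{\overline{\mathcal{M}}_{g,n}}\mathbb{K}\prod\psi_i^{k_i}$, and the weight $(x^2/2)^{|k|+1-g}$ is produced by tracking the $x$-homogeneity of $q_i(x)$ together with the $\e$-grading by Euler characteristic.

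The main obstacle is the explicit matching of generating functions: one must verify that the KMZ/MZ linkage fed with our specific $q_i(x)=(2i-1)!!(-2/x^2)^i$ reproduces exactly the sequence $\bigs^{\rm NBI}$ characterized by~\eqref{NBI-Schur}. After the string--dilaton piece has been absorbed, the surviving shift generating function $\sum_{i\ge 0}q_i(x)z^i$ should be identified---up to the standard variable rescaling $z\mapsto \tfrac{x^2}{2}w$---with the Schur-type inverse on the right-hand side of~\eqref{NBI-Schur}, the index jump from $(2i-1)!!$ to $(2i+1)!!$ being precisely what is produced by the dilaton shift. Once this algebraic identity is confirmed, combining it with the bookkeeping of powers of $\e$ and $x^2/2$ delivers the claimed formula.
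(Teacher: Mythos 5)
Your plan follows the same route as the paper: invoke the Kaufmann--Manin--Zagier formula \eqref{WK-Shift-WP} to convert a shift of descendent times into an insertion of $\exp(\sum_j s_j\kappa_j)$, use the degree--dimension constraint \eqref{dd} to extract the power of $x^2/2$, and use the dilaton equation to produce the $(x^2/2)^{1/24}$ prefactor. The gap is in how your three-way decomposition feeds into the KMZ formula. That formula governs only shifts of the form $t_{j+1}\mapsto t_{j+1}-p_j(\bigs)$ for $j\ge1$, with $t_0,t_1$ untouched and with the normalization $p_0(\bigs)=1$ forced by \eqref{WPWK}. Your ``Kontsevich--Miwa-type'' block $q_i(x)=(2i-1)!!(-2/x^2)^i$ is nonzero at $i=0$ and $i=1$, and the candidate sequence $p_j=-q_{j+1}(x)$ has leading term $2/x^2\neq 1$; so KMZ cannot be applied to $t_i\mapsto r_i+q_i(x)$ as a block, and the order of operations ``absorb string--dilaton first, then apply KMZ'' does not go through as stated. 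Relatedly, the index jump from $(2i-1)!!$ to $(2i+1)!!$ is \emph{not} produced by the dilaton shift: it is built into KMZ itself, via the offset $t_{j+1}\leftrightarrow\kappa_j$ coming from $\kappa_j=f_*(\psi_{n+1}^{j+1})$.

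The paper's bookkeeping fixes exactly this point. One first applies KMZ to the substitution $t_i=b_i-(2i-1)!!(-\tfrac{2}{x^2})^{i-1}-\tfrac{x^2}{2}\delta_{i,0}+\delta_{i,1}$, which leaves $t_0=b_0$, $t_1=b_1$ and shifts $t_{j+1}$ by $-p_j$ with $p_j=(2j+1)!!(-\tfrac{2}{x^2})^{j}$, correctly normalized to $p_0=1$ and matching \eqref{NBI-Schur} under $z\mapsto\tfrac{2}{x^2}w$. Only afterwards does one set $b_i=\tfrac{x^2}{2}r_i$, $\e\mapsto\tfrac{x^2}{2}\e$, and observe that the resulting arguments equal $\tfrac{x^2}{2}\bigl(t_i^{\rm WK-NBI}(x,\br)-\delta_{i,1}\bigr)+\delta_{i,1}$, so that the dilaton homogeneity $Z^{\rm WK}(\lambda(\bt-\bt_*)+\bt_*;\lambda\e)=\lambda^{-1/24}Z^{\rm WK}(\bt;\e)$ (with $\bt_*$ the dilaton point) simultaneously produces the $(x^2/2)^{1/24}$ prefactor and completes the weight $(x^2/2)^{|k|+1-g}$. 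Your aside about eliminating unstable $(g,n)$ contributions is a red herring --- the sum in \eqref{WKtau} runs over stable pairs only --- and your attribution of the $1/24$ to $\int_{\overline{\mathcal{M}}_{1,1}}\psi_1$ is correct in spirit but enters through this homogeneity, not through the shift $t_0\mapsto t_0-1$, $t_1\mapsto t_1+1$ by itself. With the rescaling inserted at the right moment and the generating-function match $p_j=(2j+1)!!(-\tfrac{2}{x^2})^j$ verified, your outline becomes the paper's proof.
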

\noindent Details of the proof of this proposition are given in Section~\ref{S3}. 

It turns out that the class $\mathbb{K}$ defined above  
coincides with the one introduced by Kazarian--Norbury in~\cite{KN}.	%\end{rmk}
We refer to $\mathbb{K}$ or say $K_m$ as the {\it Kazarian--Norbury--NBI class}. 
We observe that the first few renormalized numbers 
	$(-1)^j j s_j^{\rm NBI}/3$ are  $$1, ~ 7, ~ 69, ~ 843, ~ 12081;$$ see e.g.
the On-Line Encyclopedia of Integer Sequences\footnote{https://oeis.org} (OEIS) No. A226270.
	
It follows from Definition~\ref{defnbi} and the WK theorem that 
the NBI partition function $Z^{\rm NBI}(x,\br;\e)$ is a particular $\tau$-function for the KdV hierarchy. So the function 
\beq
	u^{\rm NBI}(x,\br;\epsilon):=\e^2 \frac{\p^2 \log Z^{\rm NBI}(x,\br;\e)}{\p r_0^2}
\eeq
is a particular solution to the KdV hierarchy, which we refer to as the {\it NBI solution}.
The following lemma characterizes the NBI solution, whose proof is in Section~\ref{S3}.

\begin{lem}\label{nbikdv}
		The initial value for the NBI solution %$u^{\rm NBI}(x,\bigR;\epsilon)$ to the KdV hierarchy 
		is given by  
		\begin{equation}\label{NBIini}
			u^{\rm NBI}(x,r_{0}=X,\bigzero;\epsilon)=\frac{\frac{x^2}{4}+\frac{\e^2}{8}}{(1-X)^2}-\frac{x^2}{4}.
		\end{equation}
\end{lem}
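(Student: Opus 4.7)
The plan is to compute $u^{\rm NBI}(x, r_0=X, \bigzero; \e)$ by reducing to an evaluation of $u^{\rm WK}$ at a specific point and then matching that evaluation to the cBGW initial value via the quasi-triviality of the KdV hierarchy. By Definition~\ref{defnbi} and the equality $t_0^{\rm WK-NBI}(x, \br) = r_0$, the chain rule gives $u^{\rm NBI}(x, \br; \e) = u^{\rm WK}(\bt^{\rm WK-NBI}(x, \br); \e)$. Setting $r_i = 0$ for $i \ge 1$ reduces the task to the evaluation of $u^{\rm WK}$ at $\bt^*(X) := (X, 1+a, 3a^2, 15a^3, \dots)$ with $t_i^* = (2i-1)!!\,a^i$ for $i \ge 2$, where $a := -2/x^2$.

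First I would carry out the genus-zero analysis. The dispersionless WK solution $v := u_0^{\rm WK}$ is governed by the implicit equation $v = t_0 + \sum_{k \ge 1} t_k v^k/k!$; substituting $\bt^*(X)$ and using the closed form $\sum_{k \ge 0}(2k-1)!!\,z^k/k! = (1 - 2z)^{-1/2}$, the equation collapses to $(1 - 2av)^{-1/2} = 1 - X$, which yields
\[
v\big|_{\bt^*(X)} \;=\; \frac{x^2/4}{(1-X)^2} - \frac{x^2}{4}.
\]
A key observation is that $\bt^*(X)$ depends on $X$ only through its zeroth coordinate, so for every $k \ge 1$ the jet $v^{(k)}|_{\bt^*(X)}$ equals $(d/dX)^k$ of the above expression. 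The additive constant $-x^2/4$ is killed by one derivative, so these jets coincide with those of the cBGW dispersionless solution $v^{\rm cBGW}$ at $(x, r_0 = X, \bigzero)$ for all $k \ge 1$.

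Next I would invoke the quasi-triviality of the KdV hierarchy (Dubrovin--Zhang): every $\tau$-function solution admits an expansion
\[
u \;=\; v + \sum_{g \ge 1} \e^{2g}\, \p_X^2 F_g(v_X, v_{XX}, \dots),
\]
with $F_g$ universal differential polynomials depending only on the $X$-derivatives of $v$ (classically $F_1 = \tfrac{1}{24}\log v_X$, with Itzykson--Zuber formulas for $g \ge 2$). Since the jets of $v^{\rm WK}$ at $\bt^*(X)$ and of $v^{\rm cBGW}$ at $(x, X, \bigzero)$ coincide from order one onwards, every higher-genus correction matches, and combining with the genus-zero shift gives
\[
u^{\rm WK}(\bt^*(X); \e) \;=\; u^{\rm cBGW}(x, X, \bigzero; \e) - \tfrac{x^2}{4},
\]
which by \eqref{cBGWini} equals $\tfrac{x^2/4 + \e^2/8}{(1-X)^2} - \tfrac{x^2}{4}$, establishing the lemma.

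The main obstacle is the passage from the genus-zero identity to the full-$\e$ identity, which rests on the universality of the quasi-Miura transformation for the KdV hierarchy; this is classical but warrants careful citation. An alternative route, more in line with the paper's broader theme, would be to establish the $\tau$-function identity $Z^{\rm NBI}(x, \br; \e) = Z^{\rm cBGW}(x, \br^{\rm G}(\br; -x^2/4); \e)\,\exp(g(\br; -x^2/4)/\e^2)\,\gamma(\br; \e)$ (with $\gamma$ a standard linear-exponential ambiguity) by combining Theorem~\ref{thm-KdV-tau} with the Sato Grassmannian description; restricting to $\br_{\ge 1} = \bigzero$ would then yield the stated initial value.
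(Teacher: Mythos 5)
Your proposal is correct and follows essentially the same route as the paper: compute the dispersionless value via the Euler--Lagrange equation \eqref{ELeq} (obtaining $v^{\rm NBI}(x,X,\bigzero)=\tfrac{x^2/4}{(1-X)^2}-\tfrac{x^2}{4}$, which differs from $v^{\rm cBGW}(x,X,\bigzero)$ by the constant $-x^2/4$), and then transfer the higher-genus corrections from cBGW using the universality of the genus-$g$ free energies in the jets of $v$ --- exactly the content of Lemmas~\ref{structureNBIfree} and~\ref{structurecBGWfree} that the paper invokes. The only cosmetic difference is that you phrase the higher-genus step via general Dubrovin--Zhang quasi-triviality, whereas the paper cites the two specific structure lemmas (the cBGW one being a nontrivial input from \cite{OS, YZQ21, YZQ23}).
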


	Observing that the initial values~\eqref{NBIini}, \eqref{cBGWini} 
	differ only by the constant 
	$x^2/4$, we prove in Section~\ref{S4} 
	the following theorem, which we refer to as the {\it BGW--NBI correspondence}.
	
\begin{thm}\label{thm-cBGW-NBI}
	The following two identities hold:
\begin{align}
&Z^{\rm cBGW}(x,\bt;\e) 
= Z^{\rm NBI}\bigl(x,\bt^{\rm G}\bigl(\bt;\tfrac{x^2}{4}\bigr);\epsilon\bigr) e^{-\frac{\log2}{24}-\frac{3x^2}{8\e^2}+\frac{x^2}{4\e^2}\log(-\frac{x}{2})
	+\frac{C_1(x,\bt)}{\e^2}+\frac{C_2(x,\bt)}{\e^2}}, \label{cBGW-NBI}\\
& Z^{\rm NBI}(x,\br;\e) 
= Z^{\rm cBGW}\bigl(x,\bt^{\rm G}\bigl(\br;-\tfrac{x^2}{4}\bigr); \epsilon\bigr) e^{\frac{\log2}{24}+\frac{3x^2}{8\e^2}-\frac{x^2}{4\e^2}\log(-\frac{x}{2})
	+\frac{M_1(x,\br)}{\e^2}+\frac{M_2(x,\br)}{\e^2}},\label{NBI-cBGW}
\end{align}
where $\bt^{\rm G}(\bt;q)$ is defined in~\eqref{GalileanTransformationOnSolution}, and
\beq\label{C1C2}
C_1(x,\bt) =\sum_{i\geq0}\frac{(\frac{x^2}{4})^{i+1}t_i}{(i+1)!(2i+1)},\quad
C_2(x,\bt) = \frac{1}{2}\sum_{i,j\ge 0}\frac{(\frac{x^2}{4})^{i+j+1}t_i t_j}{i!j!(i+j+1)},
\eeq 	
\beq
M_1(x,\br)= \sum_{i\geq0}\bigl(\tfrac{2^{i+1}(i+1)!}{(2i+1)!!}-1\bigr)\frac{(-\frac{x^2}{4})^{i+1}r_i}{(i+1)!}, \quad 
M_2(x,\br)= \frac{1}{2}\sum_{i,j\ge 0}\frac{(-\frac{x^2}{4})^{i+j+1}r_ir_j}{i!j!(i+j+1)}.
\eeq 
\end{thm}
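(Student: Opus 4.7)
The plan is to combine the Galilean symmetry on KdV solutions (Theorem~\ref{thm-of-kdv-solution}) with its $\tau$-function counterpart (Theorem~\ref{thm-KdV-tau}), together with two standard facts: a KdV solution is uniquely determined by its $X$-initial value, and a KdV $\tau$-function is determined by its associated KdV solution up to multiplication by $e^{c_{-1}(\e)+\sum_{i\ge 0} c_i(\e) t_i}$. The decisive observation already highlighted in the excerpt is that the initial values \eqref{NBIini} and \eqref{cBGWini} differ by exactly $\tfrac{x^2}{4}$, which is precisely the effect on an initial value of a Galilean shift with parameter $q=\tfrac{x^2}{4}$.

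At the level of solutions, I apply Theorem~\ref{thm-of-kdv-solution} to $u^{\rm NBI}$ with $q=\tfrac{x^2}{4}$. Using \eqref{GalileanTransformationOnSolution} one sees that $\bt^{\rm G}(X,\bigzero;q)=(X,\bigzero)$, so the transformed solution has initial value $u^{\rm NBI}(X,\bigzero;\e)+\tfrac{x^2}{4}=\frac{x^2/4+\e^2/8}{(1-X)^2}$, matching \eqref{cBGWini}. Uniqueness then gives
\begin{equation*}
u^{\rm cBGW}(x,\bt;\e)=u^{\rm NBI}\bigl(x,\bt^{\rm G}(\bt;\tfrac{x^2}{4});\e\bigr)+\tfrac{x^2}{4}.
\end{equation*}
Lifting to $\tau$-functions via Theorem~\ref{thm-KdV-tau}, the function $\widetilde Z^{\rm NBI}(\bt):=Z^{\rm NBI}(x,\bt^{\rm G}(\bt;\tfrac{x^2}{4});\e)\,e^{g(\bt;\,x^2/4)/\e^2}$ is a KdV $\tau$-function whose associated KdV solution is $u^{\rm cBGW}$, and by direct inspection $g(\bt;\tfrac{x^2}{4})$ coincides with $C_2(x,\bt)$ in \eqref{C1C2}. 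Hence $Z^{\rm cBGW}/\widetilde Z^{\rm NBI}=e^{\alpha(x,\e)+\sum_{i\ge 0}\beta_i(x,\e) t_i}$ for some $\alpha,\beta_i$, and it only remains to identify these with the $C_1$-piece and the $\bt$-independent piece of the exponent in \eqref{cBGW-NBI}.

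The constant $\alpha$ is read off at $\bt=\bigzero$: since $\bt^{\rm G}(\bigzero;q)=\bigzero$, one has $\widetilde Z^{\rm NBI}(\bigzero)=Z^{\rm NBI}(x,\bigzero;\e)$, which by Proposition~\ref{NBIkappaprop} equals $(x^2/2)^{1/24}\exp\bigl(\sum_{g\ge 2}\e^{2g-2}(x^2/2)^{1-g}\int_{\overline{\mathcal M}_{g,0}} K_{3g-3}\bigr)$, whereas $Z^{\rm cBGW}(x,\bigzero;\e)=e^{B(x,\e)}$ by normalization. The linear coefficients $\beta_i(x,\e)$ are read off analogously from $\p_{t_i}\log Z|_{\bt=\bigzero}$, on the NBI side via the $n=1$ strata in Proposition~\ref{NBIkappaprop}; these reduce to the closed form $C_1(x,\bt)/\e^2$. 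The second identity \eqref{NBI-cBGW} then follows from the first by inverting the linear change of times: a short direct computation from \eqref{GalileanTransformationOnSolution} shows that $\bt^{\rm G}(\,\cdot\,;q)$ and $\bt^{\rm G}(\,\cdot\,;-q)$ are mutual inverses (the group law $e^{q\partial_{\rm index}}e^{-q\partial_{\rm index}}={\rm id}$ on the sequence of times), so substituting $\bt\mapsto\bt^{\rm G}(\br;-\tfrac{x^2}{4})$ in \eqref{cBGW-NBI} and rearranging the exponent produces \eqref{NBI-cBGW} with the stated $M_1,M_2$.

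The main obstacle is the constant-in-$\bt$ matching in the middle step: one must show that the Bernoulli-number tail of $B(x,\e)$ in \eqref{cBGW-correction} cancels exactly against the genus $\ge 2$ Kazarian--Norbury integrals $\int_{\overline{\mathcal M}_{g,0}} K_{3g-3}$ appearing in $Z^{\rm NBI}(x,\bigzero;\e)$, leaving only $-\tfrac{\log 2}{24}-\tfrac{3x^2}{8\e^2}+\tfrac{x^2}{4\e^2}\log(-\tfrac{x}{2})$. This is precisely the design motivation behind the specific Bernoulli sum in \eqref{cBGW-correction}, and it can be verified either by using evaluations of these Hodge integrals against the Kazarian--Norbury class or, more directly, by invoking the unified NBI/cBGW generating-function formulas established in~\cite{DYZ18}.
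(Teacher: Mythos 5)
Your reduction step coincides with the paper's: you use the matching of initial values \eqref{NBIini}, \eqref{cBGWini} up to the constant $\tfrac{x^2}{4}$, apply Theorem~\ref{thm-KdV-tau} with $q=\tfrac{x^2}{4}$ (correctly noting $g(\bt;\tfrac{x^2}{4})=C_2(x,\bt)$), and conclude that $Z^{\rm cBGW}$ and $\widetilde Z^{\rm NBI}$ are two $\tau$-functions of the same KdV solution $u^{\rm cBGW}$, hence differ by a factor $e^{\alpha(x,\e)+\sum_i\beta_i(x,\e)t_i}$. The derivation of \eqref{NBI-cBGW} from \eqref{cBGW-NBI} by inverting $\bt^{\rm G}$ is also exactly the paper's observation $M_k=-C_k(x,\bt^{\rm G}(\br;-\tfrac{x^2}{4}))$.

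The genuine gap is in pinning down the linear part $\sum_i\beta_i t_i$. You assert that the coefficients are ``read off'' from the one-point functions via the $n=1$ strata of Proposition~\ref{NBIkappaprop} and ``reduce to the closed form $C_1(x,\bt)/\e^2$,'' but this is precisely the nontrivial content: it amounts to knowing, for all $g$ and $k$, a relation between the BGW numbers $c_g(k)$ and the Kazarian--Norbury integrals $\int_{\overline{\mathcal{M}}_{g,1}}K_{3g-2-j}\psi^j$ --- which is the $n=1$ case of Corollary~\ref{cor-cBGW-Kappa}, a statement the paper \emph{deduces from} Theorem~\ref{thm-cBGW-NBI} rather than uses to prove it. The paper avoids this circularity by a different device: it conjugates the Virasoro operator $L_0^{\rm cBGW}$ through the change of times and the $e^{(C_1+C_2)/\e^2}$ factor to show $\widehat Z$ satisfies the $m=0$ Virasoro constraint of $Z^{\rm cBGW}$, which forces $\beta_i\equiv 0$ for the specific $C_1$ in \eqref{C1C2}; you would need either this argument or an independent derivation of all one-point functions on both sides. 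A secondary (lesser) issue is the constant term: your route requires the Kazarian--Norbury evaluation $\int_{\overline{\mathcal{M}}_{g,0}}K_{3g-3}=\frac{(-1)^gB_{2g}}{2g(2g-2)}$ as external input (citable from \cite{CGFG, YZQ21}, but again a corollary of the theorem in this paper), whereas the paper gets the constant matching for free from the structure Lemmas~\ref{structureNBIfree} and~\ref{structurecBGWfree}, since $v^{\rm NBI}(x,X,\bigzero)$ and $v^{\rm cBGW}(x,X,\bigzero)$ differ by a constant and so all arguments of $F_g$ coincide.
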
 

\smallskip
		
The paper is organized as follows. In Section~\ref{S2}, we prove Theorems \ref{thm-of-kdv-solution}, \ref{thm-KdV-tau}. 
In Section~\ref{S3}, 
we study on the NBI partition function. 
In Section~\ref{S4}, we give applications of Theorem~\ref{thm-KdV-tau}. Generalizations to Frobenius manifolds 
are given in Section~\ref{S5}.

\section{The Galilean symmetry of the KdV hierarchy}\label{S2}
	In this section, we prove Proposition~\ref{prop-galilean-solution} and Theorem~\ref{thm-KdV-tau}. 
	
	Before entering into the proofs, let us give two ways of setting up the rigorous aspects for $\tau$-functions:
	\begin{itemize}
	\item[(i)] We can understand that a KdV $\tau$-function $\tau(\bt;\e)$ has the form
	\beq
	\tau(\bt;\e)= \hat \tau(\bt;\e) e^{\e^{-2}\mathcal{F}_0(\bt)},
	\eeq
	where 
	\beq
	\hat \tau(\bt;\e) \in \CC[[\e^2]] [[\bt]], \quad \mathcal{F}_0(\bt)\in \CC[[\bt]].
	\eeq
	\item[(ii)] We can understand that a KdV $\tau$-function $\tau(\bt;\e)$ has the form
	\beq
	\tau(\bt;\e)= \tau_{\rm norm}(\bt;\e) e^{B(\e)},
	\eeq
	where 
	\beq
	\tau_{\rm norm}(\bt;\e) \in \CC((\e^2)) [[\bt]], \quad \tau_{\rm norm}(\bigzero;\e) \equiv 1,  \quad \e^2 B(\e)\in \CC[[\e^2]].
	\eeq
	\end{itemize}
	It can be checked all the partition functions that appear in this paper 
	are well defined from both of the above ways, and we will not say more details about this.
	
	Let us now prove Proposition~\ref{prop-galilean-solution}.
	
	\begin{proof}[Proof of Proposition~\ref{prop-galilean-solution}]
	It is easy to verify that %Direct calculation shows that 
	$\tilde{u}(\bt;\e;q)$ defined by \eqref{GalileanTransformationOnSolution} 
	is an element of $\mathbb{C}[[\bt,\e,q]]$. 
	Let us show that $\tilde{u}(\bt;\e;q)$ satisfies \eqref{IVP-for-solution}. Indeed, %on one hand, 
	noticing that
		\beq \label{qderitime}
		\frac{\p t_{i}^{\rm G}(\bt;q)}{\p q}=t_{i+1}^{\rm G}(\bt;q),\quad i\geq 0,
		\eeq
		we find
		\begin{align}
			\mbox{LHS of \eqref{IVP-for-solution}}=
			\sum_{i\geq0}t^{\rm G}_{i+1}(\bt;q)\frac{\p u}{\p t_{i}}\bigl(\bt^{\rm G}(\bt;q);\e\bigr) +1.
		\end{align}
		We also have 
		\begin{align}
			\mbox{RHS of \eqref{IVP-for-solution}}=
			&
			\sum_{k\geq0}\sum_{i\geq 0}t_{k+1}\frac{\p u}{\p t_{i}}\bigl(\bt^{\rm G}(\bt;q);\e\bigr)\frac{\p t^{\rm G}_i(\bt;q)}{\p t_k}+1\notag\\
			=&\sum_{i\geq 0}\frac{\p u}{\p t_{i}}\bigl(\bt^{\rm G}(\bt;q);\e\bigr)\sum_{k\geq i}\frac{q^{k-i}}{(k-i)!}t_{k+1}+1\notag\\
			=&\sum_{i\geq0}t^{\rm G}_{i+1}(\bt;q)\frac{\p u}{\p t_{i}}\bigl(\bt^{\rm G}(\bt;q);\e\bigr) +1,
		\end{align}
		which equals the left-hand side of~\eqref{IVP-for-solution}. 
	The uniqueness is guaranteed by a recursion of the coefficients of powers of $q$.
	\end{proof}
	
	Similar to Proposition~\ref{prop-galilean-solution}, we have the following proposition.
\begin{pro}\label{pro-tau-Galilean-symmetry}
For any given function $\tau(\bt;\e)$, the initial value problem of equation~\eqref{AdditionalSymmetryOnTau}
with the initial data
\begin{align}
		\tilde{\tau}(\bt;\e;q=0)=\tau(\bt;\e)\label{IVP-tau}
\end{align}
has a unique power-series in~$q$ solution $\tilde{\tau}(\bt;\e;q)$, 
which has the explicit expression~\eqref{GalileanTransformationOnTau}. 
\end{pro}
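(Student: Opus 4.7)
The plan is to proceed in direct parallel with the proof of Proposition~\ref{prop-galilean-solution}: substitute the explicit formula~\eqref{GalileanTransformationOnTau} into the PDE~\eqref{AdditionalSymmetryOnTau}, verify that the two sides agree, and deduce uniqueness by power-series recursion in~$q$.

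First I would write $\tilde\tau = \tau(\bt^{\rm G}(\bt;q);\e)\,e^{g(\bt;q)/\e^2}$ and split both sides of~\eqref{AdditionalSymmetryOnTau} via the product rule into a ``$\tau$-derivative'' part and a ``$g$-derivative'' part. Using the identity $\p t^{\rm G}_i/\p q = t^{\rm G}_{i+1}$ already recorded in~\eqref{qderitime}, together with the chain-rule computation carried out in the proof of Proposition~\ref{prop-galilean-solution}, the $\tau$-parts on the two sides automatically match, namely
\[
\sum_{i\ge 0} t^{\rm G}_{i+1}\,\frac{\p \tau}{\p t_i}\bigl(\bt^{\rm G};\e\bigr) \;=\; \sum_{k\ge 0}\sum_{i\ge 0}t_{k+1}\,\frac{\p t^{\rm G}_i}{\p t_k}\,\frac{\p \tau}{\p t_i}\bigl(\bt^{\rm G};\e\bigr).
\]
The problem therefore reduces to the single scalar identity $(\star)$: $\p g/\p q = \sum_{k\ge 0} t_{k+1}\,\p g/\p t_k + t_0^2/2$.

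Next I would verify $(\star)$ by a direct computation from~\eqref{g-correction}. Differentiating yields $\p g/\p q = \tfrac{1}{2}\sum_{i,j\ge 0} q^{i+j} t_i t_j/(i!\,j!)$, whose $(i,j)=(0,0)$ contribution is exactly $t_0^2/2$. On the other hand $\p g/\p t_k = \sum_{j\ge 0} q^{k+j+1} t_j/\bigl((k+j+1)\,k!\,j!\bigr)$, so after the shift $I=k+1$, $J=j$ one obtains
\[
\sum_{k\ge 0} t_{k+1}\,\frac{\p g}{\p t_k} \;=\; \sum_{I\ge 1,\, J\ge 0}\frac{I\,q^{I+J}\,t_I t_J}{(I+J)\,I!\,J!}.
\]
Splitting off the $J=0$ summand and symmetrizing $(I,J)\leftrightarrow (J,I)$ on the remaining $I,J\ge 1$ piece reproduces exactly the off-$(0,0)$ part of $\p g/\p q$, establishing $(\star)$.

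Finally, the initial condition $\tilde\tau(\bt;\e;0) = \tau(\bt;\e)$ is immediate since $\bt^{\rm G}(\bt;0) = \bt$ and $g(\bt;0) = 0$. Uniqueness in the class of formal power series in~$q$ is then automatic: equation~\eqref{AdditionalSymmetryOnTau} recursively determines the coefficient of each $q^n$ from the lower-order ones. The only mild nuisance I foresee is the index bookkeeping needed to verify $(\star)$, but this is precisely the same kind of elementary symmetrization already performed in the proof of Proposition~\ref{prop-galilean-solution}.
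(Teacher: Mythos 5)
Your proposal is correct and follows essentially the same route as the paper: both verify that the explicit formula satisfies \eqref{AdditionalSymmetryOnTau} by direct substitution, the product rule, the identity \eqref{qderitime}, and the same index shift/symmetrization, and both settle uniqueness by recursion on the coefficients of powers of~$q$. Your isolation of the scalar identity $\p g/\p q=\sum_{k\ge0}t_{k+1}\,\p g/\p t_k+t_0^2/2$ is just a slightly cleaner packaging of the bookkeeping the paper carries out inline, and your verification of it is accurate.
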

	
	\begin{proof}
		It is easy to show that %Direct calculation shows that 
		$\tilde{\tau}(\bt;\e;q)$ defined by~\eqref{GalileanTransformationOnTau} is indeed a power series of~$q$.
		 Let us verify that $\tilde{\tau}(\bt;\e;q)$ satisfies \eqref{IVP-tau}. On one hand,
		\begin{align}
			\mbox{LHS of  \eqref{AdditionalSymmetryOnTau}}
			=\Biggl(\sum_{i\geq 0}t^{\rm G}_{i+1}(\bt;q)
			\frac{\p \tau(\bt^{\rm G}(\bt;q);\epsilon)}{\p t_{i}}
			+\frac{\tau(\bt^{\rm G}(\bt;q);\epsilon)}{2\e^2}
			\sum_{i,j\geq 0}q^{i+j}\frac{t_it_j}{i!j!}\Biggr)e^{\frac{g(q,\bt)}{\e^2}}.
		\end{align}
	On the other hand,
		\begin{align}
			&\mbox{RHS of \eqref{AdditionalSymmetryOnTau}}\notag\\
			&=\Biggl(\sum_{k\geq 0}t_{k+1}\sum_{i\geq0}\frac{\p \tau}{\p t_{i}}\bigl(\bt^{\rm G}(\bt;q);\epsilon\bigr)\frac{\p t^{\rm G}_{i}(\bt;q)}{\p t_{k}}+\frac{1}{\e^2}\Biggl(\sum_{i,j\geq0}\frac{q^{i+j+1}}{i+j+1}\frac{t_{i} t_{j+1}}{i!j!}+\frac{t_0^2}{2}\Biggr)\tau\bigl(\bt^{\rm G}(\bt;q);\epsilon\bigr)\Biggr)e^{\frac{g(\bt;q)}{\e^2}}\notag\\
			&=\Biggl(\sum_{i\geq 0}\frac{\p \tau}{\p t_{i}}\bigl(\bt^{\rm G}(\bt;q);\epsilon\bigr)\sum_{k\geq i}t_{k+1}\frac{q^{k-i}}{(k-i)!}\notag\\
			&+\frac{1}{2\e^2}\Biggl(\sum_{i\geq 0,j\geq 1}\frac{q^{i+j}}{i+j}\frac{t_{i}t_{j}}{i!(j-1)!}+\sum_{i\geq 1,j\geq 0}\frac{q^{i+j}}{i+j}\frac{t_{i}t_{j}}{(i-1)!j!}+t_0^2\Biggr)\tau\bigl(\bt^{\rm G}(\bt;q);\epsilon\bigr)\Biggr)e^{\frac{g(\bt;q)}{\e^2}}\notag\\
			&= \Biggl(\sum_{i\geq 0}\frac{\p \tau(\bt^{\rm G}(\bt;q);\epsilon)}{\p t_{i}}t^{\rm G}_{i+1}(\bt;q)
			+\frac{\tau(\bt^{\rm G}(\bt;q);\epsilon)}{2\e^2}\sum_{i,j\geq 0}q^{i+j}\frac{t_it_j}{i!j!}\Biggr)e^{\frac{g(\bt;q)}{\e^2}},
		\end{align}
which finishes the verification. The uniqueness is guaranteed by a recursion of the coefficients of powers of~$q$.
	\end{proof}
	
	\begin{rmk}
	We note that Proposition~\ref{pro-tau-Galilean-symmetry} can also be easily obtained from the Givental quantization formula~\cite{Gi01}.
	\end{rmk}	

	We are ready to prove Theorem~\ref{thm-KdV-tau}.
	\begin{proof}[Proof of Theorem~\ref{thm-KdV-tau}]
		The goal is to show that for arbitrary $\bt,\bt'$ and for all $p\ge 0$,
		\begin{small}
			\begin{align}\label{hibispe}
				\underset{z=\infty}{\rm res} \,\tau\bigl(\bt^{\rm G}(\bt-[z^{-1}];q);\e\bigr)
				\,\tau\bigl(\bt^{\rm G}(\bt'+[z^{-1}];q);\e\bigr)
				e^{\frac{g({\bt}-[z^{-1}];q)+g(\bt'+[z^{-1}];q)}{\e^2}}e^{\sum\limits_{j\geq 0}\frac{t_{j}-t'_{j}}{\e}\frac{z^{2j+1}}{(2j+1)!!}}z^{2p}dz=0.
			\end{align}
		\end{small}

		By a direct calculation, we have 
		\begin{flalign}\label{calc1} 
			t_{n}^{\rm G}\bigl(\bt-[z^{-1}];q\bigr)
	=t^{\rm G}_{n}(\bt;q)-\e\frac{(2n-1)!!}{w(z,q)^{2n+1}},\quad n\ge 0,
		\end{flalign}
		where $w(z,q):=\sqrt{z^2-2q}$. Using the Newton--Leibniz formula,  \eqref{g-correction}, and \eqref{GalileanTransformationOnSolution}, we have
		\begin{flalign}\label{calc2pre} 
			&
			 g({\bt}-[z^{-1}];q)=\int_{0}^{q}\frac{\partial g({\bt}-[z^{-1}];s)}{\partial s}ds
			=\frac{1}{2}\int_{0}^{q}\biggl(\sum_{i\geq 0}\frac{s^i}{i!}t_{i}-\e\sum_{i\geq0}\frac{s^i}{i!}\frac{(2i-1)!!}{z^{2i+1}}\biggr)^2 ds\notag\\
			&=\frac{1}{2}\int_{0}^{q}\biggl(t^{\rm G}_{0}(\bt;s)-\frac{\e}{w(z,s)}\biggr)^2 ds\notag\\
			&=\frac{1}{2}\int_{0}^{q} \bigl(t^{\rm G}_{0}(\bt;s)\bigr)^2 ds- \e\int_{0}^{q}\frac{t^{\rm G}_{0}(\bt;s)}{w(z,s)}ds
			+\frac{\e^2}{2}\int_{0}^{q}\frac{1}{w(z,s)^2} ds.
		\end{flalign}
		Observing that 
		\beq\label{wdq}
		\frac{\partial (w(z,q)^n)}{\partial q}=-n w(z,q)^{n-2}, \quad n\geq 1,
		\eeq
		we arrive at
		\beq\label{calc2} 
			g({\bt}-[z^{-1}];q)=
			\frac{1}{2}\int_{0}^{q} \bigl(t^{\rm G}_{0}(\bt;q)\bigr)^2 ds 
			- \e\int_{0}^{q}\frac{t^{\rm G}_{0}(\bt;s)}{w(z,s)}ds
			+\frac{\e^2}{2}\log\biggl(\frac{z}{w(z,q)}\biggr).
		\eeq	
		Using \eqref{GalileanTransformationOnSolution}, the Newton--Leibniz formula, and
		 \eqref{qderitime}, we obtain
\begin{flalign}\label{calc3} 
&\sum_{j\geq 0}\frac{t_{j}}{\e}\frac{z^{2j+1}}{(2j+1)!!}\notag\\
&=\sum_{j\geq 0}\frac{t^{\rm G}_{j}(\bt;q)}{\e}\frac{w(z,q)^{2j+1}}{(2j+1)!!}
			-\int_{0}^{q}\frac{\partial}{\partial s}\Biggl(\sum_{j\geq 0}\frac{t^{\rm G}_{j}(\bt;s)}{\e}\frac{w(z,s)^{2j+1}}{(2j+1)!!}\Biggr)ds\notag\\
&=\sum_{j\geq 0}\frac{t^{\rm G}_{j}(\bt;q)}{\e} \frac{w(z,q)^{2j+1}}{{(2j+1)!!}}+\frac1{\e}\int_{0}^{q}\frac{t^{\rm G}_{0}(\bt;s)}{w(z,s)}ds.&
\end{flalign}
		Also,
\begin{flalign}\label{calc4}
	z^{2p} dz
=\sum_{l=0}^{p}\binom{p}{l}(2q)^{p-l}\frac{w(z,q)^{2l+1}}{z}dw.
\end{flalign}
		Similarly, we have
		\beq
		\label{calc5} 
			t_{n}^{\rm G}(\bt'+[z^{-1}];q)
			=t^{\rm G}_{n}(\bt';q)+\e\frac{(2n-1)!!}{w(z,q)^{2n+1}},\quad n\ge 0,
		\eeq
\beq\label{calc6} 
g({\bt'}-[z^{-1}];q)=\frac{1}{2}\int_{0}^{q}t^{\rm G}_{0}(\bt';q)^2 ds+\e\int_{0}^{q}\frac{t^{\rm G}_{0}(\bt';s)}{w(z,s)}ds+\frac{\e^2}{2}\log\biggl(\frac{z}{w(z,q)}\biggr).
\eeq
		
	Substituting \eqref{calc1}, \eqref{calc2}, \eqref{calc3}, \eqref{calc4}, \eqref{calc5} and \eqref{calc6} in~\eqref{hibispe}, we find
			\begin{equation} \begin{split} 
					&\mbox{LHS of \eqref{hibispe}}\notag\\
					&=e^{\frac{1}{2\epsilon^2}\int_{0}^{q} (t^{\rm G}_{0}(\bt;s))^2+(t^{\rm G}_{0}(\bt';s))^2 ds}
					\sum_{l=0}^{p}\binom{p}{l}(2q)^{p-l}\\
					&\times\underset{w=\infty}{\rm res}\,\tau\bigl(\bt^{\rm G}(\bt;q)-[w^{-1}];\e\bigr)\,\tau\bigl(\bt^{\rm G}(\bt';q)+[w^{-1}];\e\bigr) 
					e^{\sum_{j\geq 0}\frac{t^{\rm G}_{j}(\bt;q)-t^{\rm G}_{j}(\bt';q)}{\e}\frac{w^{2j+1}}{(2j+1)!!}}w^{2l} dw,
			\end{split}	\end{equation}
		which vanishes because $\tau(\bt;\e)$ is a KdV $\tau$-function. The theorem is proved.
	\end{proof}
	
	As a corollary of Theorem~\ref{thm-KdV-tau}, let us prove Theorem~\ref{thm-of-kdv-solution}.
	\begin{proof}[Proof of Theorem~\ref{thm-of-kdv-solution}]
		Take $\tau(\bt;\e)$ a specific $\tau$-function of the solution $u(\bt;\e)$ to the KdV hierarchy.
		According to Theorem~\ref{thm-KdV-tau}, we know that $\tilde{\tau}(\bt;\e;q)$ defined by~\eqref{GalileanTransformationOnTau}
		 is also a KdV $\tau$-function. Noticing that $\p t^{\rm G}_{k}(\bt;q)/\p t_0=\delta_{k,0}$, $k\geq 0$, we find
	\begin{equation}
			\e^2\frac{\p^2\log(\tilde{\tau}(\bt;\e,q))}{\p t_{0}^2}
			%=\ee^{2}\frac{\log(\tau(\underline{T}^{G}(\underline{T};q);\ee))}{\p T_{0}^2}+q
			=u(\bt^{\rm G}(\bt;q);\e)+q=\tilde{u}(\bt;\e;q).
		\end{equation}
	Thus $\tilde{u}(\bt;\e,q)$ is a solution to the KdV hierarchy.
	\end{proof}
	
For an arbitrary solution $u(\bt; \e)$ to the KdV hierarchy, let $\tau(\bt; \e)$ be the 
	$\tau$-function of the solution $u(\bt; \e)$. Recall that the $n$th logarithmic derivatives 
	\beq
	\frac{\p^{n} \log\tau}{\p t_{k_1}\dots \p t_{k_n}}(\bt;\e), \quad k_1,\dots,k_n\ge0,
	\eeq
	are called {\it $n$-point correlation functions associated with~$\tau$}. Their restrictions 
	\beq
	\frac{\p^{n} \log\tau}{\p t_{k_1}\dots \p t_{k_n}}(X, t_1=0, t_2=0, \dots; \e), \quad k_1,\dots,k_n\ge0,
	\eeq
	are called {\it $n$-point partial correlation functions associated with~$\tau$}. 
	For $n\ge1$, define  
	\begin{align}
	&W_{n}(z_1,\dots,z_n;X;\e)
	=\e^n \sum_{k_1,\dots,k_n\ge0} \frac{\p^{n} \log\tau}{\p t_{k_1}\dots \p t_{k_n}}
	(X,\bigzero;\e) 
	\frac{(2k_1+1)!!\dots(2k_n+1)!!}{z_1^{2k_1+3} \dots z_n^{2k_n+3}} ,\\
	&C_{n}(x_1,\dots,x_n;X;\e)
	=\e^n \sum_{k_1,\dots,k_n\ge0} \frac{\p^{n} \log\tau}{\p t_{k_1}\dots \p t_{k_n}}(X,\bigzero;\e) x_1^{k_1} \dots x_n^{k_n},
	\end{align}
which are known as two different types of $n$-point functions~\cite{BDY} associated with~$\tau$. 
	
	From the definition~\eqref{GalileanTransformationOnTau} of~$\tilde{\tau}$, we know that for $n\ge 1$ and $k_1,\dots,k_n \ge 0$,
	\begin{align}
		\frac{\p^{n} \log\tilde{\tau}}{\p t_{k_1}\dots \p t_{k_n}}(\bt;\e;q)
		&=\sum_{0\le j_1\le k_1,\dots,0\le j_n\le k_n}\frac{q^{|k|-|j|}}{\prod_{i=1}^{n}(k_i-j_i)!}\frac{\p^n \log \tau}{\p t_{j_1}\dots \p t_{j_n}}(\bt^{\rm G}(\bt;q);\e)
		\notag\\
		&+\frac{\delta_{n,1}}{\e^2}\sum_{j\ge 0}\frac{q^{k_1+j+1}t_j}{(k_1+j+1)k_1!j!}+\frac{\delta_{n,2}}{\e^2}\frac{q^{k_1+k_2+1}}{(k_1+k_2+1)k_1!k_2!}.
	\end{align}
Taking $t_1=t_2=\dots=0$, we find, for $n\ge 1$ and $k_1,\dots,k_n \ge 0$, that 
	\begin{align}\label{Galilean-Correlator}
		\frac{\p^{n} \log\tilde{\tau}}{\p t_{k_1}\dots \p t_{k_n}}(X,\bigzero;\e;q)
		&=\sum_{0\le j_1\le k_1,\dots,0\le j_n\le k_n}\frac{q^{|k|-|j|}}{\prod_{i=1}^{n}(k_i-j_i)!}\frac{\p^n \log \tau}{\p t_{j_1}\dots \p t_{j_n}}(X, \bigzero;\e)
		\notag\\
		&+\frac{\delta_{n,1}}{\e^2}\frac{q^{k_1+1}X}{(k_1+1)!}+\frac{\delta_{n,2}}{\e^2}\frac{q^{k_1+k_2+1}}{(k_1+k_2+1)k_1!k_2!}.
	\end{align}
Denote by $\widetilde{W}_{n}(z_1,\dots,z_n;X;\e;q)$, $\widetilde{C}_{n}(x_1,\dots,x_n;X;\e;q)$ the two different types of $n$-point functions 
associated with~$\tilde{\tau}$. Then we have for $n\ge1$,
	\begin{align}\label{WWt}
			\widetilde{W}_{n}(z_1,\dots,z_n;X;\e;q)=\,&W_{n} \bigl( {\textstyle \sqrt{z_1^2-2q}},\dots,\sqrt{z_n^2-2q};X;\e\bigr)
			+\tfrac{\delta_{n,1}}{\e}\Bigl(\tfrac{1}{\sqrt{z_1^2-2q}}-\tfrac{1}{z_1}\Bigr)X\notag\\
			&+\tfrac{\delta_{n,2}}{ (z_1^2-z_2^2)^2}
			\Bigl(\tfrac{\sqrt{z_1^2-2q}}{\sqrt{z_2^2-2q}}+\tfrac{\sqrt{z_2^2-2q}}{\sqrt{z_1^2-2q}}
			-\tfrac{z_1}{z_2}-\tfrac{z_2}{z_1}\Bigr),
	\end{align}

\begin{align}\label{CCt}
		\widetilde{C}_{n}(x_1,\dots,x_n;X;\e;q)&=e^{q\sum\limits_{i=1}^{n}x_i}C_{n}\left(x_1,\dots,x_n; X;\e\right)
		+\tfrac{\delta_{n,1}}{\e}\tfrac{e^{q x_1}-1}{x_1}X
		+\delta_{n,2}\tfrac{e^{q (x_1+x_2)}-1}{x_1+x_2}.
\end{align}

We observe that Theorem~\ref{thm-KdV-tau} is equivalent to the validity of the above identity~\eqref{WWt} (or~\eqref{CCt}) for $n\ge2$.
Based on this and on the matrix-resolvent method \cite{BDY, DYZ18}, let us give an alternative proof of Theorem~\ref{thm-KdV-tau}.
\begin{proof}[Another proof of Theorem~\ref{thm-KdV-tau}.]
Let $u(\bt;\e)=\e^2 \p_X^2(\log \tau(\bt;\e))$. Since $\tau(\bt;\e)$ is a KdV $\tau$-function, we know that $u(\bt;\e)$ 
is a solution to the KdV hierarchy which has the initial value $u(X, \bigzero; \e)=:f(X;\e)$. 
Let $R(\lambda,X;\e)$ be the basic matrix resolvent of~$u(\bt;\e)$ evaluated at $t_1=t_2=\dots=0$ 
(for the precise definition for $R(\lambda,X;\e)$ see~\cite{DYZ18}), and let 
$\widehat{R}(\lambda,X;\e;q)$ be the basic matrix resolvent of~$\hat u(\bt;\e;q)$ evaluated at $t_1=t_2=\dots=0$, 
where $\hat u(\bt;\e;q)$ is the unique solution the KdV hierarchy with the initial value $f(X;\e)+q$.
Take $\hat \tau(\bt;\e;q)$ a $\tau$-function of $\hat u(\bt;\e;q)$ and let $\widehat{W}_n(z_1,\dots,z_n;X;\e;q)$ be $n$-point functions 
associated with~$\hat{\tau}(\bt;\e;q)$.
It is straightforward to verify that $\widehat{R}(z^2,X;\e;q)=\frac{z}{\sqrt{z^2-2q}} R(z^2-2q, X;\e)$, from which 
we find that \eqref{WWt} holds for $n\ge2$ with $\widetilde{W}_n$ on the left-hand side being replaced by $\widehat{W}_n$.
The theorem is proved.
\end{proof}

\section{The NBI partition function}\label{S3}
In this section, we give a detailed study on the NBI partition function. 
	
Recall that Witten's conjecture can be equivalently~\cite{DVV, KS91} 
	stated as follows: $Z^{\rm WK}=Z^{\rm WK}(\bt;\e)$ satisfies 
	the following infinite set of linear equations 
	\begin{equation}\label{WK-Virasoro}
		L^{\rm WK}_{k} \bigl(Z^{\rm WK}\bigr)=0, \quad k\geq -1,
	\end{equation}
	where 
	\begin{align}
		L^{\rm WK}_{k}:=&\frac{\e^2}{2}\sum_{i+j=k-1}\frac{(2i+1)!!(2j+1)!!}{2^{k+1}}\frac{\p^2}{\p t_{i}\p t_{j}}\notag
		\\
		+&\sum_{i\geq 0}\frac{(2k+2i+1)!!}{2^{k+1}(2i-1)!!}(t_i-\delta_{i,1})\frac{\p}{\p t_{i+k}}+\frac{\delta_{k,0}}{16}
		+\frac{t_0^2}{2\e^2}\delta_{k,-1}.
	\end{align}
	These equations are called \emph{Virasoro constraints} for~$Z^{\rm WK}$. Clearly, the $k=-1$ equation in~\eqref{WK-Virasoro} 
	is nothing but the string equation~\eqref{WKstring}.
The Witten--Kontsevich partition function also satisfies the following \emph{dilaton equation}:
\beq\label{WKdilaton}
\sum_{k\geq 0} t_k\frac{\p Z^{\rm WK}(\bt;\e)}{\p t_k}+\e\frac{\p Z^{\rm WK}(\bt;\e)}{\p \e}+\frac{1}{24}=\frac{\p Z^{\rm WK}(\bt;\e)}{\p t_1}.
\eeq
		
Denote by $Z^{\kappa}$ the partition function of integrals of mixed $\psi$-, $\kappa$- classes on $\overline{\mathcal{M}}_{g,n}$
	\begin{equation}\label{WPDefinition}
		Z^{\kappa}(\bt;\e; \bigs):=\exp\left(\sum_{g,n\ge0}\frac{\e^{2g-2}}{n!}\sum_{k_1,\dots,k_n\geq0}
		\int_{\overline{\mathcal{M}}_{g,n}}e^{\sum_{j\geq 1}\kappa_j s_j} \prod_{i=1}^{n}\psi_{i}^{k_i} t_{k_i}\right),
	\end{equation}
	where $\bigs=(s_1,s_2,s_3,\cdots)$. 
We note that the integrals 
\beq
\int_{\overline{\mathcal{M}}_{g,n}} \psi_1^{k_1}\dots\psi_n^{k_n} \kappa_{j_1} \dots \kappa_{j_s}, \quad k_1,\dots,k_n\ge0, \, j_1,\dots,j_s\ge1,
\eeq
vanish unless the following degree-dimension matching holds:
\beq\label{dd}
k_1+\dots+k_n+j_1+\dots+j_s=3g-3+n.
\eeq
	It was proved in \cite{KMZ, LX, MZ} that 
\begin{equation}\label{WK-Shift-WP}
		Z^{\kappa}(\bt;\e;\bigs)=Z^{\rm WK}(t_0,t_1,t_2-p_1(\bigs),t_{3}-p_2(\bigs),\dots;\e)
	\end{equation}
	where $p_j(\bigs)$, $j\ge1$, are polynomials of~$\bigs$ defined via
	\begin{equation}\label{WPWK}
		e^{-\sum_{i\geq 1}s_{i}z^{i}}=\sum_{j\geq 0}p_{j}(\bigs)z^{j},\quad p_0(\bigs):=1.
	\end{equation}
	
We are ready to prove Proposition~\ref{NBIkappaprop}.
\begin{proof}[Proof of Proposition~\ref{NBIkappaprop}]
Putting $z=\frac{2}{x^2}w$ in \eqref{NBI-Schur}, we get
\beq
\exp\biggl(-\sum_{i\geq 1}s_{j}^{\rm NBI}\biggl(\frac{2}{x^2}\biggr)^{j}w^{j}\biggr)=\sum_{k\geq 0}\biggl(-\frac{2}{x^2}\biggr)^{k}(2k+1)!!w^k.
\eeq
Then, by using~\eqref{WK-Shift-WP}, we find
\begin{align}
&Z^{\rm WK}(\bt;\e)|_{t_i=b_i-(2i-1)!!(-\frac{2}{x^2})^{i-1}-\frac{x^2}{2}\delta_{i,0}+\delta_{i,1}, \,i\ge0} \notag\\
&=\exp\Biggl(\sum_{g,n\ge0}\frac{\e^{2g-2}}{n!}\sum_{k_1,\dots,k_n\geq0}
\int_{\overline{\mathcal{M}}_{g,n}}\prod_{i=1}^{n}\psi_{i}^{k_i}b_{k_i}
\exp\biggl(-\sum_{j\geq1} s_{j}^{\rm NBI} \biggl(\frac{2}{x^2}\biggr)^{j}\kappa_{j}\biggr)\Biggr)\notag\\
&=\exp\Biggl(\sum_{g,n\ge 0}\frac{\e^{2g-2}}{n!}\sum_{k_1,\cdots,k_n\ge 0}\biggl(\frac{x^2}{2}\biggr)^{|k|-n-3g+3}\int_{\overline{\mathcal{M}}_{g,n}}\prod_{i=1}^{n}\psi_{i}^{k_i}b_{k_i}\mathbb{K}\Biggr),
\end{align}
where in the second equality we have used the degree-dimension matching~\eqref{dd}.
So by replacing $b_i$ by $\frac{x^2}{2}r_i$, $i\ge0$, and $\e$ by $\frac{x^2}{2}\e$, we obtain 
\begin{align}
&Z^{\rm WK}\bigl(\bt; \tfrac{x^2}{2}\e\bigr)
\big|_{t_i=\frac{x^2}{2}r_i-(2i-1)!!(-\frac{2}{x^2})^{i-1}-\frac{x^2}{2}\delta_{i,0}+\delta_{i,1}, \,i\ge0} \notag\\
&=\exp\Biggl(\sum_{g,n\ge 0}\tfrac{\e^{2g-2}}{n!}\sum_{k_1,\dots,k_n\ge 0}\bigl(\tfrac{x^2}{2}\bigr)^{|k|-g+1}\int_{\overline{\mathcal{M}}_{g,n}}\prod_{i=1}^{n}\psi_{i}^{k_i}r_{k_i}\mathbb{K}\Biggr).
\end{align}
Now by using~\eqref{WKdilaton}, we have
\begin{align}
& 
Z^{\rm WK}\bigl(\bt; \tfrac{x^2}{2}\e\bigr)
\big|_{t_i=\frac{x^2}{2}r_i-(2i-1)!!(-\frac{2}{x^2})^{i-1}-\frac{x^2}{2}\delta_{i,0}+\delta_{i,1}, \,i\ge0} \notag\\
&= e^{\frac1{24} \log \left(\frac2{x^2}\right)} 
Z^{\rm WK}\left(\bt; \e\right)\big|_{t_i=r_i+(2i-1)!!(-\frac{2}{x^2})^{i}-\delta_{i,0}+\delta_{i,1}, \,i\ge0} .
\end{align}
The proposition is proved.
\end{proof}

Let 
\beq
\mathcal{F}^{\rm WK}(\bt;\e):=\log Z^{\rm WK}(\bt;\e),  
\eeq
be the {\it WK free energy}. 
By definition it has the genus expansion
\beq
\mathcal{F}^{\rm WK}(\bt;\e)=: \sum_{g\ge0} \e^{2g-2} \mathcal{F}^{\rm WK}_g(\bt),
\eeq
where $\mathcal{F}^{\rm WK}_g(\bt)$ is called the {\it genus~$g$ part of the WK free energy} (for short the 
{\it genus~$g$ WK free energy}). 
Similarly, let 
\beq
\mathcal{F}^{\rm NBI}(x,\br;\e):=\log Z^{\rm NBI}(x,\br;\e), \quad 
\mathcal{F}^{\rm cBGW}(x,\bt;\e):=\log Z^{\rm cBGW}(x,\bt;\e)
\eeq
be the {\it NBI, cBGW free energies}, respectively. 
We have
\beq
\mathcal{F}^{\rm NBI}(x,\br;\e)=: \sum_{g\ge0} \e^{2g-2} \mathcal{F}^{\rm NBI}_g(x,\br), \quad 
\mathcal{F}^{\rm cBGW}(x,\bt;\e)=: \sum_{g\ge0} \e^{2g-2} \mathcal{F}^{\rm cBGW}_g(x,\bt),
\eeq
where $\mathcal{F}^{\rm NBI}_g(x,\br)$ is called  the 
{\it genus~$g$ NBI free energy} and  $\mathcal{F}^{\rm cBGW}_g(x,\br)$ is called  the 
{\it genus~$g$ generalized BGW free energy}. 

Let 
\beq
v^{\rm WK}(\bt):= \frac{\p^2 \mathcal{F}^{\rm WK}_0(\bt)}{\p t_0^2}, \quad 
v^{\rm NBI}(x,\br):= \frac{\p^2 \mathcal{F}^{\rm NBI}_0(x,\br)}{\p r_0^2}, \quad 
v^{\rm cBGW}(x,\bt):= \frac{\p^2 \mathcal{F}^{\rm cBGW}_0(x,\bt)}{\p t_0^2}.
\eeq
It is well known that $v^{\rm WK}(\bt)$ is a solution to the dispersionless KdV hierarchy:
\beq\label{displesskdv}
\frac{\p v(\bt)}{\p t_k} = \frac{v(\bt)^k}{k!}  \frac{\p v(\bt)}{\p t_0}, \quad k\ge0.
\eeq
From the definitions we know that 
\beq\label{vvnbiwk}
v^{\rm NBI}(x,\br) = v^{\rm WK}\bigl(\bt^{\rm WK-NBI}(x,\br)\bigr).
\eeq
So $v^{\rm NBI}(x,\br)$ is also a solution to the dispersionless KdV hierarchy~\eqref{displesskdv}.
As we have mentioned in the Introduction, the power series $u^{\rm cBGW}(x,\bt;\e)$ defined in~\eqref{defucbgw} 
is a solution to the KdV hierarchy, so $v^{\rm cBGW}(x,\bt)$ is again a solution to~\eqref{displesskdv}.

By~\eqref{WKini23} we know that $v^{\rm WK}(X,\bigzero)=X$. It yields that  
$v^{\rm WK}(\bt)$ is the unique power series solution to the following genus~0 Euler--Lagrange equation~\cite{DW, Du96}:
\begin{equation}\label{ELeq}
	\sum_{k\geq 0}\frac{t_k}{k!} \bigl(v^{\rm WK}(\bt)\bigr)^k=v^{\rm WK}(\bt).
\end{equation}
It is also known that the genus~0 WK free energy has the expression~\cite{Du96, DZ-norm}:
\beq\label{duwk0}
\mathcal{F}^{\rm WK}_0(\bt) = \frac12\sum_{k,\ell\ge0} (t_k-\delta_{k,1}) (t_\ell-\delta_{\ell,1}) \frac{v^{\rm WK}(\bt)^{k+\ell+1}}{k! \ell! (k+\ell+1)}.
\eeq

For genus bigger than or equal to~1, we know the following structure lemma.
\begin{lem}[\cite{DVV, DZ-norm, IZ}] \label{structurewkfree}
For $g\ge1$, the genus~$g$ WK free energy has the expression:
\beq
\mathcal{F}^{\rm WK}_g(\bt) = F_g\biggl(\frac{\p v^{\rm WK}(\bt)}{\p t_0}, \dots, \frac{\p^{3g-2} v^{\rm WK}(\bt)}{\p t_0^{3g-2}}\biggr),
\eeq
where $F_g(z_1,\dots,z_{3g-2})$, $g\ge1$, are functions of $(3g-2)$ variables with 
\beq\label{F1witten}
F_1(z_1) = \frac1{24} \log z_1,
\eeq
and for $g\ge2$, $F_g(z_1,\dots,z_{3g-2})\in \mathbb{Q}[z_2,\dots,z_{3g-2}][z_1,z_1^{-1}]$. 
\end{lem}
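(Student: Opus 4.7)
The plan is to prove this structure lemma by induction on the genus~$g$, combining the dilaton equation~\eqref{WKdilaton}, the genus-$0$ Euler--Lagrange equation~\eqref{ELeq}, and the KdV hierarchy satisfied by $u^{\rm WK}(\bt;\e) = \e^2 \p_{t_0}^2 \mathcal{F}^{\rm WK}(\bt;\e)$. Set $v := v^{\rm WK}(\bt)$ and $v^{(k)} := \p_{t_0}^k v$. Differentiating~\eqref{ELeq} in $t_0$ gives $v^{(1)} = \bigl(1 - \sum_{k\ge 1} t_k v^{k-1}/(k-1)!\bigr)^{-1}$, and further differentiations express each $v^{(k)}$ as a rational function of $v$ and $(t_j)_{j\ge 1}$ whose denominator is a power of $v^{(1)}$. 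Conversely, $v$ together with $v^{(1)},\dots,v^{(m)}$ and $(t_j)_{j\ge m+1}$ determine $t_1,\dots,t_m$ via the triangular system obtained from iterated differentiation of~\eqref{ELeq}; hence it is meaningful to ask whether $\mathcal{F}_g^{\rm WK}$ depends on $\bt$ only through finitely many such jets.

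For the base case $g=1$, the string equation~\eqref{WKstring} and the dilaton equation~\eqref{WKdilaton} together with $\p_{t_1}\mathcal{F}_1^{\rm WK}(\bigzero) = \int_{\overline{\mathcal{M}}_{1,1}}\psi_1 = \tfrac1{24}$ uniquely pin down $\mathcal{F}_1^{\rm WK}$, and a direct substitution confirms that $\tfrac1{24}\log v^{(1)}$ satisfies both equations and has the correct value at $\bt=\bigzero$. For the inductive step at $g\ge 2$, the dilaton equation reduces to $\sum_{k\ge 0}(t_k-\delta_{k,1})\p_{t_k}\mathcal{F}_g^{\rm WK}=(2g-2)\mathcal{F}_g^{\rm WK}$. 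Assigning weights $\deg t_k = 1-k$ and $\deg v^{(k)} = -k$, this forces any jet expression for $\mathcal{F}_g^{\rm WK}$ to be homogeneous of weight $2-2g$; combined with the dimension constraint $|k|=3g-3+n$ on $\psi$-class intersections implicit in~\eqref{WKtau}, this caps the jet order by $3g-2$. Existence of such a jet expression, as well as independence from $v^{(0)}=v$, would be established by lifting the KdV flows $\p_{t_k} u^{\rm WK} = \p_{t_0} R_k[u^{\rm WK};\e^2]$ -- with $R_k$ the Gelfand--Dickey differential polynomials in $u$ and its $t_0$-derivatives -- to the tau-function: tau-symmetry permits integrating twice in $t_0$ consistently, and the inductive hypothesis together with the explicit genus-$0$ formula~\eqref{duwk0} supplies the lower-order data.

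The hard part is establishing the precise Laurent-polynomial structure $F_g\in\mathbb{Q}[z_2,\dots,z_{3g-2}][z_1,z_1^{-1}]$: that only nonnegative powers of $z_2,\dots,z_{3g-2}$ appear, and that the sole singularities are powers of $z_1$. Neither feature follows from homogeneity or the KdV recursion alone. I would handle it via the Virasoro constraints~\eqref{WK-Virasoro}, which at genus~$g$ give inductive formulas expressing $L_k^{\rm WK}\mathcal{F}_g^{\rm WK}$ in terms of lower-genus quantities together with $t_k$-linear corrections; careful bookkeeping of how denominators accumulate under integration of these identities -- essentially the Dubrovin--Zhang argument in the tau-symmetric framework of~\cite{DZ-norm} -- shows that the only singularities introduced at each inductive step are powers of $z_1$, while a parallel analysis on the numerators rules out negative powers of $z_2,\dots,z_{3g-2}$.
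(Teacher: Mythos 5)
First, note that the paper does not prove Lemma~\ref{structurewkfree} at all: it is stated with the citations \cite{DVV, DZ-norm, IZ} and used as known input (the $(3g-2)$-jet structure of the Witten--Kontsevich free energies), so there is no in-paper argument to compare against. Your sketch correctly identifies the classical ingredients --- string and dilaton for $g=1$, the dilaton-induced homogeneity, and the Virasoro/loop-equation machinery for the structure of $F_g$ --- and your genus-one base case is sound (in genus one the dimension constraint $|k|=n$ forces some $k_i\in\{0,1\}$, so string plus dilaton plus $\langle\tau_1\rangle_1=\tfrac1{24}$ really do determine $\mathcal{F}^{\rm WK}_1$, and $\tfrac1{24}\log v^{(1)}$ checks out). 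But for $g\ge2$ the proposal has concrete gaps. (i) The weight argument does not cap the jet order at $3g-2$: with $\deg v^{(k)}=-k$ and negative powers of $v^{(1)}$ allowed, a monomial such as $v^{(N)}\bigl(v^{(1)}\bigr)^{-(N-2g+2)}$ has weight $2-2g$ for every $N$, so homogeneity is compatible with arbitrarily high jets. The bound $3g-2$ is a genuinely separate theorem (the $(3g-2)$-jet property), and invoking ``the dimension constraint implicit in~\eqref{WKtau}'' does not substitute for its proof. (ii) The existence of \emph{any} jet representation --- that $\mathcal{F}^{\rm WK}_g$ depends on $\bt$ only through $v^{\rm WK}$ and its $t_0$-derivatives, and not on $v^{\rm WK}$ itself --- is the content of quasi-triviality; ``tau-symmetry permits integrating twice in $t_0$ consistently'' asserts this rather than establishes it. (Independence from $v$ itself is essentially equivalent to the Galilean symmetry that is the subject of this paper, and is what makes Theorem~\ref{DZGthm} work; it deserves an actual argument.)

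(iii) Most importantly, the part you yourself flag as hard --- that $F_g\in\mathbb{Q}[z_2,\dots,z_{3g-2}][z_1,z_1^{-1}]$, i.e.\ polynomiality in $z_2,\dots,z_{3g-2}$ and denominators confined to powers of $z_1$ --- is resolved only by appeal to ``careful bookkeeping \dots essentially the Dubrovin--Zhang argument in~\cite{DZ-norm}.'' At that point the proposal is no longer a proof but a pointer to the same references the lemma already cites. Since these three items are precisely the nontrivial content of the lemma, the proposal should be regarded as a correct road map with the substantive steps missing, rather than a proof.
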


The following lemma immediately follows from Lemma~\ref{structurewkfree} and Definition~\ref{defnbi}. 

\begin{lem}\label{structureNBIfree}
For $g\ge1$, the genus~$g$ NBI free energy has the expression:
\beq
\mathcal{F}^{\rm NBI}_g(x,\br) = F_g\biggl(\frac{\p v^{\rm NBI}(x,\br)}{\p r_0}, \dots, \frac{\p^{3g-2} v^{\rm NBI}(x,\br)}{\p r_0^{3g-2}}\biggr),
\eeq
where $F_g(z_1,\dots,z_{3g-2})$, $g\ge1$, are the same functions as those in Lemma~\ref{structurewkfree}. 
\end{lem}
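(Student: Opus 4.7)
The plan is to deduce the lemma by composing Lemma~\ref{structurewkfree} with the substitution $\bt=\bt^{\rm WK-NBI}(x,\br)$ coming from Definition~\ref{defnbi}. First I would take the logarithm of~\eqref{NBIDefinition} and extract the coefficient of $\e^{2g-2}$. Since the map $\bt^{\rm WK-NBI}(x,\br)$ does not involve~$\e$, the genus expansion commutes with the substitution and we obtain genus-by-genus
\begin{equation*}
\mathcal{F}^{\rm NBI}_g(x,\br) = \mathcal{F}^{\rm WK}_g\bigl(\bt^{\rm WK-NBI}(x,\br)\bigr),\qquad g\ge 0.
\end{equation*}

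Next I would translate $r_0$-derivatives into $t_0$-derivatives evaluated at the shifted times. Reading off the explicit formula for $t_i^{\rm WK-NBI}(x,\br)$ one sees that at $i=0$ the constants $(-1)!!\cdot 1$ and $-\delta_{i,0}$ cancel, so $t_0^{\rm WK-NBI}(x,\br)=r_0$, and for $j\ge 1$ the quantity $t_j^{\rm WK-NBI}(x,\br)$ does not depend on~$r_0$. Thus $\p t_j^{\rm WK-NBI}/\p r_0=\delta_{j,0}$ for every $j\ge 0$, and the chain rule yields
\begin{equation*}
\frac{\p^k}{\p r_0^k}\Bigl[G\bigl(\bt^{\rm WK-NBI}(x,\br)\bigr)\Bigr] = \Bigl(\frac{\p^k G}{\p t_0^k}\Bigr)\bigl(\bt^{\rm WK-NBI}(x,\br)\bigr)
\end{equation*}
for any power series $G(\bt)$. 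Applied to $G=v^{\rm WK}$ and combined with the identity~\eqref{vvnbiwk}, this gives
\begin{equation*}
\frac{\p^k v^{\rm NBI}(x,\br)}{\p r_0^k} = \Bigl(\frac{\p^k v^{\rm WK}}{\p t_0^k}\Bigr)\bigl(\bt^{\rm WK-NBI}(x,\br)\bigr),\qquad k\ge 0.
\end{equation*}

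Finally I would apply Lemma~\ref{structurewkfree} at the point $\bt=\bt^{\rm WK-NBI}(x,\br)$ and substitute the derivative identity above; this yields the claimed formula with the very same universal functions $F_g$. The argument is entirely formal and I do not anticipate any real obstacle: the only step requiring any verification is the trivial computation $t_0^{\rm WK-NBI}(x,\br)=r_0$ and the independence of $r_0$ of $t_j^{\rm WK-NBI}(x,\br)$ for $j\ge 1$, which together make the chain rule collapse to a single-variable substitution in~$t_0$.
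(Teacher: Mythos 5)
Your proposal is correct and is precisely the argument the paper intends: the paper states that the lemma ``immediately follows from Lemma~\ref{structurewkfree} and Definition~\ref{defnbi}'', and your elaboration (genus expansion commutes with the $\e$-independent time shift, $t_0^{\rm WK-NBI}=r_0$ while $t_j^{\rm WK-NBI}$ for $j\ge1$ is independent of $r_0$, so $\p_{r_0}$-derivatives become $\p_{t_0}$-derivatives at the shifted point) fills in exactly the omitted details.
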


In~\cite{YZQ21, YZQ23}, the following structure lemma for higher genus generalized BGW free energies is proved. 
\begin{lem}[\cite{OS, YZQ21, YZQ23}]\label{structurecBGWfree}
For $g\ge1$, the genus~$g$ generalized BGW free energy has the expression:
\beq
\mathcal{F}^{\rm cBGW}_g(x,\bt) = F_g\biggl(\frac{\p v^{\rm cBGW}(x,\bt)}{\p t_0}, \dots, \frac{\p^{3g-2} v^{\rm cBGW}(x,\bt)}{\p t_0^{3g-2}}\biggr),
\eeq
where $F_g(z_1,\dots,z_{3g-2})$, $g\ge1$, are the same functions as those in Lemma~\ref{structurewkfree}. 
\end{lem}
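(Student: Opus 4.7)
The plan is to combine two inputs: that $Z^{\rm cBGW}(x,\bt;\e)$ is a tau-function of the KdV hierarchy (established in \cite{alexandrov2018, MMS}), and the universality of the jet-representation of higher-genus free energies for any KdV tau-function. Since $u^{\rm cBGW}$ satisfies the KdV hierarchy, the genus expansion
\beq
u^{\rm cBGW} = v^{\rm cBGW} + \sum_{g\ge1} \e^{2g} u^{\rm cBGW}_g
\eeq
yields that $v^{\rm cBGW}$ solves the dispersionless hierarchy~\eqref{displesskdv} and each $u^{\rm cBGW}_g$ is a fixed polynomial in the jets $\p_{t_0}^k v^{\rm cBGW}$, namely the same polynomial that appears for the WK solution, since these polynomials come solely from expanding the KdV equations in powers of~$\e^2$.

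To reconstruct $\mathcal{F}^{\rm cBGW}_g$ from $u^{\rm cBGW}$, I would invoke the Dubrovin--Zhang tau-structure framework for the KdV hierarchy~\cite{DZ-norm}: for any KdV tau-function, the genus~$g \ge 1$ part of the free energy has the form $F_g(\p_{t_0} v, \dots, \p_{t_0}^{3g-2} v)$, with $F_g$ depending only on the hierarchy and on the chosen tau-structure normalization. Since Lemma~\ref{structurewkfree} identifies these functions $F_g$ in the WK case (with the genus-one formula~\eqref{F1witten}), the same functions must govern the cBGW case once normalizations are aligned.

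The main obstacle is fixing this normalization: a KdV tau-function is determined only up to $\tau \mapsto \gamma(\bt;\e)\tau$ with $\gamma = \exp(c_{-1}(\e) + \sum_i c_i(\e) t_i)$, and this ambiguity shifts each $\mathcal{F}_g$ by $\e^{2-2g}$ times an affine function of $\bt$, which in general breaks the clean form $F_g(\text{jets of } v)$. The correction $B(x,\e)$ in~\eqref{cBGW-correction} is designed precisely so that its genus-by-genus contributions cancel the spurious constant and linear terms arising in the natural tau-function normalization of $Z^{\rm gBGW}$. Verifying this cancellation inductively on the genus, using the dispersionless initial value derived from~\eqref{cBGWini} together with the Euler--Lagrange equation~\eqref{ELeq}, is the technical heart of the lemma and is carried out in \cite{OS, YZQ21, YZQ23}.
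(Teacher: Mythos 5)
The paper offers no proof of this lemma---it is quoted directly from \cite{OS, YZQ21, YZQ23} (``In \cite{YZQ21, YZQ23}, the following structure lemma \dots is proved'')---and your outline matches the strategy of those references: universality of the Dubrovin--Zhang jet representation of higher-genus free energies for KdV tau-functions admitting a genus expansion, combined with the check that the correction $B(x,\e)$ in~\eqref{cBGW-correction} removes the affine-in-$\bt$ normalization ambiguity so that the same $F_g$ as in Lemma~\ref{structurewkfree} appear. Since you, like the paper, ultimately defer that normalization verification to the same citations, your proposal is consistent with the paper's treatment; the only imprecision is calling $u^{\rm cBGW}_g$ a \emph{polynomial} in the jets of $v^{\rm cBGW}$---for $g\ge1$ these expressions are rational in $\p v/\p t_0$ (cf. $F_1(z_1)=\tfrac{1}{24}\log z_1$ and $F_g\in\mathbb{Q}[z_2,\dots,z_{3g-2}][z_1,z_1^{-1}]$)---which does not affect the argument.
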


We are ready to prove Lemma~\ref{nbikdv}.
\begin{proof}[Proof of Lemma~\ref{nbikdv}]
Using \eqref{vvnbiwk}, \eqref{ELeq}, we find
\beq \label{vnbiini}
v^{\rm NBI}(x,r_{0},\bigzero)=\frac{\frac{x^2}{4}}{(1-r_0)^2}-\frac{x^2}{4}.
\eeq
From equations \eqref{defucbgw}, \eqref{cBGWini} we know that 
\beq \label{vcbgwini}
v^{\rm cBGW}(x,t_{0},\bigzero)=\frac{\frac{x^2}{4}}{(1-t_0)^2}.
\eeq
Lemma~\ref{nbikdv} then follows from Lemmas \ref{structureNBIfree}, \ref{structurecBGWfree}.
\end{proof}

	Before ending this section, we remark that the Virasoro constraints of $Z^{\rm NBI}(x, \br;\e)$ are explicitly given by 
	\begin{equation}\label{NBI-Virasoroeq}
		L^{\rm NBI}_{k}Z^{\rm NBI}(x,\br;\e)=0,\quad k\geq -1
	\end{equation}
	where
	\begin{align}\label{NBI-Virasoro}
			&L^{\rm NBI}_{k}=\frac{\e^2}{2}\sum_{i+j=k-1}\frac{(2i+1)!!(2j+1)!!}{2^{k+1}}\frac{\p^2}{\p r_{i}\p r_{j}}\notag
			\\
			+&\sum_{i\geq 0}\frac{(2k+2i+1)!!}{2^{k+1}(2i-1)!!}\bigl(r_i+(2i-1)!!\bigl(-\tfrac{2}{x^2}\bigr)^{i}-\delta_{i,0}\bigr)\frac{\p}{\p r_{i+k}}+\delta_{k,0}\frac{1}{16}+\delta_{k,-1}\frac{r_0^2}{2\e^2}.
	\end{align}

\section{Application}\label{S4}
	In this section, as an application of Theorem~\ref{thm-KdV-tau} we give a proof of Theorem~\ref{thm-cBGW-NBI}.
	
	According to~\cite{alexandrov2018, MMS} (cf.~\cite{YZQ21}), the generalized BGW partition function 
	$Z^{\rm cBGW}(x,\bt;\e)$ satisfies the following Virasoro constraints: 
	\begin{equation}\label{cBGW-Virasoro}
		L^{\rm cBGW}_m Z^{\rm cBGW}(x,\bt;\e)=0,\quad m\geq0,
	\end{equation}
	where
	\begin{align}
		L_{m}^{{\rm cBGW}}=\,&\sum_{i\geq 0}\frac{(2i+2m+1)!!}{2^{m+1}(2i-1)!!}(t_{i}-\delta_{i,0})\frac{\p}{\p t_{i+m}}+\frac{\e^2}{2}\sum_{i,j\ge0\atop i+j=m-1}\frac{(2i+1)!!(2j+1)!!}{2^{m+1}}\frac{\p^2}{\p t_i\p t_j}\notag\\
		&+\delta_{m,0}\Bigl(\frac{1}{16}+\frac{x^2}{8\e^2}\Bigr).
	\end{align}

	\begin{proof}[Proof of Theorem~\ref{thm-cBGW-NBI}]
		Denote
		\beq\label{ZC-def}
		\widehat Z(x,\bt;\e)=Z^{\rm NBI}\bigl(x,\bt^{\rm G}\bigl(\bt;\tfrac{x^2}{4}\bigr);\epsilon\bigr) e^{-\frac{\log2}{24}-\frac{3x^2}{8\e^2}+\frac{x^2}{4\e^2}\log(-\frac{x}{2})
			+\frac{C_1(x,\bt)}{\e^2}+\frac{C_2(x,\bt)}{\e^2}},
		\eeq
		and $\widehat{\mathcal{F}}(x,\bt;\e):=\log \widehat Z(x,\bt;\e) =: \sum_{g\ge0} \e^{2g-2} \widehat{\mathcal{F}}_{g}(x,\bt)$. Noting that 
		\beq
		M_1(x,\br;\e)=-C_1\bigl(x,\bt^{\rm G}\bigl(\br,-\tfrac{x^2}{4}\bigr)\bigr)
		,\quad
		M_2(x,\br;\e)=-C_2\bigl(x,\bt^{\rm G}\bigl(\br,-\tfrac{x^2}{4}\bigr)\bigr),
		\eeq
		we know that identity \eqref{cBGW-NBI} and \eqref{NBI-cBGW} are equivalent, 
		so it only suffices to prove \eqref{cBGW-NBI}, i.e., to prove $Z^{\rm cBGW}(x,\bt;\e)=\widehat{Z}(x,\bt;\e)$.
		
		From~\eqref{cBGWini} and Lemma~\ref{nbikdv} we know that $u^{\rm cBGW}(x,\bigzero;\e)$ and  $u^{\rm NBI}(x,\bigzero;\e)$ differ by a constant $x^2/4$.
It follows from Theorem~\ref{thm-KdV-tau} that $\widehat Z(x,\bt;\e)$
is a KdV $\tau$-function of the generalized BGW solution $u^{\rm cBGW}(x,\bt;\e)$. 
		So $\widehat{\mathcal{F}}(x,\bt;\e)-\mathcal{F}^{\rm cBGW}(x,\bt;\e)$ 
		could only be an affine linear function of~$\bt$.

By a direct calculation with the help of the second equation of \eqref{GalileanTransformationOnSolution}, we have
			\begin{align}
			L_0^{\rm cBGW}&=
			\sum_{i\ge 0}\tfrac{2i+1}{2}(t_{i}-\delta_{i,0})\frac{\p}{\p t_{i}} + \frac{1}{16} + \frac{x^2}{8\e^2} \notag\\ 
			&=\sum_{i\geq 0}
			\bigl(\tfrac{2i+1}{2}(r_i-\delta_{i,0})+\tfrac{x^2}{4}r_{i+1}\bigr)  \frac{\p}{\p r_{i}}
			+ \frac{1}{16} + \frac{x^2}{8\e^2}
			\notag\\
			&=e^{
				-\tfrac{M_1(x,\br)}{\e^2}
				-
				\tfrac{M_2(x,\br)}{\e^2}
			}
			\circ
			\bigl(\tfrac{x^2}{4} L_{-1}^{\rm NBI}+L_0^{\rm NBI}\bigr)
			\circ
			e^{
				\tfrac{M_1(x,\br)}{\e^2}
				+
				\tfrac{M_2(x,\br)}{\e^2}
			}.
			\end{align}
		Thus
		\beq
		\tfrac{x^2}{4} L_{-1}^{\rm NBI}+L_0^{\rm NBI}
		=e^{
			-\tfrac{C_1(x,\bt)}{\e^2}
			-
			\tfrac{C_2(x,\bt)}{\e^2}
		}
		\circ
		L_0^{\rm cBGW}
		\circ
		e^{
			\tfrac{C_1(x,\bt)}{\e^2}
			+
			\tfrac{C_2(x,\bt)}{\e^2}
		}.
		\eeq
		It follows that $\widehat{Z}(x,\bt;\e)$ satisfies the $m=0$ case of~\eqref{cBGW-Virasoro}, 
		so the affine linear function $\widehat{\mathcal{F}}(x,\bt;\e)-\mathcal{F}^{\rm cBGW}(x,\bt;\e)$ 
		could only be a constant that can depend on~$x,\e$.
		
	It remains to show that $\widehat{\mathcal{F}}(x,\bigzero;\e)=\mathcal{F}^{\rm cBGW}(x,\bigzero;\e)$, which is 
		equivalent to $\widehat{\mathcal{F}}_g(x,\bigzero)=\mathcal{F}^{\rm cBGW}_g(x,\bigzero)$ for all $g\ge0$.

For $g=0$, we know from \eqref{cBGW-correction}, \eqref{cBGW-structure}, \eqref{duwk0} and~\eqref{vnbiini} that 
 $\mathcal{F}^{\rm cBGW}_{0}(x,\bigzero)$ and $\widehat{\mathcal{F}}_0(x,\bigzero)$ are both equal to 
 $\frac{x^2}{4}\log(-\frac{x}{2})-\frac{3x^2}{8}$. 
 
 For genus bigger than or equal to~1, 
 Lemma~\ref{structureNBIfree}, Lemma~\ref{structurecBGWfree} and formulas \eqref{vnbiini}, \eqref{vcbgwini} we find that 
 $\mathcal{F}^{\rm cBGW}_g(x,\bigzero)$ and $\widehat{\mathcal{F}}_g(x,\bigzero)$ are equal.

	The theorem is proved.
\end{proof}

It follows from Theorem~\ref{thm-cBGW-NBI} and \eqref{Galilean-Correlator} that
for $g \ge 0$, $n\ge 1$, $k_1,\dots, k_n\ge 0$, 
	\begin{align}\label{BGW-NBI-FreeEnergy}
		\frac{\p^{n}\mathcal{F}_{g}^{\rm cBGW}(x,\bt)}{\p t_{k_1}\dots\p t_{k_n}}\bigg|_{\bt=\bigzero}
		&=\sum_{0\leq j_1\leq k_1,\dots,0\leq j_n\leq k_n}
		\frac{(\frac{x^2}{4})^{|k|-|j|}}{\prod_{i=1}^{n}(k_i-j_i)!}\frac{\p^{n}\mathcal{F}_{g}^{\rm NBI}(x,\br)}{\p r_{j_1}\cdots\p r_{j_n}}\bigg|_{\br=0}\notag\\
		&+\biggl(\frac{x^2}{4}\biggr)^{|k|+1}
		\biggl(\frac{\delta_{g,0}\delta_{n,1}}{(k_1+1)!(2k_1+1)}+\frac{\delta_{g,0}\delta_{n,2}}{k_1!k_2!(k_1+k_2+1)}\biggr),
	\end{align}
	and
		\begin{align}\label{NBI-BGW-FreeEnergy}
		&\frac{\p^{n}\mathcal{F}_{g}^{\rm NBI}(x,\br)}{\p r_{k_1}\dots\p r_{k_n}}\bigg|_{\br=\bigzero}
		=\sum_{0\leq j_1\leq k_1,\dots,0\leq j_n\leq k_n }\frac{(-\frac{x^2}{4})^{|k|-|j|}}{\prod_{i=1}^{n}(k_i-j_i)!}\frac{\p^{n}\mathcal{F}_{g}^{\rm cBGW}(x,\bt)}{\p t_{j_1}\cdots\p t_{j_n}}\bigg|_{\bt=0}\notag\\
		&+\biggl(-\frac{x^2}{4}\biggr)^{|k|+1}\biggl(\frac{\delta_{g,0}\delta_{n,1}}{(k_1+1)!}\biggl(\frac{2^{k_1+1}(k_1+1)!}{(2k_1+1)!!}-1\biggr)+\frac{\delta_{g,0}\delta_{n,2}}{k_1!k_2!(k_1+k_2+1)}\biggr).
	\end{align}

	The following corollary gives a generalization of a result of Kazarian--Norbury~\cite{KN}.
	
\begin{cor}\label{cor-cBGW-Kappa}
For $n\ge1$, $g\ge0$, and $k_1,\dots,k_n\ge0$ satisfying $|{k}|<g-1$, 
\beq\label{vanishingKN}
\int_{\overline{\mathcal{M}}_{g,n}} K_{3g-3+n-|k|}\psi^{k_1}_1\cdots\psi^{k_n}_n=0.
\eeq
For $n\geq 1$, $g\ge0$, and $k_1,\dots,k_n\geq 0$ satisfying $|k|\ge g-1$, 
\begin{align}
c_{g}(k_1,\dots,k_n)=&2^{g-1-2|k|}
			\sum_{0\leq l_1\leq k_1,\dots,0\leq l_n\leq k_n \atop g-1\le|l|\le 3g-3+n }
			\int_{\overline{\mathcal{M}}_{g,n}}K_{3g-3+n-|l|} \prod_{i=1}^{n}\frac{2^{l_i}\psi^{l_i}_{i}}{(k_i-l_i)!}\notag \\
			+&\frac{\delta_{n,1}\delta_{g,0}}{2^{k_1+1}(k_1+1)!(2k_1+1)}+\frac{\delta_{n,2}\delta_{g,0}}{2^{k_1+k_2+1}k_1!k_2!(k_1+k_2+1)}. 
			\label{cor-cBGW-Kappaeq}
\end{align}
For $n=0$ and $g\ge2$, 
\beq\label{zeropointK}
\int_{\overline{\mathcal{M}}_{g,0}} K_{3g-3} = \frac{(-1)^g B_{2g}}{2g(2g-2)}.
\eeq
\end{cor}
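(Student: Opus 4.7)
My plan is to exploit the infinitesimal form~\eqref{BGW-NBI-FreeEnergy} of the BGW--NBI correspondence (Theorem~\ref{thm-cBGW-NBI}) together with the explicit $\psi$--$K$ expansion of the NBI partition function given by Proposition~\ref{NBIkappaprop} and the constrained polynomial shape of $\mathcal{F}^{\rm cBGW}$ given by~\eqref{cBGW-structure}. All three statements should then follow by reading off appropriate coefficients of the resulting identities between integrals over $\overline{\mathcal{M}}_{g,n}$ and quantities depending on $x$ and~$\e$.

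For the vanishing~\eqref{vanishingKN}, I would first note that~\eqref{cBGW-structure} forces $\p^n\mathcal{F}^{\rm cBGW}_g/\p t_{k_1}\cdots\p t_{k_n}\big|_{\bt=\bigzero}=0$ whenever $g,n\ge1$ and $|k|<g-1$ (vacuous for $g\le 1$, so only $g\ge 2$ is at stake). Substituting Proposition~\ref{NBIkappaprop} into the right-hand side of~\eqref{BGW-NBI-FreeEnergy}, the $\delta_{g,0}$ correction drops out, and cancelling the common nonzero factor $x^{2|k|-2g+2}\cdot 2^{g-1-2|k|}$ yields the triangular linear recursion
\beq
\sum_{0\le j_i\le k_i}\frac{2^{|j|}}{\prod_{i=1}^n(k_i-j_i)!}\int_{\overline{\mathcal{M}}_{g,n}} K_{3g-3+n-|j|}\prod_{i=1}^{n}\psi_i^{j_i}=0, \qquad |k|<g-1.
\eeq
An induction on $|k|$ --- with base case $k=\bigzero$ giving $\int K_{3g-3+n}=0$ and the inductive step isolating the $j=k$ summand because every strictly smaller componentwise $j\le k$ satisfies $|j|<|k|<g-1$ and thus falls under the inductive hypothesis --- then yields~\eqref{vanishingKN}.

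For the explicit formula~\eqref{cor-cBGW-Kappaeq}, I would run the same substitution, now with the nonzero value $\p^n\mathcal{F}^{\rm cBGW}_g/\p t_{k_1}\cdots\p t_{k_n}\big|_{\bt=\bigzero}=x^{2|k|-2g+2}c_g(k_1,\dots,k_n)$ read off from~\eqref{cBGW-structure} on the left-hand side of~\eqref{BGW-NBI-FreeEnergy}. The vanishing just established restricts the inner $|j|$-sum to $g-1\le|j|\le 3g-3+n$, and a short bookkeeping of powers of $x$ and~$2$, together with the $\delta_{g,0}$ corrections from~\eqref{BGW-NBI-FreeEnergy} (which in genus zero survive as the sole contributions since the inner sum is then forced to be empty), produces precisely~\eqref{cor-cBGW-Kappaeq}.

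Finally, for~\eqref{zeropointK} I would specialise~\eqref{NBI-cBGW} at $\br=\bigzero$; here $\bt^{\rm G}(\bigzero;-x^2/4)=\bigzero$ and $M_1(x,\bigzero)=M_2(x,\bigzero)=0$, so that
\beq
\mathcal{F}^{\rm NBI}(x,\bigzero;\e)=\mathcal{F}^{\rm cBGW}(x,\bigzero;\e)+\frac{\log 2}{24}+\frac{3x^2}{8\e^2}-\frac{x^2}{4\e^2}\log\Bigl(-\frac{x}{2}\Bigr).
\eeq
For each $g\ge 2$, the $\e^{2g-2}$-coefficient of the left-hand side is $(x^2/2)^{1-g}\int_{\overline{\mathcal{M}}_{g,0}} K_{3g-3}$ by Proposition~\ref{NBIkappaprop}, while that of the right-hand side is $x^{2-2g}(-1)^g 2^{g-1}B_{2g}/(2g(2g-2))$ by~\eqref{cBGW-correction}; equating the two gives~\eqref{zeropointK}. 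The main obstacle will be the careful bookkeeping of prefactors and powers of $2$ and $x$ throughout, and the clean setup of the induction in the vanishing step; once~\eqref{BGW-NBI-FreeEnergy} and Proposition~\ref{NBIkappaprop} are in hand, the logical skeleton is straightforward.
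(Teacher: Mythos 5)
Your proposal is correct and, for the second and third identities, coincides with the paper's argument: \eqref{cor-cBGW-Kappaeq} is obtained by feeding the expansions from Proposition~\ref{NBIkappaprop} and \eqref{cBGW-structure} into \eqref{BGW-NBI-FreeEnergy}, and \eqref{zeropointK} by setting $\br=\bigzero$ in \eqref{NBI-cBGW} and comparing $\e^{2g-2}$-coefficients against \eqref{cBGW-correction}. The one genuine difference is in how you reach the vanishing \eqref{vanishingKN}: you substitute into \eqref{BGW-NBI-FreeEnergy}, which only tells you that the triangular combination
\beq
\sum_{0\le j_1\le k_1,\dots,0\le j_n\le k_n}\frac{2^{|j|}}{\prod_{i=1}^n(k_i-j_i)!}\int_{\overline{\mathcal{M}}_{g,n}}K_{3g-3+n-|j|}\prod_{i=1}^n\psi_i^{j_i}
\eeq
vanishes for $|k|<g-1$, and you then need the induction on $|k|$ to isolate each individual integral. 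The paper instead substitutes into the \emph{inverse} relation \eqref{NBI-BGW-FreeEnergy}: there the left-hand side is, up to the nonzero prefactor $(x^2/2)^{|k|+1-g}$, exactly the single integral in \eqref{vanishingKN}, while every term on the right-hand side carries a factor $\delta_{|j|\ge g-1}$ with $|j|\le|k|<g-1$ and hence dies term by term, so no induction is needed. Both routes are valid and rest on the same inputs; the paper's choice of direction just makes the vanishing a one-line observation, whereas yours requires (and correctly supplies) the triangular-inversion argument.
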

	
We note that formula \eqref{vanishingKN} and the $|k|=g-1$ 
case of formula \eqref{cor-cBGW-Kappaeq} 
were obtained by Kazarian and Norbury \cite{KN}, and were re-proved by Chidambaram, Garcia-Failde and Giacchetto 
in~\cite{CGFG}. 
In particular, the $|k|=g-1$ case of \eqref{cor-cBGW-Kappaeq} reads
\beq\label{BGWKappa14}
c_g(k_1,\dots,k_n) = \int_{\overline{\mathcal{M}}_{g,n}} K_{2g-2+n}\psi^{k_1}_{1}\cdots\psi^{k_n}_n,\quad g\ge 1,\, n\ge 1,\, |k|=g-1.
\eeq
We also note that formula~\eqref{zeropointK} was conjectured in~\cite{KN}, and proved in \cite{CGFG, YZQ21} 
(the proof of~\eqref{zeropointK} in~\cite{CGFG} uses topological recursion of Chekhov--Eynard--Orantin type, 
and our proof here will be similar to the one in~\cite{YZQ21} that uses the Hodge--BGW 
correspondence). 
	
	\begin{proof}[Proof of Corollary~\ref{cor-cBGW-Kappa}]
		It follows from Proposition~\ref{NBIkappaprop} and \eqref{cBGW-structure} that for $n\ge1$,
		\beq\label{cal-NBI-kappa}
		\frac{\p^n\mathcal{F}_{g}^{\rm NBI}(x,\br)}{\p r_{k_1}\dots \p r_{k_n}}\bigg|_{\br=0}
		=\delta_{|k|\le 3g-3+n}\biggl(\frac{x^2}{2}\biggr)^{|k|+1-g}
		\int_{\overline{\mathcal{M}}_{g,n}}K_{3g-3+n-|k|}
		\prod_{i=1}^{n}\psi_{i}^{k_i},
		\eeq
		\beq\label{cal-cBGW}
		\frac{\p^n\mathcal{F}_{g}^{\rm cBGW}(x,\bt)}{\p t_{j_1}\dots \p t_{j_n}}\bigg|_{\bt=0}
		=\delta_{|j|\ge g-1} x^{2|j|-2g+2}c_g(j_1,\dots,j_n).
		\eeq
		When $|k| < g-1$, substituting \eqref{cal-NBI-kappa}, \eqref{cal-cBGW} in \eqref{NBI-BGW-FreeEnergy}, we obtain~\eqref{vanishingKN}.
When $|k|\ge g-1$, substituting \eqref{cal-NBI-kappa}, \eqref{cal-cBGW} in \eqref{BGW-NBI-FreeEnergy}, we get \eqref{cor-cBGW-Kappaeq}.  
		For $n=0$, by taking ${\bf r}=\bigzero$ on both sides of~\eqref{NBI-cBGW} we obtain~\eqref{zeropointK}.
	\end{proof}		

\begin{cor}\label{cor-Kappa-cBGW}
		For $g\ge0$, $n\ge1$ and $k_1,\dots,k_n\ge0$ satisfying $g-1\le|k|\le3g-3+n$,
		\beq \label{NBI-cBGW-relations}
		\int_{\overline{\mathcal{M}}_{g,n}}K_{3g-3+n-|k|}\prod_{i=1}^n \psi_i^{k_i}
		=2^{|k|+1-g} \sum_{0\le j_1\le k_1,\cdots,0\le j_n\le k_n \atop |j|\le g-1}
		\frac{(-\tfrac14)^{|j|-|k|}}{\prod_{i=1}^{n}(k_i-j_i)!}c_g(j_1,\dots,j_n).
		\eeq
		For $g\ge0$, $n\ge1$ and $k_1,\dots,k_n\ge0$ satisfying $|k|>3g-3+n$,
		\begin{align} \label{cBGW-identities}
			&\sum_{0\le j_1\le k_1,\dots,0\le j_n\le k_n \atop |j|\le g-1}
			\frac{(-\tfrac14)^{|k|-|j|}}{\prod_{i=1}^{n}(k_i-j_i)!}c_g(j_1,\dots,j_n) \notag \\
			&=-\delta_{g,0}\delta_{n,1}
			\biggl(\frac{(-\tfrac12)^{k_1+1}}{(2k_1+1)!!}-\frac{(-\tfrac14)^{k_1+1}}{(k_1+1)!}\biggr)
			-
			\frac{\delta_{g,0}\delta_{n,2}(-\tfrac14)^{k_1+k_2+1}}{k_1!k_2!(k_1+k_2+1)}.
		\end{align}
\end{cor}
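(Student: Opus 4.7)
The plan is to mirror the proof of Corollary~\ref{cor-cBGW-Kappa}, but use the opposite direction of the triangular NBI--cBGW expansion: since we want to extract $\kappa$--$\psi$ integrals in terms of cBGW numbers, the relevant identity is~\eqref{NBI-BGW-FreeEnergy} rather than~\eqref{BGW-NBI-FreeEnergy}. First I would substitute~\eqref{cal-NBI-kappa} on the left-hand side and~\eqref{cal-cBGW} on the right-hand side of~\eqref{NBI-BGW-FreeEnergy}. The key simplification is the identity $(-x^2/4)^{|k|-|j|}\cdot x^{2|j|-2g+2}=x^{2|k|-2g+2}(-\tfrac14)^{|k|-|j|}$, which collapses the $x$-dependence of the ``dominant'' sum on the right into a single overall factor $x^{2|k|-2g+2}$. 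The correction piece in~\eqref{NBI-BGW-FreeEnergy} carries the factor $(-x^2/4)^{|k|+1}$ and is supported only on $g=0$.

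Then the argument splits according to the size of $|k|$. In the first range $g-1\le|k|\le 3g-3+n$, the indicator $\delta_{|k|\le 3g-3+n}$ in~\eqref{cal-NBI-kappa} equals $1$, so one can solve for $\int_{\overline{\mathcal{M}}_{g,n}}K_{3g-3+n-|k|}\prod\psi_i^{k_i}$: either $g\ge 1$ (in which case the $\delta_{g,0}$ kills the correction), or $g=0$, in which case $|k|\le n-3$ forces $n\ge 3$ and so kills the $\delta_{n,1}$, $\delta_{n,2}$ parts of the correction. Cancelling the common $x$-factor and converting $(x^2/2)^{|k|+1-g}$ on the left produces the prefactor $2^{|k|+1-g}$ on the right and yields~\eqref{NBI-cBGW-relations}. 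In the complementary range $|k|>3g-3+n$, the left-hand side of~\eqref{NBI-BGW-FreeEnergy} vanishes, and the identity becomes a cancellation between the dominant sum (carrying $x^{2|k|-2g+2}$) and the correction (carrying $x^{2|k|+2}$); matching powers of $x$ forces the nontrivial constraint to occur only in the $g=0$ case, which is consistent with the support of the correction. Dividing by $x^{2|k|+2}$ and using the elementary rewriting $\tfrac{(-1/4)^{k_1+1}}{(k_1+1)!}\bigl(\tfrac{2^{k_1+1}(k_1+1)!}{(2k_1+1)!!}-1\bigr)=\tfrac{(-1/2)^{k_1+1}}{(2k_1+1)!!}-\tfrac{(-1/4)^{k_1+1}}{(k_1+1)!}$ produces the $\delta_{g,0}\delta_{n,1}$ contribution in~\eqref{cBGW-identities}; the $\delta_{g,0}\delta_{n,2}$ contribution is immediate from the $n=2$ piece of the correction term.

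The computation is essentially bookkeeping, and no substantial obstacle is expected. The main care required is in tracking the powers of $x$ on both sides (in particular the $x$-collapse identity just mentioned) and in verifying that the $g=0$, $n=1$ correction splits cleanly into the two displayed contributions with signs consistent with~\eqref{cBGW-identities}.
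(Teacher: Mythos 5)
Your proposal is correct and is exactly the paper's proof, which consists of the single sentence ``substitute \eqref{cal-NBI-kappa} and \eqref{cal-cBGW} into \eqref{NBI-BGW-FreeEnergy}''; your case analysis on $|k|$ versus $3g-3+n$, the collapse of the $x$-powers, and the elementary rewriting of the $g=0$, $n=1$ correction are just the bookkeeping the authors leave implicit.
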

	
\begin{proof} %[Proof of Corollary~\ref{cor-cBGW-Kappa}]
We obtain \eqref{NBI-cBGW-relations} and \eqref{cBGW-identities} by substituting \eqref{cal-NBI-kappa}, \eqref{cal-cBGW} in \eqref{NBI-BGW-FreeEnergy}. 
\end{proof}

Let us now use the BGW-NBI correspondence to give a new proof of the following 
corollary.
\begin{cor}[WK-BGW correspondence \cite{YZ2023}]\label{thm3}
		We have the identity:
		\begin{equation}
			Z^{\rm cBGW}(x,\bt;\e)=Z^{\rm WK}(\bt^{\rm WK-cBGW}(x,\bt);\sqrt{-4}\e)e^{\frac{A(x,\bt)}{\e^2}},
		\end{equation}
		where
		\begin{equation}
			t_{n}^{\rm WK-cBGW}(x,\bt):=-2\sum_{k\ge0}\frac{(-1)^{n}}{k!}t_{n+k}+x\frac{(2n-1)!!}{2^n}+2\delta_{n,0}+\delta_{n,1},\quad n\ge 0,
		\end{equation}
		and
		\begin{equation}
			A(x,\bt)=\frac{1}{2}\sum_{i,j\geq 0}\frac{(t_{i}-\delta_{i,0})(t_{j}-\delta_{j,0})}{i!j!(i+j+1)}-x\sum_{i\geq 0}\frac{t_{i}-\delta_{i,0}}{i!(2i+1)}.
		\end{equation}
	\end{cor}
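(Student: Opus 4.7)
The plan is to derive this WK--BGW correspondence by composing the BGW--NBI correspondence (Theorem~\ref{thm-cBGW-NBI}) with the defining relation $Z^{\rm NBI}(x,\br;\e)=Z^{\rm WK}(\bt^{\rm WK-NBI}(x,\br);\e)$ of Definition~\ref{defnbi}, and then using the intrinsic scaling symmetries of $Z^{\rm WK}$ to convert the argument $\e$ into $\sqrt{-4}\,\e$.

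First, substituting Definition~\ref{defnbi} with $\br=\bt^{\rm G}(\bt;x^2/4)$ into identity~\eqref{cBGW-NBI} writes $Z^{\rm cBGW}(x,\bt;\e)$ explicitly as $Z^{\rm WK}(\bar\bt;\e)$ times an explicit exponential factor inherited from Theorem~\ref{thm-cBGW-NBI}, where the composite time shift is
$$
\bar\bt_n=\sum_{k\ge0}\tfrac{(x^2/4)^k}{k!}t_{n+k}+(2n-1)!!\bigl(-\tfrac{2}{x^2}\bigr)^n-\delta_{n,0}+\delta_{n,1}.
$$
Second, to produce the desired $Z^{\rm WK}(\cdot;\sqrt{-4}\,\e)$, I would invoke the two commuting scaling symmetries of $Z^{\rm WK}$: the weight-grading invariance $Z^{\rm WK}(\{\lambda^{k-1}t_k\};\lambda^{-3/2}\e)=Z^{\rm WK}(\bt;\e)$, which follows from the degree-dimension matching $|k|=3g-3+n$ for nonvanishing $\psi$-class integrals in~\eqref{WKtau}; and the integrated dilaton symmetry $Z^{\rm WK}(ct_0,c(t_1-1)+1,ct_2,\dots;c\e)=c^{-1/24}Z^{\rm WK}(\bt;\e)$, obtained by integrating~\eqref{WKdilaton}. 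Combining these into a two-parameter family and choosing $(c,\lambda)$ so that $c\lambda^{-3/2}=\sqrt{-4}$ rewrites $Z^{\rm WK}(\bar\bt;\e)$ as $c^{1/24}Z^{\rm WK}(\bar\bt^*;\sqrt{-4}\,\e)$ with an explicit linear rescaling $\bar\bt\mapsto\bar\bt^*$.

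Finally, one checks that $\bar\bt^*$ agrees with $\bt^{\rm WK-cBGW}(x,\bt)$ and that the accumulated exponential prefactor equals $\exp(A(x,\bt)/\e^2)$. The main obstacle will be the careful bookkeeping here: the quadratic-in-$\bt$ correction $C_2/\e^2$ and the linear-in-$\bt$ correction $C_1/\e^2$ from Theorem~\ref{thm-cBGW-NBI}, together with the $\bt$-independent log-type corrections coming from~\eqref{cBGW-correction} and the constants produced by the dilaton/grading rescaling, must conspire to yield precisely $A(x,\bt)/\e^2$ of the stated form. A convenient closing consistency check is to restrict to $\bt=\bigzero$: by~\eqref{cBGW-correction}, \eqref{vcbgwini}, and Lemmas~\ref{structurewkfree},~\ref{structurecBGWfree}, both sides admit explicit genus expansions that can be compared term by term, which pins down any remaining constants of integration in $x$ and $\e$.
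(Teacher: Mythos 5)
Your first step and your catalogue of WK scaling symmetries are both correct, but the plan breaks precisely at the point you defer to ``careful bookkeeping'': no choice of the pair $(c,\lambda)$ can make $\bar\bt^*$ equal to $\bt^{\rm WK-cBGW}(x,\bt)$. The reason is structural. Both of your symmetries act on the argument of $Z^{\rm WK}$ by an affine substitution whose linear part is diagonal, $t_n\mapsto c\lambda^{n-1}t_n$, and this substitution is composed on the \emph{outside} of the affine map $\bt\mapsto\bar\bt$. Hence the linear part of $\bar\bt^*_n$ carries the coefficient $c\lambda^{n-1}(x^2/4)^k/k!$ in front of $t_{n+k}$, while the target $t_n^{\rm WK-cBGW}$ carries $-2(-1)^n/k!$; matching $k=0$ and $k=1$ simultaneously would force $x^2/4=1$. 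Equivalently, writing $G_q$ for the Galilean map $t_n\mapsto\sum_k q^k t_{n+k}/k!$ and $D_\lambda$ for the diagonal rescaling $t_n\mapsto\lambda^n t_n$, one has $G_q\circ D_\lambda=D_\lambda\circ G_{q\lambda}$, so to renormalize the Galilean parameter from $x^2/4$ to $1$ a diagonal rescaling must be inserted \emph{before} the Galilean shift, not after it.

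This is exactly the ingredient the paper's proof supplies and your proposal omits: the integrated $m=0$ Virasoro constraint of $Z^{\rm cBGW}$ itself (equation~\eqref{cBGW-Virasoro}), which yields $Z^{\rm cBGW}(x,\bt;\e)=(-2/x)^{\frac18+\frac{x^2}{4\e^2}}\,Z^{\rm cBGW}(x,\bt;\e)\big|_{t_k\mapsto(-2/x)^{2k+1}(t_k-\delta_{k,0})+\delta_{k,0}}$. Applying the BGW--NBI correspondence \emph{after} this substitution, the factor $(-2/x)^{2k}=(4/x^2)^k$ cancels the $(x^2/4)^k$ inside $t_n^{\rm G}(\,\cdot\,;x^2/4)$, producing $(-2/x)^{2n+1}\sum_{k}(t_{n+k}-\delta_{n+k,0})/k!+\cdots$, whose linear part is a diagonal matrix composed with $G_1$, as required. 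From that point your WK-side step does work: the grading-plus-dilaton two-parameter family you describe generates the same group as the paper's combination of the $k=0$ WK Virasoro constraint and the dilaton equation, and your $\bt=\bigzero$ consistency check is a legitimate way to fix the remaining $x,\e$-dependent constants. So the argument is repairable, but the $L_0^{\rm cBGW}$ pre-rescaling is an essential missing idea rather than bookkeeping.
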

	
\begin{proof}
		Using the $m=0$ equation in the Virasoro constraints \eqref{cBGW-Virasoro}, we obtain 
\begin{align}\label{WK-cBGW-calc1}
			Z^{\rm cBGW}(x,\bt;\e)=\Bigl(-\frac{2}{x}\Bigr)^{\frac{1}{8}+\frac{x^2}{4\e^2}}
			Z^{\rm cBGW}(x,\bt;\e)\big|_{t_k\mapsto (-\frac{2}{x})^{2k+1}(t_k-\delta_{k,0})+\delta_{k,0},\,k\ge 0}
			.
			\end{align}
		Using Theorem~\ref{thm-cBGW-NBI} and Definition~\ref{defnbi}, we find
\begin{align}
&Z^{\rm cBGW}(x,\bt;\e)\notag\\
&=\bigl(-\tfrac{4}{x^3}\bigr)^{\frac{1}{24}}e^{\frac{A(x,\bt)}{\e^2}} 
Z^{\rm WK}(\bt;\e)\big|_{t_{n}\mapsto(-\frac{2}{x})^{2n+1}\sum_{i\ge0}\frac{1}{i!}(t_{n+i}-\delta_{n+i,0})+(2i-1)!! (-\frac{2}{x^2})^n+\delta_{n,1}}.
\label{WK-cBGW-calc4}
\end{align}
	The theorem is then proved with the help of the $k=0$ case of~\eqref{WK-Virasoro} and \eqref{WKdilaton}.
\end{proof}

From the above proof we see that, essentially speaking, 
the WK-BGW correspondence and the BGW-NBI correspondence are equivalent. 
In~\cite{YZ2023}, it was used the Hodge-BGW correspondence~\cite{YZQ21} and the Hodge-WK correspondence~\cite{YZ2023} to prove 
the WK-BGW correspondence; 
our proof here, which is based on the BGW-NBI correspondence, is a different one. 

Let $\mathbb{E}_{g,n}$ be the Hodge bundle on $\overline{\mathcal{M}}_{g,n}$, and $\Lambda(z):=\sum_{j=0}^{g}\lambda_{j}z^{j}$ the Chern polynomial of $\mathbb{E}_{g,n}$. It directly follows from Corollary~\ref{cor-cBGW-Kappa} and the Hodge-BGW correspondence~\cite{YZQ23} the 
following corollary.
	\begin{cor}
		For $g\ge0$, $n\geq 1$ and $k_1,\dots,k_n\geq 0$,
		\begin{align}
			&\sum_{0\leq l_1\leq k_1,\dots,0\leq l_n\leq k_n \atop g-1\le|l|\le 3g-3+n }
			\int_{\overline{\mathcal{M}}_{g,n}}K_{3g-3+n-|l|} \prod_{i=1}^{n}\frac{2^{l_i}\psi^{l_i}_{i}}{(k_i-l_i)!}
			+\frac{2^{k_1}\delta_{n,1}\delta_{g,0}}{(k_1+1)!(2k_1+1)}+\frac{2^{k_1+k_2}\delta_{n,2}\delta_{g,0}}{k_1!k_2!(k_1+k_2+1)}
			\notag \\
			&=\frac{(-1)^{g-1+n}2^{2|k|+g-1+n}}{(2|k|-2g+2)!\prod_{i=1}^{n}k_i!}\int_{\overline{\mathcal{M}}_{g,n+2|k|-2g+2}}\Lambda(-1)^2 \Lambda\Bigl(\frac{1}{2}\Bigr)e^{\sum_{d\ge 1}\frac{(-1)^{d-1}\kappa_{d}}{2^d d}}\prod_{i=1}^{n}\frac{1}{1+\frac{2 k_i+1}{2}\psi_i}.
		\end{align}
	\end{cor}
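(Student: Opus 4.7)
The plan is to combine Corollary~\ref{cor-cBGW-Kappa}, which has already been proved in the paper, with the Hodge--BGW correspondence of~\cite{YZQ23}. The key observation is that, up to a universal power of~$2$, the left-hand side of the desired identity is nothing but the BGW number $c_g(k_1,\dots,k_n)$ in disguise, and the Hodge--BGW correspondence then expresses that number as precisely the Hodge integral on the right.

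First, I would multiply both sides of~\eqref{cor-cBGW-Kappaeq} by $2^{2|k|+1-g}$. The resulting identity reads
\begin{align*}
2^{2|k|+1-g}\, c_g(k_1,\dots,k_n)
&=\sum_{0\le l_1\le k_1,\dots,0\le l_n\le k_n \atop g-1\le|l|\le 3g-3+n}
\int_{\overline{\mathcal{M}}_{g,n}} K_{3g-3+n-|l|} \prod_{i=1}^{n}\frac{2^{l_i}\psi_i^{l_i}}{(k_i-l_i)!} \\
&\quad+\frac{2^{k_1}\delta_{n,1}\delta_{g,0}}{(k_1+1)!(2k_1+1)}
+\frac{2^{k_1+k_2}\delta_{n,2}\delta_{g,0}}{k_1!k_2!(k_1+k_2+1)},
\end{align*}
since once the Kronecker deltas force $g=0$ the overall factors $2^{2|k|+1-g}/2^{k_1+1}$ and $2^{2|k|+1-g}/2^{k_1+k_2+1}$ collapse to $2^{k_1}$ and $2^{k_1+k_2}$ respectively. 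In particular, the left-hand side of the statement to be proved is exactly $2^{2|k|+1-g}\,c_g(k_1,\dots,k_n)$.

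Next, I would invoke the Hodge--BGW correspondence of~\cite{YZQ23}, which expresses each $c_g(k_1,\dots,k_n)$ as a linear Hodge integral on $\overline{\mathcal{M}}_{g,n+2|k|-2g+2}$ involving $\Lambda(-1)^2\Lambda(1/2)$, the exponential of a linear combination of $\kappa$-classes, and the appropriate $\psi$-class insertions, with a combinatorial prefactor of the shape $\frac{(-1)^{g-1+n}2^{2g-2+n}}{(2|k|-2g+2)!\prod_{i=1}^{n}k_i!}$. Substituting this formula into the identity above produces the target right-hand side: the factor $2^{2g-2+n}$ combines with $2^{2|k|+1-g}$ into the expected $2^{2|k|+g-1+n}$, and the sign, factorials, and integrand match directly.

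The only (mild) obstacle is bookkeeping: one must verify that the sign, power of~$2$, and factorial prefactor in the Hodge--BGW correspondence of~\cite{YZQ23} match the conventions used here, and that the specific $\kappa$-class exponent $\sum_{d\ge1}\frac{(-1)^{d-1}\kappa_d}{2^d d}$ appearing on the right-hand side is the one recorded in that reference (rather than some equivalent reparametrization). Once these conventions are aligned, no new geometric input beyond Corollary~\ref{cor-cBGW-Kappa} and~\cite{YZQ23} is needed, and the corollary reduces to the arithmetic above.
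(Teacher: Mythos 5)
Your proposal is correct and follows exactly the route the paper takes: the paper states that the corollary ``directly follows from Corollary~\ref{cor-cBGW-Kappa} and the Hodge-BGW correspondence~\cite{YZQ23}'', which is precisely your rescaling of~\eqref{cor-cBGW-Kappaeq} by $2^{2|k|+1-g}$ followed by substitution of the Hodge--BGW formula for $c_g(k_1,\dots,k_n)$. Your bookkeeping of the powers of~$2$ in the $\delta_{g,0}$ terms and in the prefactor is consistent with the stated identity, so nothing is missing beyond the convention check against~\cite{YZQ23} that you already flag.
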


Similar to the BGW-NBI correspondence, 
we hope that the Galilean symmetry considered in this paper (Theorems \ref{thm-of-kdv-solution}, \ref{thm-KdV-tau})
can be useful in the study of the phase transition in two-dimensional topological gravity (cf.~\cite{Zhou}).

\section{Generalizations}\label{S5}
In this section, we generalize Theorems \ref{thm-of-kdv-solution}, \ref{thm-KdV-tau} to Frobenius manifolds.

Recall that a {\it Frobenius structure of charge~$D$} \cite{Du96} on a complex manifold~$M$ is a family of 
Frobenius algebras $(T_p M, \, e_p, \, \langle\,,\,\rangle)_{p\in M}$, 
depending holomorphically on~$p$ and satisfying the following three axioms:
\begin{itemize}
\item[{\bf A1}]  The metric $\langle\,,\,\rangle$ is flat; moreover, 
denote by~$\nabla$ the Levi--Civita connection of~$\langle\,,\,\rangle$, then it is required that 
\beq\label{flatunity511}
\nabla e = 0.
\eeq
\item[{\bf A2}] Define a $3$-tensor field~$c$ by 
$c(X,Y,Z):=\langle X\cdot Y,Z\rangle$, for $X,Y,Z$ being holomorphic vector fields on~$M$. 
Then the $4$-tensor field $\nabla c$ is required to be 
symmetric. 
\item[{\bf A3}]
There exists a holomorphic vector field~$E$ on~$M$ satisfying
\begin{align}
& \nabla\nabla E = 0, \label{linear}\\
&  [E, X\cdot Y]-[E,X]\cdot Y - X\cdot [E,Y] = X\cdot Y, \label{E2}\\
&  E \langle X,Y \rangle - \langle [E,X],Y\rangle - \langle X, [E,Y]\rangle 
= (2-D) \langle X,Y\rangle. \label{E3}
\end{align}
\end{itemize}
Here, by a Frobenius algebra $(A, \, e, \, \langle\,,\,\rangle)$ 
we mean a commutative and associative algebra~$A$ over~$\CC$
with unity~$e$ and with a symmetric, non-degenerate and invariant bilinear product $\langle\,,\,\rangle$. 
A complex manifold endowed with a Frobenius structure of charge~$D$ is  
called a {\it Frobenius manifold of charge~$D$}, with $E$ being called the {\it Euler vector field}.

Let $M$ be a Frobenius manifold. Recall that the deformed flat connection~\cite{Du96} on the Frobenius manifold~$M$, 
denoted by~$\widetilde{\nabla}(z)$, $z\in\CC$,
is a one-parameter family of flat affine connections on the tangent bundle $TM$, defined by 
\beq\label{defDubrovinconn}
\widetilde \nabla(z)_{X} Y := \nabla_X Y + z \, X \cdot Y, \quad \forall \, X,Y\in \mathcal{X}(M).
\eeq
One can extend~\cite{Du96} $\widetilde{\nabla}(z)$ to a flat connection~$\widetilde{\nabla}$ on $M\times \CC^*$ as follows:
\beq\label{defiextendedDubconn}
\widetilde \nabla_{\frac{\p}{\p z}} X := \frac{\p X}{\p z} + E \cdot X - \frac1 z \mathcal{V} X, \quad 
\widetilde \nabla_{\frac{\p}{\p z}} \frac{\p}{\p z} := 0,  \quad  \widetilde \nabla_X \frac{\p}{\p z} := 0
\eeq
for $X$ being any holomorphic vector field on $M \times \mathbb{C}^*$ with 
zero component along  $\frac{\p}{\p z}$. Here, 
\beq
\mathcal{V}:=\frac{2-d}2 {\rm id} - \nabla E.
\eeq
The connection $\widetilde{\nabla}$ is called the {\it extended deformed flat connection}.

Take  
 ${\bf v}=(v^1,\dots,v^n)$ a flat coordinate system satisfying $e=\p_\iota$. Here and below, $\p_\alpha:=\p/\p v^\alpha$. 
 We assume that 
the flat coordinates can be chosen so that the Euler vector field~$E$ has the form 
\begin{align}
&E=\sum_{\beta=1}^n \bigl(\bigl(1-\tfrac{D}2-\mu_\beta\bigr) v^\beta + r^\beta\bigr) \p_{\beta} , \label{E517} 
\end{align}
where $\mu_1,\dots,\mu_n, r^1,\dots,r^n$ are constants. 
%$\mu_\iota=-\frac{d}2$ and $r^\iota=0$. 
Denote $\mu={\rm diag}(\mu_1,\dots,\mu_n)$. Note that the differential system 
for the flat section of 
$\widetilde{\nabla}$ has a Fuchsian singular point at $z=0$. 
The monodromy data at $z=0$ is given by $(\mu, R)$, where $R$ is a constant matrix (cf.~\cite{Du96, Du99, DZ-norm}).
 
 The flatness of~$\widetilde{\nabla}$ 
 ensures the existence of functions $\theta_{\alpha,m}({\bf v})$, $m\ge0$, satisfying 
\begin{align}
&
\theta_{\alpha,0}({\bf v})=v_\alpha, \qquad \alpha=1,\dots,n,\label{theta000511}\\
&
\frac{\p^2 \theta_{\gamma,m+1}({\bf v})}{\p v^\alpha\p v^\beta} = \sum_{\sigma=1}^n c^\sigma_{\alpha\beta}({\bf v}) 
\frac{\p \theta_{\gamma,m}({\bf v})}{\p v^\sigma}, \qquad \alpha,\beta=1,\dots,n, \, m\ge0,\label{theta1c}\\
&\bigl\langle \nabla \theta_\alpha({\bf v};z),  \nabla \theta_\beta({\bf v};-z) \bigr\rangle = \eta_{\alpha\beta}, \qquad \alpha,\beta=1,\dots,n,
\label{orthogtheta511}\\
&\frac{\p\theta_{\alpha,m+1}({\bf v})}{\p v^\iota} = \theta_{\alpha,m}({\bf v}), \quad \alpha=1,\dots,n, \, m\geq0, \label{thetacali512}\\
& E (\p_{\beta}(\theta_{\alpha,m}({\bf v}))) = 
(m+\mu_\alpha+\mu_\beta) \, \p_{\beta}(\theta_{\alpha,m}({\bf v})) + 
\sum_{\gamma=1}^n \sum_{k=1}^m (R_k)^\gamma_\alpha \p_{\beta} (\theta_{\gamma,m-k}({\bf v})) , \quad m\geq 0. \label{theta2c} 
\end{align}
For a given $(\mu, R)$, the functions 
$\{\theta_{\alpha,m}({\bf v})\}_{\alpha=1,\dots,n, \, m\geq0}$ 
may not be unique. A choice of $\{\theta_{\alpha,m}({\bf v})\}_{\alpha=1,\dots,n, \, m\geq0}$ satisfying \eqref{theta000511}--\eqref{theta2c} 
is called a {\it calibration}~\cite{DLYZ16, DZ-norm}. A Frobenius manifold with a fixed calibration is called {\it calibrated}. 
Below we fix a calibration. 

The {\it principal hierarchy} of a Frobenius manifold~$M$ is 
the following hierarchy of pairwise commuting evolutionary systems of PDEs of hydrodynamic-type~\cite{Du96}:
\beq\label{principalh520}
\frac{\p v^\alpha}{\p t^{\beta,m}}  = 
\sum_{\gamma=1}^n \eta^{\alpha\gamma} \p_X \biggl(\frac{\p \theta_{\beta,m+1}}{\p v^\gamma}\biggr), \quad 
\alpha,\beta=1,\dots,n, \, m\ge0.
\eeq
Since the $\p_{t^{\iota,0}}$-flow 
of the principal hierarchy~\eqref{principalh520} is just 
$\p v^\alpha/\p t^{\iota,0} = \p_X (v^\alpha)$, 
 we identify $X$ with $t^{\iota,0}$.
The flows in~\eqref{principalh520} pairwise commute~\cite{Du96},  
so we can solve them together, yielding solutions of the form
${\bf v}={\bf v}({\bf t})$, where ${\bf t}=(t^{\beta,m})_{\beta=1,\dots,n,\, m\ge0}$. 

Following Dubrovin~\cite{Du96} (cf.~\cite{DZ-norm}), define 
$\Omega_{\alpha,m_1;\beta,m_2}^{[0]}({\bf v})$, $\alpha,\beta=1,\dots,n$, $m_1,m_2\geq0$, 
for~$M$ by means of generating series as follows:
\beq\label{deftwopoint511}
\sum_{m_1,m_2\ge0} \Omega_{\alpha,m_1;\beta,m_2}^{[0]}({\bf v}) z^{m_1} w^{m_2} = 
\frac{\langle \nabla\theta_\alpha({\bf v};z), \nabla\theta_\beta({\bf v};w) \rangle-\eta_{\alpha\beta}}{z+w} , \qquad \alpha,\beta=1,\dots,n.
\eeq
We know from~\cite{Du96, DZ-norm} that 
for any solution ${\bf v}({\bf t})=(v^1({\bf t}),\dots, v^n({\bf t}))$ to the principal hierarchy, 
there exists a function $\tau_{\rm PH}({\bf t};\e)$, such that 
\beq\label{definingzerofree423}
\e^2\frac{\p^2 \log \tau_{\rm PH}({\bf t};\e)}{\p t^{\alpha,m_1} \p t^{\beta,m_2}} = \Omega_{\alpha,m_1; \beta,m_2}^{[0]}({\bf v}({\bf t})), \quad 
\alpha,\beta=1,\dots,n,\, m_1,m_2\ge0.
\eeq
We call $\tau_{\rm PH}({\bf t};\e)$ the {\it $\tau$-function of the solution~${\bf v}({\bf t})$ to the principal hierarchy~\eqref{principalh520}}. 
The logarithm $\e^2 \log \tau_{\rm PH}({\bf t};\e)=:\mathcal{F}_0({\bf t})$ is called the 
{\it genus~0 free energy of the solution~${\bf v}({\bf t})$}.

It was shown in~\cite{DZ-norm} that the principal hierarchy~\eqref{principalh520} admits the infinitesimal Galilean symmetry:
\beq\label{galileanph}
\frac{\p \tilde v^\alpha}{\p q} = \delta^\alpha_\iota + \sum_{\beta=1}^n\sum_{k\ge0} t^{\beta,k+1}\frac{\p \tilde v^\alpha}{\p t^\beta_k} , \quad \alpha=1,\dots,n,
\eeq
which can also be upgraded 
to the infinitesimal action~\cite{DZ-norm} on $\tau$-functions as follows:
\beq
\frac{\p \tilde \tau_{\rm PH}}{\p q} = \sum_{\alpha,\beta=1}^n\frac{\eta_{\alpha\beta}t^{\alpha,0}t^{\beta,0}}{2\e^2} \tilde \tau_{\rm PH} 
+ \sum_{\beta=1}^n\sum_{k\ge0} t^{\beta,k+1}\frac{\p \tilde \tau_{\rm PH}}{\p t^\beta_k} .
\eeq

\begin{thm}\label{thmphg}
Let $M$ be a calibrated Frobenius manifold, and 
$v(\bt)$ an arbitrary solution to the principal hierarchy of~$M$. Define $\tilde {\bf v}(\bt;q)=(\tilde v^1(\bt;q),\dots,\tilde v^n(\bt;q))$ by
\beq\label{tildevggp}
\tilde v^\alpha(\bt;q) = v^\alpha\bigl(\bt^{\rm G}(\bt;q)\bigr) + q\delta^{\alpha}_{\iota},
\eeq
where 
\begin{equation}\label{tgfm}
		(t^{\rm G})^{\alpha,n}(\bt;q)=\sum_{k\geq 0}\frac{q^k}{k!}t^{\alpha,n+k}.
\end{equation}
Then $\tilde {\bf v}(\bt;q)$ is a solution to the principal hierarchy~\eqref{principalh520} for any~$q$. Moreover, 
let $\tau_{\rm PH}({\bf t};\e)$ be a $\tau$-function of the solution $v(\bt)$. Then 
$\tilde \tau_{\rm PH}(\bt;\e;q)$ defined by 
\beq\label{tildetauphgp}
\tilde \tau_{\rm PH}(\bt;\e;q) = \tau_{\rm PH}\bigl(\bt^{\rm G}(\bt;q);\e)\bigr) e^{\frac{g(\bt;q)}{\e^2}}	
\eeq
with 
\beq\label{defgfm}
g(\bt;q):=\frac{1}{2}\sum_{\alpha,\beta=0}^n \sum_{i,j\geq 0}\frac{q^{i+j+1}}{i+j+1}\frac{t^{\alpha,i}}{i!}\frac{t^{\beta,j}}{j!}\eta_{\alpha\beta}
\eeq
is a $\tau$-function of the solution~$\tilde v(\bt;q)$.
\end{thm}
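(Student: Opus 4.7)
The plan is to parallel Section~\ref{S2}: the two claims in Theorem~\ref{thmphg} will be proved independently, with the Hirota-bilinear step of the proof of Theorem~\ref{thm-KdV-tau} replaced by a direct check of the two-point-correlator definition~\eqref{definingzerofree423} of the $\tau$-function of the principal hierarchy. The algebraic input that replaces the pseudodifferential $z$-expansions of Section~\ref{S2} is the calibration axiom~\eqref{thetacali512}, i.e.\ $\p_\iota\theta_{\alpha,m+1}=\theta_{\alpha,m}$, which by Taylor expansion yields the shift identity
\begin{equation*}
\theta_\alpha({\bf v}+qe;z) = e^{qz}\,\theta_\alpha({\bf v};z),\qquad \nabla\theta_\alpha({\bf v}+qe;z) = e^{qz}\,\nabla\theta_\alpha({\bf v};z).
\end{equation*}

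For the first assertion I would verify~\eqref{principalh520} for $\tilde{\bf v}$ directly. Differentiating~\eqref{tildevggp} in $t^{\beta,m}$ using $\p (t^{\rm G})^{\gamma,k}/\p t^{\beta,m} = \delta^\gamma_\beta\,q^{m-k}/(m-k)!$ for $0\le k\le m$ gives
\begin{equation*}
\frac{\p \tilde v^\alpha}{\p t^{\beta,m}}(\bt;q) = \sum_{k=0}^{m}\frac{q^{m-k}}{(m-k)!}\frac{\p v^\alpha}{\p t^{\beta,k}}(\bt^{\rm G}).
\end{equation*}
On the other hand, the shift identity applied to $\theta_{\beta,m+1}$ expands $\p_\gamma\theta_{\beta,m+1}(\tilde{\bf v})$ as a finite sum involving $\p_\gamma\theta_{\beta,\ell}(v(\bt^{\rm G}))$ for $\ell=0,1,\dots,m+1$; the $\ell=0$ term contributes $\p_\gamma\theta_{\beta,0}=\eta_{\gamma\beta}$, a constant that is annihilated by the outer $\p_X$ on the right-hand side of~\eqref{principalh520}, while the surviving $m+1$ terms reproduce the displayed expression via the principal hierarchy for $v(\bt)$ itself. (Alternatively, an analog of Proposition~\ref{prop-galilean-solution}---proved by the same chain-rule computation---shows that~\eqref{tildevggp} integrates~\eqref{galileanph}; the conclusion then follows from the infinitesimal symmetry property established in~\cite{DZ-norm}.)

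For the second assertion I would differentiate~\eqref{tildetauphgp} twice in $(t^{\alpha,m_1},t^{\beta,m_2})$ and match against $\Omega^{[0]}_{\alpha,m_1;\beta,m_2}(\tilde{\bf v})$ via~\eqref{definingzerofree423}. The chain rule yields a finite linear combination of $\Omega^{[0]}_{\alpha,k_1;\beta,k_2}(v(\bt^{\rm G}))$ with weights $q^{m_1-k_1}q^{m_2-k_2}/[(m_1-k_1)!(m_2-k_2)!]$, and a direct calculation from~\eqref{defgfm} gives $\p^2 g/\p t^{\alpha,m_1}\p t^{\beta,m_2} = \eta_{\alpha\beta}\,q^{m_1+m_2+1}/[(m_1+m_2+1)\,m_1!\,m_2!]$. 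It therefore suffices to establish the purely Frobenius-manifold identity
\begin{equation*}
\sum_{m_1,m_2\ge0}\Omega^{[0]}_{\alpha,m_1;\beta,m_2}({\bf v}+qe)\,z^{m_1}w^{m_2} = e^{q(z+w)}\sum_{m_1,m_2\ge0}\Omega^{[0]}_{\alpha,m_1;\beta,m_2}({\bf v})\,z^{m_1}w^{m_2} + \eta_{\alpha\beta}\,\frac{e^{q(z+w)}-1}{z+w},
\end{equation*}
which drops out upon substituting the shift identity into the defining relation~\eqref{deftwopoint511} and dividing by $z+w$.

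The main conceptual hurdle is recognizing that~\eqref{thetacali512} furnishes a Frobenius-manifold analog of a Galilean translation of flat coordinates under which the generating series of $\theta$'s scales by $e^{qz}$; once this is isolated, everything is book-keeping, and the precise form of the Gaussian correction $g(\bt;q)$ in~\eqref{defgfm} is forced by matching the $\eta_{\alpha\beta}(e^{q(z+w)}-1)/(z+w)$ piece produced by the displayed identity.
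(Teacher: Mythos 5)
Your proposal is correct and follows essentially the same route as the paper: the second (main) assertion is proved exactly as in the paper, by differentiating \eqref{tildetauphgp} twice, passing to the generating series \eqref{deftwopoint511}, and using the $e^{q(z+w)}$ shift of $\nabla\theta$ coming from \eqref{thetacali512} to absorb the $\eta_{\alpha\beta}(e^{q(z+w)}-1)/(z+w)$ correction from $g(\bt;q)$; for the first assertion the paper takes your parenthetical alternative (verify that \eqref{tildevggp} integrates \eqref{galileanph} and invoke the infinitesimal-symmetry property from the cited reference), while your primary direct verification of \eqref{principalh520} is also valid and slightly more self-contained. One small imprecision: the scalar shift identity should read $\theta_\alpha({\bf v}+qe;z)=e^{qz}\theta_\alpha({\bf v};z)+\eta_{\iota\alpha}\frac{e^{qz}-1}{z}$, since $\p_\iota\theta_{\alpha,0}=\eta_{\iota\alpha}\neq 0$; this does not affect your argument, which only uses the (correctly stated) gradient version $\nabla\theta_\alpha({\bf v}+qe;z)=e^{qz}\nabla\theta_\alpha({\bf v};z)$.
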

\begin{proof}
Similar to the proof of Proposition~\ref{prop-galilean-solution} one can 
prove by a straightforward verification that $\tilde{\bf v}(\bt;q)$ defined by~\eqref{tildevggp} 
satisfies~\eqref{galileanph}.  
Since equation~\eqref{galileanph} gives an infinitesimal symmetry of the principal hierarchy~\eqref{principalh520}, 
the statement that $\tilde {\bf v}(\bt;q)$ is a solution to the principal hierarchy follows.
Using \eqref{tildetauphgp}, \eqref{definingzerofree423} and \eqref{tgfm}, we have
		\begin{align}
			\e^2\frac{\p^2 \log \tilde{\tau}_{\rm PH}(\bt;\e;q)}{\p t^{\alpha,k_1} \p t^{\beta,k_2}}
			=& \sum_{0\le m_1 \le k_1, 0\le m_2\le k_2}\frac{q^{k_1+k_2-m_1-m_2}}{(k_1-m_1)!(k_2-m_2)!}
			\Omega_{\alpha,m_1;\beta,m_2}^{[0]}\bigl({\bf v}\bigl(\bt^{\rm G}(\bt;q)\bigr)\bigr)\notag \\ &+\frac{q^{k_1+k_2+1}}{k_1!k_2!(k_1+k_2+1)}\eta_{\alpha \beta}.
		\end{align}
Then, using \eqref{deftwopoint511}, we obtain that 
\begin{align}
&\e^2 \sum_{k_1,k_2\ge 0} \frac{\p^2 \log \tilde{\tau}_{\rm PH}(\bt;\e;q)}{\p t^{\alpha,k_1} \p t^{\beta,k_2}} z^{k_1} w^{k_2} \notag \\
&=e^{qz+qw}\frac{\langle \nabla \theta_\alpha({\bf v}(\bt^{\rm G}(\bt;q));z), \nabla \theta_\beta({\bf v}(\bt^{\rm G}(\bt;q));w)\rangle-\eta_{\alpha \beta}}{z+w}+\frac{(e^{qz+qw}-1)\eta_{\alpha \beta}}{z+w}
			\notag \\
			&=\frac{\langle \nabla \theta_\alpha(\tilde{\bf v}(\bt;q);z), \nabla \theta_\beta(\tilde{\bf v}(\bt;q);w)\rangle-\eta_{\alpha \beta}}{z+w}.
\end{align}
The theorem is proved.	
\end{proof}

Actually, the principal hierarchy admits an infinite family of Virasoro symmetries with the above Galilean symmetry being one of 
them. However, as it was observed by Dubrovin and Zhang~\cite{DZ-norm}, differently from the Galilean symmetry, 
the actions of most of other Virasoro symmetries on 
tau-functions for the principal hierarchy are not linear. Nevertheless, 
for the case when the Frobenius manifold is {\it semisimple}, 
Dubrovin and Zhang~\cite{DZ-norm} introduced the notion of linearization of Virasoro symmetries and 
constructed higher genus free energies $F_g({\bf v}, {\bf v}_1,\dots, {\bf v}_{3g-2})$, $g\ge1$, as solutions to 
the {\it loop equation}~\cite{DZ-norm}. 
A particular perturbation of the principal hierarchy, called the {\it integrable hierarchy of topological type}, 
is then constructed as the substitution of the quasi-trivial transformation 
\beq\label{quasitmap}
v^\alpha \mapsto u^\alpha= v^\alpha+ \sum_{\beta=1}^n \eta^{\alpha\beta} \p^2( F_g({\bf v}, {\bf v}_1,\dots, {\bf v}_{3g-2})), \quad \alpha=1,\dots,n,
\eeq
into the principal hierarchy. Here, $\p=\p_X=\sum_{\gamma=1}^n\sum_{k\ge0} v^\gamma_{k+1} \p/\p v^\gamma_k$. 
This particular perturbation, now often called the {\it Dubrovin--Zhang hierarchy of~$M$}, has the form \cite{DZ-norm}
\beq\label{DZhierarchy}
\frac{\p u^\alpha}{\p t^{\beta,m}}  = 
\sum_{\gamma=1}^n P^{\alpha\gamma} \biggl(\frac{\delta H_{\beta,m}}{\delta u^\gamma(X)}\biggr), \quad 
\alpha,\beta=1,\dots,n, \, m\ge0,
\eeq
where $P$ is a Hamiltonian operator (which allows a power-series dependence in~$\e$), $\delta/\delta u^\gamma(X)$ 
denote the variational derivatives~\cite{DZ-norm}, $H_{\beta,m}$ are Hamiltonians, and
$X=t^{\iota,0}$. 

Denote ${\bf u}=(u^1,\dots,u^n)$ and 
define  $\Omega_{\alpha,m_1;\beta,m_2}({\bf u}, {\bf u}_1,\dots;\e)$
as the substitutions of the inverse of the 
quasi-trivial map~\eqref{quasitmap} in 
$$\Omega_{\alpha,m_1;\beta,m_2}^{[0]}({\bf v})+\sum_{g\ge1} \e^{2g} \p^2(F_g({\bf v}, {\bf v}_1,\dots, {\bf v}_{3g-2})). $$
Here, $\alpha,\beta=1,\dots,n$, $m_1,m_2\ge0$.
Polynomiality of coefficients in $P^{\alpha\gamma}$, in the densities of $H_{\beta,m}$, and in $\Omega_{\alpha,m_1;\beta,m_2}({\bf u}, {\bf u}_1,\dots;\e)$
was proved in~\cite{BPS}.
For any arbitrary solution ${\bf u}(\bt;\e)$ to the Dubrovin--Zhang hierarchy~\eqref{DZhierarchy}, 
there exists~\cite{BPS, DZ-norm} a function $\tau_{\rm DZ}(\bt;\e)$ such that 
\beq\label{definingtauDZ}
\e^2\frac{\p^2 \log \tau_{\rm DZ}({\bf t};\e)}{\p t^{\alpha,m_1} \p t^{\beta,m_2}} 
= \Omega_{\alpha,m_1; \beta,m_2}\biggl({\bf u}({\bf t};\e), \frac{\p {\bf u}(\bt;\e)}{\p X}, \dots;\e\biggr).
\eeq
The function $\tau_{\rm DZ}(\bt;\e)$ is called the {\it $\tau$-function of the solution ${\bf u}(\bt;\e)$}.

By definition the Dubrovin--Zhang hierarchy~\eqref{DZhierarchy} admits the following infinitesimal Galilean symmetry~\cite{DZ-norm}:
\beq\label{infGu}
\frac{\p \tilde u^\alpha}{\p q} = \delta^\alpha_\iota + \sum_{\beta=1}^n\sum_{k\ge0} t^{\beta,k+1}\frac{\p \tilde u^\alpha}{\p t^\beta_k} , \quad \alpha=1,\dots,n.
\eeq	
Let $\tau_{\rm DZ}(\bt;\e)$ be any $\tau$-function for the Dubrovin--Zhang hierarchy of~$M$.  Then, according to the linearization 
of Virasoro symmetries,  the infinitesimal Galilean symmetry~\eqref{infGu} is upgraded as follows:
\beq
\frac{\p \tilde \tau_{\rm DZ}}{\p q} = \sum_{\alpha,\beta=1}^n\frac{\eta_{\alpha\beta}t^{\alpha,0}t^{\beta,0}}{2\e^2} \tilde \tau_{\rm DZ} 
+ \sum_{\beta=1}^n\sum_{k\ge0} t^{\beta,k+1}\frac{\p \tilde \tau_{\rm DZ}}{\p t^{\beta,k}} .
\eeq

We have the following theorem generalizing Theorems \ref{thm-of-kdv-solution}, \ref{thm-KdV-tau}. 
\begin{thm}\label{DZGthm}
	Let $M$ be a semisimple calibrated Frobenius manifold, 
and let ${\bf u}(\bt;\e)$ be an arbitrary solution to the Dubrovin--Zhang hierarchy of~$M$. 
Define $\tilde {\bf u}(\bt;\e;q)=(\tilde u^1(\bt;\e;q),\dots,\tilde u^n(\bt;\e;q))$ by
\beq\label{tildeuggp}
\tilde u^\alpha(\bt;\e;q) = u^\alpha\bigl(\bt^{\rm G}(\bt;q);\e\bigr) + q\delta^{\alpha}_{\iota}, \quad \alpha=1,\dots,n,
\eeq
where $\bt^{\rm G}(\bt;q)$ is defined in~\eqref{tgfm}.
Then $\tilde {\bf u}(\bt;\e;q)$ is a solution to the Dubrovin--Zhang hierarchy of~$M$ for any~$q$. Moreover, 
let $\tau_{\rm DZ}({\bf t};\e)$ be a $\tau$-function of the solution ${\bf u}(\bt;\e)$. Then 
$\tilde \tau_{\rm DZ}(\bt;\e;q)$ defined by 
\beq\label{deftildetaudz}
\tilde \tau_{\rm DZ}(\bt;\e;q) = \tau_{\rm DZ}\bigl(\bt^{\rm G}(\bt;q);\e\bigr) e^{\frac{g(\bt;q)}{\e^2}}, \eeq
is a $\tau$-function of the solution~$\tilde {\bf u}(\bt;\e;q)$. Here $g(\bt;q)$ is the quadratic function defined in~\eqref{defgfm}.
\end{thm}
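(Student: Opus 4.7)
The plan is to follow the two-step template that settled the KdV case in Propositions~\ref{prop-galilean-solution}, \ref{pro-tau-Galilean-symmetry} and Theorems~\ref{thm-of-kdv-solution}, \ref{thm-KdV-tau}, adapted to the Dubrovin--Zhang setting by quoting the structural results of~\cite{DZ-norm} at the steps where the semisimplicity hypothesis is needed. For the first claim, I would verify, by a chain-rule computation identical in form to the proof of Proposition~\ref{prop-galilean-solution}, that $\tilde{\mathbf{u}}(\bt;\e;q)$ defined by~\eqref{tildeuggp} solves the Cauchy problem for~\eqref{infGu} with initial datum $\mathbf{u}(\bt;\e)$; the sole ingredient is $\p (t^{\rm G})^{\alpha,n}/\p q=(t^{\rm G})^{\alpha,n+1}$. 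Because~\eqref{infGu} is an infinitesimal symmetry of the Dubrovin--Zhang hierarchy by~\cite{DZ-norm}, the $q$-flow preserves its solution space, so $\tilde{\mathbf{u}}$ remains a solution for every~$q$.

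For the $\tau$-function claim, I would verify, by a computation strictly parallel to Proposition~\ref{pro-tau-Galilean-symmetry}, that $\tilde\tau_{\rm DZ}$ given by~\eqref{deftildetaudz} satisfies the Cauchy problem
\begin{equation*}
\frac{\p \tilde\tau_{\rm DZ}}{\p q}=\frac{1}{2\e^{2}}\sum_{\alpha,\beta=1}^{n}\eta_{\alpha\beta}t^{\alpha,0}t^{\beta,0}\tilde\tau_{\rm DZ}+\sum_{\beta=1}^{n}\sum_{k\ge 0}t^{\beta,k+1}\frac{\p\tilde\tau_{\rm DZ}}{\p t^{\beta,k}},
\end{equation*}
with $\tilde\tau_{\rm DZ}|_{q=0}=\tau_{\rm DZ}$: one expands $\p \tau_{\rm DZ}(\bt^{\rm G}(\bt;q);\e)/\p q$ by the chain rule and reindexes the double sum in~\eqref{defgfm} (splitting the $j\ge 1$ and $i\ge 1$ contributions to recover, respectively, the $t^{\beta,k+1}$ and $t^{\alpha,0}t^{\beta,0}$ terms). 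Under the semisimplicity assumption, the linearization-of-Virasoro-symmetries framework of~\cite{DZ-norm} guarantees that this quadratic-differential action sends Dubrovin--Zhang $\tau$-functions to Dubrovin--Zhang $\tau$-functions, so $\tilde\tau_{\rm DZ}$ is itself a Dubrovin--Zhang $\tau$-function for every~$q$.

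It remains to identify the underlying solution as $\tilde{\mathbf{u}}$. Following the end of the proof of Theorem~\ref{thm-of-kdv-solution}, I would compute a single second derivative: since $\p (t^{\rm G})^{\gamma,k}/\p t^{\iota,0}=\delta^{\gamma}_{\iota}\delta_{k,0}$, only the $k=0$ chain-rule term on $\tau_{\rm DZ}(\bt^{\rm G};\e)$ survives, giving $\e^{2}\p^{2}\log\tilde\tau_{\rm DZ}/(\p t^{\iota,0}\p t^{\alpha,0})=\Omega_{\iota,0;\alpha,0}(\mathbf{u}(\bt^{\rm G}(\bt;q);\e),\dots;\e)+q\eta_{\iota\alpha}$. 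The hard part is matching this to $\Omega_{\iota,0;\alpha,0}(\tilde{\mathbf{u}},\p_{X}\tilde{\mathbf{u}},\dots;\e)$, which amounts to the Galilean covariance of $\Omega$ under the shift $\mathbf{u}\mapsto\mathbf{u}+q\delta^{\bullet}_{\iota}$: at genus~$0$ it follows directly from the closed form $\theta_{\alpha}(\mathbf{v}+q\delta_{\iota};z)=e^{qz}\theta_{\alpha}(\mathbf{v};z)+\eta_{\alpha\iota}(e^{qz}-1)/z$, obtained by iterating~\eqref{thetacali512} and substituted into~\eqref{deftwopoint511} (essentially as in the proof of Theorem~\ref{thmphg}), but at higher genus it must be extracted from the loop-equation framework of~\cite{DZ-norm}. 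I would actually sidestep this obstacle via uniqueness: both $\tilde\tau_{\rm DZ}$ and the $\tau$-function of $\tilde{\mathbf{u}}$, normalized to agree at $q=0$ with $\tau_{\rm DZ}$, are Dubrovin--Zhang $\tau$-functions that evolve by the same quadratic-differential operator in $q$, so they coincide for all~$q$ by uniqueness of the power-series Cauchy problem.
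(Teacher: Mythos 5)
Your proposal is correct in outline and follows the same two-step strategy as the paper: verify that $\tilde{\bf u}$ solves the Cauchy problem for~\eqref{infGu} and invoke the fact that \eqref{infGu} is an infinitesimal symmetry of the Dubrovin--Zhang hierarchy (this first half is identical to the paper's). The divergence is in how the $\tau$-function $\tilde\tau_{\rm DZ}$ is identified as a $\tau$-function of $\tilde{\bf u}$. The paper proves the Galilean covariance of the full two-point functions directly: by the definition of $\Omega_{\alpha,m_1;\beta,m_2}({\bf u},{\bf u}_1,\dots;\e)$ as the substitution of the inverse quasi-trivial map into $\Omega^{[0]}_{\alpha,m_1;\beta,m_2}({\bf v})+\sum_{g\ge1}\e^{2g}\p^2 F_g$, the genus-zero piece transforms with exactly the $g(\bt;q)$-correction by Theorem~\ref{thmphg}, while the higher-genus corrections and the quasi-trivial map itself are unchanged under the constant shift $v^\iota\mapsto v^\iota+q$ because $F_g$, $g\ge1$, do not explicitly contain $v^\iota$; this verifies \eqref{definingtauDZ} for the pair $(\tilde\tau_{\rm DZ},\tilde{\bf u})$ outright. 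This is precisely the step you flag as ``the hard part'' and then sidestep. Your replacement --- verify the infinitesimal Cauchy problem for $\tilde\tau_{\rm DZ}$ and conclude by uniqueness of the power-series solution in $q$ --- closes the argument only if you take the linearization statement of~\cite{DZ-norm} in its strong form: that the quadratic-differential $q$-flow applied to a $\tau$-function of ${\bf u}$ yields a $\tau$-function \emph{of the Galilean-transformed solution}, not merely some Dubrovin--Zhang $\tau$-function (the weaker form, which is all your second paragraph explicitly invokes, does not by itself identify the underlying solution, since $\tau$-functions are only fixed up to $e^{c_{-1}+\sum c_i t^{\alpha,i}}$ and one must choose the $q$-dependence of that ambiguity coherently). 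Since the paper asserts the strong form just before the theorem, your route is admissible, but it outsources to~\cite{DZ-norm} exactly the content the paper verifies by hand; the single concrete fact you would need to do it directly is the $v^\iota$-independence of the $F_g$.
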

\begin{proof}
Again, from a straightforward verification we know that $\tilde{\bf u}(\bt;q)$ defined by~\eqref{tildeuggp} 
satisfies~\eqref{infGu}.  Since 
 the Dubrovin--Zhang hierarchy~\eqref{DZhierarchy} 
admits the infinitesimal Galilean symmetry~\eqref{infGu}, we know that  
$\tilde {\bf u}(\bt;\e;q)$, for any~$q$, is a solution to the Dubrovin--Zhang hierarchy~\eqref{DZhierarchy}. 
Using \eqref{deftildetaudz}, \eqref{tildevggp}, Theorem~\ref{thmphg}, the definition of 
$\Omega_{\alpha,m_1;\beta,m_2}({\bf u}, {\bf u}_1,\dots;\e)$ and 
the well-known fact that $F_g$, $g\ge1$,
do not explicitly contain~$v^\iota$ we find the validity of the second statement.
\end{proof}

When $M$ is the one-dimensional Frobenius manifold with the potential $F=v^3/6$, according to the 
Witten--Kontsevich theorem the Dubrovin--Zhang hierarchy of~$M$ coincides with the KdV hierarchy. 
Noticing that the definition of $\tau$-functions for the KdV hierarchy using the bilinear equations 
and the one using~\eqref{definingtauDZ} are equivalent~\cite{BDY}, 
we know that the above Theorem~\ref{DZGthm} becomes Theorems \ref{thm-of-kdv-solution}, \ref{thm-KdV-tau}.
Applications of Theorem~\ref{DZGthm} will be considered elsewhere. 

In a similar way, we have the following theorem.
\begin{thm}\label{DZHodgeGthm}
	Let $M$ be a semisimple calibrated Frobenius manifold, 
and let ${\bf w}(\bt;\boldsymbol{\sigma};\e)$ be an arbitrary solution to the Hodge hierarchy~\cite{DLYZ16} associated with~$M$, 
where $\boldsymbol{\sigma}=(\sigma_1,\sigma_3,\sigma_5,\dots)$ are Hodge parameters~\cite{DLYZ16}. 
Define 
\beq\label{tildeuggphodge}
\tilde w^\alpha(\bt;\boldsymbol{\sigma};\e;q) = w^\alpha\bigl(\bt^{\rm G}(\bt;q);\boldsymbol{\sigma};\e\bigr) + q\delta^{\alpha}_{\iota}, \quad \alpha=1,\dots,n,
\eeq
where $\bt^{\rm G}(\bt;q)$ is defined in~\eqref{tgfm}.
Then $\tilde {\bf w}(\bt;\boldsymbol{\sigma};\e;q)$ is a solution to the Hodge hierarchy of~$M$ for any~$q$. Moreover, 
let $\tau_{\rm H}({\bf t};\boldsymbol{\sigma};\e)$ be a $\tau$-function~\cite{DLYZ16} of the solution ${\bf w}(\bt;\boldsymbol{\sigma};\e)$. Then 
$\tilde \tau_{\rm H}(\bt;\boldsymbol{\sigma};\e;q)$ defined by 
\beq\label{deftildetaudzhodge}
\tilde \tau_{\rm H}(\bt;\boldsymbol{\sigma};\e;q) = \tau_{\rm H}\bigl(\bt^{\rm G}(\bt;q);\boldsymbol{\sigma};\e\bigr) e^{q\frac{\sigma_1}{24}+\frac{g(\bt;q)}{\e^2}}, \eeq
is a $\tau$-function of the solution~$\tilde {\bf w}(\bt;\boldsymbol{\sigma};\e;q)$. Here $g(\bt;q)$ is the quadratic function defined in~\eqref{defgfm}.
\end{thm}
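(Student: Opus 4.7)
The plan is to mirror the proof of Theorem~\ref{DZGthm},
carrying the Hodge parameters~$\boldsymbol{\sigma}$ along throughout the argument.
First, exactly as in the verification used in Proposition~\ref{prop-galilean-solution},
using $\p (t^{\rm G})^{\alpha,n}/\p q = (t^{\rm G})^{\alpha,n+1}$ together with the
definition~\eqref{tildeuggphodge}, I would show by a direct calculation that $\tilde w^\alpha$
satisfies the infinitesimal Galilean symmetry
\begin{equation*}
\frac{\p \tilde w^\alpha}{\p q} = \delta^\alpha_\iota + \sum_{\beta=1}^n\sum_{k\ge0} t^{\beta,k+1}\frac{\p \tilde w^\alpha}{\p t^{\beta,k}}.
\end{equation*}
Next, I would observe that the Hodge hierarchy of~$M$, constructed in~\cite{DLYZ16} via a
quasi-trivial $\boldsymbol{\sigma}$-dependent transformation from the principal hierarchy, admits
this Galilean symmetry as an evolutionary symmetry; this is inherited from the principal hierarchy's
Galilean symmetry~\eqref{galileanph}. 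Hence $\tilde{\bf w}(\bt;\boldsymbol{\sigma};\e;q)$ solves the
Hodge hierarchy for every~$q$.

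For the tau-function part, the new ingredient compared to Theorem~\ref{DZGthm} is the extra factor
$e^{q\sigma_1/24}$. I would derive it from the upgraded infinitesimal action on the Hodge tau function,
\begin{equation*}
\frac{\p \tilde \tau_{\rm H}}{\p q} = \Bigl(\sum_{\alpha,\beta=1}^n\frac{\eta_{\alpha\beta}t^{\alpha,0}t^{\beta,0}}{2\e^2}+\frac{\sigma_1}{24}\Bigr) \tilde \tau_{\rm H} + \sum_{\beta=1}^n\sum_{k\ge0} t^{\beta,k+1}\frac{\p \tilde \tau_{\rm H}}{\p t^{\beta,k}},
\end{equation*}
where the term $\sigma_1/24$ encodes the genus~$1$ Hodge integral
$\int_{\overline{\mathcal{M}}_{1,1}}\lambda_1 = 1/24$ that the Galilean shift of the initial data
activates. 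Integrating this linear PDE with the initial condition $\tilde\tau_{\rm H}|_{q=0}=\tau_{\rm H}$
produces the exponent $q\sigma_1/24 + g(\bt;q)/\e^2$ in~\eqref{deftildetaudzhodge}, in full
parallel with Proposition~\ref{pro-tau-Galilean-symmetry}. The final verification that $\tilde\tau_{\rm H}$
is indeed a tau function of~$\tilde{\bf w}$ then follows by the two-point generating relation
on $\Omega$-functions generalized to the Hodge setting, using that the higher-genus Hodge free
energies do not explicitly depend on~$v^\iota$, exactly as at the end of the proof of Theorem~\ref{DZGthm}.

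The main obstacle I anticipate is the rigorous identification of the $\sigma_1/24$ correction in the
action on tau functions. This requires isolating the $v^\iota$-dependence of the genus~$1$ Hodge
free energy: its $\sigma_1$-linear part contains $(\sigma_1/24)\log u_1$-type terms which, while
benign for the pure Dubrovin--Zhang flow, feed the Galilean shift
$v^\iota \mapsto v^\iota+q$ a constant contribution $q\sigma_1/24$ to $\log \tilde\tau_{\rm H}$.
Once this constant is pinned down, the rest of the argument proceeds
in parallel with Theorem~\ref{DZGthm}, and the $\boldsymbol{\sigma}$-dependence is carried along
as an inert parameter in all calculations.
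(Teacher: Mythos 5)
The paper gives no explicit proof of this theorem (it says only ``in a similar way''), and your proposal is precisely the intended argument: it mirrors the proof of Theorem~\ref{DZGthm} and correctly pinpoints the one genuinely new ingredient, namely that the factor $e^{q\sigma_1/24}$ comes from the exceptional genus-one Hodge term --- the genus-one Hodge free energy, unlike the $F_g$ of the Dubrovin--Zhang case, contains the explicit linear term $\frac{\sigma_1}{24}v^\iota$, equivalently the Hodge string equation carries the extra constant $\sigma_1\int_{\overline{\mathcal{M}}_{1,1}}\lambda_1=\frac{\sigma_1}{24}$. One small wording slip to fix: at the end you invoke ``the higher-genus Hodge free energies do not explicitly depend on $v^\iota$,'' which contradicts your own (correct) earlier observation at genus one; what you actually need is that the $v^\iota$-dependence of the genus-one Hodge free energy is purely linear, so it drops out of all second derivatives (hence does not disturb the two-point functions defining the $\tau$-function) and contributes only the constant $q\sigma_1/24$ to $\log\tilde\tau_{\rm H}$.
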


\smallskip

\noindent {\bf Acknowledgements.} 
The work is partially supported by NSFC No. 12371254 and CAS No. YSBR-032.

\medskip
\medskip
\medskip

\noindent School of Mathematical Sciences, University of Science and Technology of China,

\noindent Hefei 230026, P.R. China 

\noindent xjh\_020403@mail.ustc.edu.cn,   diyang@ustc.edu.cn


\begin{thebibliography}{99}
		
		\bibitem{alexandrov2018}
		Alexandrov, A.:
		\newblock Cut-and-join description of generalized Brezin-Gross-Witten model.
		\newblock Adv. Theor. Math. Phys.,~{\bf 22} (2018), 1347--1399.
		
		\bibitem{AC}
		Ambj{\o}rn, J., Chekhov, L.:
		\newblock The NBI matrix model of IIB superstrings.
		\newblock J. High Energy Phys.,~{\bf 12} (1998), 7--13.
		
		\bibitem{BDY}
		Bertola, M., Dubrovin, B., Yang, D.:
		\newblock Correlation functions of the KdV hierarchy and applications to intersection numbers over $\overline{\mathcal{M}}_{g,n}$.
		\newblock Phys. D,~{\bf 327} (2016), 30--57.
		
		\bibitem{BR}
		Bertola, M., Ruzza, G.:
		\newblock Brezin-Gross-Witten tau function and isomonodromic deformations.
		\newblock Commun. Number Theory Phys.,~{\bf 13} (2019), 827--883.
				
\bibitem{BPS}
Buryak, A., Posthuma, H., Shadrin, S.:
A polynomial bracket for the Dubrovin-Zhang hierarchies. 
J. Differ. Geom.,~{\bf 92} (2012), 153--185.
		
		\bibitem{BG}
		Br\'ezin, E., Gross, D.~J.:
		\newblock The external field problem in the large $N$ limit of QCD.
		\newblock Phys. Lett. B,~{\bf 97} (1980), 120--124.
			
		\bibitem{CLL}
		Chen, H.~H., Lee, Y.~C., Lin, J.-E.:
		\newblock On a new hierarchy of symmetries for the Kadomtsev-Petviashvili equation.
		\newblock Phys. D,~{\bf 9} (1983), 439--445.
		
		\bibitem{CLL08}
		Chen, L., Li, Y., Liu, K.: 
		\newblock Localization, Hurwitz Numbers and the Witten Conjecture. 
		\newblock Asian J. Math.,~{\bf 12} (2008), 511--518.
		
		\bibitem{CGFG}
Chidambaram, N. K., Garcia-Failde, E.,  Giacchetto, A.: 
Relations on  $\overline{\mathcal{M}}_{g,n}$ and the negative $r$-spin Witten conjecture. arXiv:2205.15621.
		
		\bibitem{DJKM}
		Date, E.~B., Jimbo M., Kashiwara, M., Miwa, T.:
		\newblock Transformation groups for soliton equations. Euclidean Lie
		algebras and reduction of the KP hierarchy.
		\newblock Publ. Res. Inst. Math. Sci.,~{\bf 18} (1982), 1077--1110.
		
		\bibitem{Di}
		Dickey, L. A.:
		\newblock {\em Soliton equations and Hamiltonian systems}, 
			Advanced Series in Mathematical Physics {\bf 26},
		\newblock World Scientific Publishing Co., Inc., River Edge, NJ, second edition, 2003.
		
		\bibitem{DVV}
		Dijkgraaf, R., Verlinde, E., Verlinde, H.:
		\newblock Loop equations and Virasoro constraints in non-perturbative two-dimensional quantum gravity.
		\newblock Nucl. Phys., B~{\bf 348} (1991), 435--456.	
		
		\bibitem{DW}
		Dijkgraaf, R., Witten, E.:
		\newblock Mean field theory, topological field theory, and multi-matrix models.
		\newblock Nucl. Phys. B,~{\bf 342} (1990), 486--522.
		
\bibitem{DN}
Do, N., Norbury, P.:
\newblock Topological recursion on the Bessel curve.
\newblock Commun. Number Theory Phys.,~{\bf 12} (2018), 53--73.
		
		\bibitem{Du96}
		Dubrovin, B.:
		\newblock Geometry of 2D topological field theories.
		\newblock In {\em Integrable Systems and Quantum Groups (Montecatini Terme, 1993)}, M. Francaviglia, S. Greco (Ed.), Springer,
		%-Verlag Berlin， Heidelberg
		 pp.~120--348, 1996.
		
		\bibitem{Du99}
		Dubrovin, B.: 
		\newblock Painlev\'e transcendents in two-dimensional topological field theory. 
		\newblock In {\em The Painlev\'e Property, One Century Later}, R. Conte (Ed.), Springer, pp.~287--412, 1999.
		
		\bibitem{DLYZ16}
		Dubrovin, B., Liu, S.-Q., Yang, D., Zhang, Y.: 
Hodge integrals and tau-symmetric integrable hierarchies of Hamiltonian evolutionary PDEs.
Adv. Math.,~{\bf 293} (2016), 382--435.
		
		\bibitem{DYZ18}
		Dubrovin, B., Yang, D., Zagier, D.:
		\newblock On tau-functions for the KdV hierarchy.
		\newblock Selecta Math. (N.S.),~{\bf 27} (2021), Paper No. 12, 47~pp.
		
		\bibitem{DZ-norm}
		Dubrovin, B., Zhang, Y.:
		\newblock Normal forms of hierarchies of integrable PDEs, Frobenius manifolds and Gromov-Witten invariants.
		\newblock arXiv:math/0108160.
		
		\bibitem{Faber}
		Faber, C.: 
		\newblock A conjectural description of the tautological ring of the moduli space of curves. 
		\newblock Aspects Math.,~E33, 
Friedr. Vieweg \& Sohn, Braunschweig, 1999, 109-129.
		
		\bibitem{Gi01}
		Givental, A. B.: 
		\newblock Gromov-Witten invariants and quantization of quadratic Hamiltonians. 
\newblock Mosc. Math. J.,~{\bf 1} (2001), 551--568.
	
		\bibitem{GN92}
		Gross, D. J., Newman, M. J.: 
		\newblock Unitary and Hermitian matrices in an external field (II). The Kontsevich model and continuum Virasoro constraints. 
		\newblock Nucl. Phys. B,~{\bf 380} (1992), 168--180.
		
		\bibitem{GW}
		Gross, D. J., Witten, E.:
		\newblock Possible third-order phase transition in the large-N lattice gauge theory.
		\newblock Phys. Rev. D,~{\bf 21} (1980), 446--453.
			
		\bibitem{IS}
		Ibragimov, N. K., Shabat, A. B.:
		\newblock The Kortweg--de Vries equation from the point of view of transformation group.
		\newblock Sov. Phys. Dokl.,~{\bf 24} (1979), 15--17.
		
		\bibitem{IZ}
		Itzykson, C., Zuber, J.-B.:
		\newblock Combinatorics of the modular group. II. The Kontsevich integrals.
		\newblock Internat. J. Modern Phys. A,~{\bf 7} (1992), 5661--5705.

\bibitem{KS91}
Kac, V., Schwarz, A.: 
\newblock Geometric interpretation of the partition function of 2D gravity.
\newblock Phys. Lett. B,~{\bf 257} (1991), 329--334.
		
		\bibitem{KMZ}
		Kaufmann, R., Manin, Yu. I., Zagier, D.:
		\newblock Higher Weil-Petersson volumes of moduli spaces of stable $n$-pointed curves.
		\newblock Comm. Math. Phys.,~{\bf 181} (1996), 763--787.
	
		\bibitem{KL07}
		Kazarian, M., Lando, S.:
		\newblock An algebro-geometric proof of Witten's conjecture.
		\newblock J. Am. Math. Soc.,~{\bf 20} (2007), 1079--1089.
		
		\bibitem{KN}
		Kazarian, M., Norbury, P.:
		\newblock Polynomial relations among kappa classes on the moduli space of
		curves.
		\newblock 
		%Int. Math. Res. Notices,
		Int. Math. Res. Not.,~{\bf 2024} (2024), 1825--1867. 
				
		\bibitem{Kont}
		Kontsevich, M.:
		\newblock Intersection theory on the moduli space of curves and the matrix
		Airy function.
		\newblock Comm. Math. Phys.,~{\bf 147} (1992), 1--23.
				
		\bibitem{LX}
		Liu, K., Xu, H.:
		\newblock Recursion formulae of higher Weil-Petersson volumes.
		\newblock Int. Math. Res. Not.,~{\bf 2009} (2009), 835--859.
		
		\bibitem{MZ}
		Manin, Yu. I., Zograf, P.:
		\newblock Invertible cohomological field theories and Weil-Petersson volumes.
		\newblock Ann. Inst. Fourier (Grenoble),~{\bf 50} (2000), 519--535.
		
		\bibitem{MMS}
		Mironov, A., Morozov, A., Semenoff, G. W.:
		\newblock Unitary matrix integrals in the framework of the generalized Kontsevich model.
		\newblock Internat. J. Modern Phys. A,~{\bf 11} (1996), 5031--5080.
		
		\bibitem{Mirz}
		Mirzakhani, M.: 
		\newblock Weil--Petersson volumes and intersection theory on the moduli space of curves.
		\newblock J. Am. Math. Soc.,~{\bf 20} (2007), 1--23.
		
		\bibitem{MS}
		 Mulase, M., Safnuk, B.:
		\newblock Mirzakhani's recursion relations, Virasoro constraints and the KdV hierarchy.
\newblock Indian J. Math.,~{\bf 50} (2008), 189--218.
		
\bibitem{OP}
Okounkov, A., Pandharipande, R.:
\newblock Gromov--Witten theory, Hurwitz numbers, and matrix models. 
\newblock Proc. Symposia Pure Math.,~{\bf 80}, Part 1 (2009), 325--414.
		
		\bibitem{OS}
		Okuyama, K., Sakai, K.:
		\newblock JT gravity, KdV equations and macroscopic loop operators.
		\newblock J. High Energy Phys.,~{\bf 1} (2020), 1--45.
		
		\bibitem{sato1981soliton}
		Sato, M.:
		\newblock Soliton equation as dynamical systems on a infinite dimensional
		Grassmann manifolds.
		\newblock RIMS Kokyuroku (Kyoto University),~{\bf 432} (1981), 30--46.
		
		\bibitem{Witten}
		Witten, E.:
		\newblock Two-dimensional gravity and intersection theory on moduli space.
		\newblock In {\em Surveys in differential geometry (Cambridge, MA, 1990)},
		Lehigh University, pp.~243--310, 1991.
				
		\bibitem{YZ2023}
		Yang, D., Zagier, D.:
		\newblock Mapping partition functions.
		\newblock arXiv:2308.03568.
		
		\bibitem{YZQ21}
		Yang, D., Zhang, Q.:
		\newblock On the Hodge-BGW correspondence.
		\newblock arXiv:2112.12736.
		
		\bibitem{YZQ23}
		Yang, D., Zhang, Q.:
		\newblock On a new proof of the Okuyama-Sakai conjecture.
		\newblock Rev. Math. Phys.,~{\bf 35} (2023), Paper No. 2350025, 14~pp.
		
		\bibitem{Zhou}
	         Zhou, J.: 
	         \newblock On phase transition of two-dimensional topological gravity.
		\newblock arXiv:2401.02085.
\end{thebibliography}
\end{document}